\let\cref\Cref
\newtheorem{theorem}{Theorem}
\newtheorem{lemma}[theorem]{Lemma}
\newtheorem*{lemma*}{Lemma}
\newtheorem{claim}[theorem]{Claim}
\newtheorem{corollary}[theorem]{Corollary}
\theoremstyle{definition}
\newtheorem{definition}[theorem]{Definition}
\newcommand{\CC}{\mathcal{C}}
\newcommand{\WW}{\mathcal{W}}
\newcommand{\XX}{\mathcal{X}}
\newcommand{\binomial}{\mathrm{Binomial}}
\DeclareMathOperator{\bbE}{\mathbb{E}}
\newcommand{\st}{\ \middle| \ } % the "such that" or "conditional prob" command
\newcommand{\cpa}[1]{\left\{ #1 \right\}} % curly parenthesis
\newcommand{\pa}[1]{\left( #1 \right)} % round brackets
\newcommand{\hmaj}{\(h\)-majority\xspace} %h-majority dynamics
\newcommand{\threemaj}{\(3\)-majority\xspace} %3-majority dynamics
\newcommand{\twochoices}{\(2\)-choices\xspace} %2-choices dynamics
\newcommand{\undecided}{undecided-state\xspace}
\newcommand{\conf}{\mathbf{c}} % configuration of the system
\newcommand{\bias}{\mathrm{B}} % bias of the system
\newcommand{\pr}[1]{\Pr\left(#1\right)}
\newcommand{\Var}{\mathrm{Var}} % Fra: I would add the argument here
\DeclareMathOperator*{\expect}{\mathbb{E}}
\newcommand{\winties}[1]{\WW_{#1,\mathrm{ties}}}
\newcommand{\winstrict}[1]{\WW_{#1,\mathrm{strict}}}
\definecolor{myOrange}{rgb}{0.8,0.4,0}
\title{On the $h$-majority dynamics with many opinions}
\author[1]{Francesco d'Amore}
\author[2]{Niccolò D'Archivio}
\author[3]{George Giakkoupis}
\author[2]{Emanuele Natale}  % same number as Niccolò → shared affiliation
\affil[1]{Gran Sasso Science Institute, Italy}
\affil[2]{INRIA, COATI, Université Côte d'Azur, France}
\affil[3]{INRIA Rennes}
\date{}
\begin{document}

\maketitle

\begin{abstract}
    We present the first upper bound on the convergence time to consensus of the well-known $h$-majority dynamics with $k$ opinions, in the synchronous setting, for $h$ and $k$ that are both non-constant values.
    We suppose that, at the beginning of the process, there is some initial \textit{additive bias} towards some plurality opinion, that is, there is an opinion that is supported by $x$ nodes while any other opinion is supported by strictly fewer nodes.
    We prove that, with high probability, if the bias is $\omega(\sqrt{x})$ and the initial plurality opinion is supported by at least $x = \omega(\log n)$ nodes, then the process converges to plurality consensus in $O(\log n)$ rounds whenever $h = \omega(n \log n / x)$.
    A main corollary is the following: if $k = o(n / \log n)$ and the process starts from an almost-balanced configuration with an initial bias of magnitude $\omega(\sqrt{n/k})$ towards the initial plurality opinion, then any function $h = \omega(k \log n)$ suffices to guarantee convergence to consensus in $O(\log n)$ rounds, with high probability.
    Our upper bound shows that the lower bound of $\Omega(k / h^2)$ rounds to reach consensus given by Becchetti et al.\ (2017) cannot be pushed further than $\widetilde{\Omega}(k / h)$.
    Moreover, the bias we require is asymptotically smaller than the $\Omega(\sqrt{n\log n})$ bias that guarantees plurality consensus in the $3$-majority dynamics: in our case, the required bias is at most any (arbitrarily small) function in $\omega(\sqrt{x})$ for any value of $k \ge 2$.
\end{abstract}

\section{Introduction}

In this paper, we study the convergence time of the well-known \hmaj dynamics.
In brief, the setting is the following:
We are given a network of \(n\) agents that interact with each other in synchronous rounds.
At round 0, each agent supports one out of \(k\) possible opinions in the set \([k]\):  
At round \(i > 0\), each agent \(v\) samples uniformly at random \(h\) neighbors \(u_1, \ldots, u_h\) (with repetition) which, in turn, send their opinions \(x_1, \ldots, x_h\) to \(v\).
Then, \(v\) adopts as its own opinion the mode of \(\{x_1, \ldots, x_h\}\), breaking ties uniformly at random.
Let \(\conf_t(i)\) count the number of agents supporting opinion \(i\) after round number \(t\).
The \hmaj dynamics gives rise to a random process on the system graph, which we call the \hmaj process.
The question is to determine the number of communication rounds that the process needs to reach \textit{consensus}, i.e., a configuration where all agents agree on some opinion \(i \in [k]\).
Furthermore, we are interested in \textit{plurality consensus}:
Suppose that at the beginning of the process there is a \textit{plurality opinion}, that is, an opinion \(i^\star\) such that \(\conf_0(i^\star) > \conf_0(j)\) for all \(j \neq i^\star\).
Plurality consensus is consensus on the opinion \(i^\star\), and one wishes to quantify the size of the initial bias that guarantees to obtain plurality consensus with high probability.\footnote{The expression \textit{with high probability} (in short, w.h.p.) refers to an event that holds with probability at least \(1 - n^{-c}\), where \(c > 0\) is any constant that is independent of the size \(n\) of the distributed system.}

\subsection{Our contribution}

We give the first non-trivial upper bound to the consensus time of the \hmaj process with \(k\) opinions for non-constant \(h\) and \(k\).
Let us denote the difference \(\max_{i} \min_{j \neq i} \{\conf_t(i) - \conf_t(j)\} \) by \(\bias_t\) and refer to it as the \textit{bias} of the configuration \(\conf_t\) at time \(t\).
In a nutshell, our main result is the following.

\begin{theorem}\label{thm:intro:main}
    Consider the \hmaj dynamics with \(k\) opinions in the complete graph of \(n\) nodes with self-loops \footnote{Assuming self-loops is standard in the literature, as it simplifies the analysis without affecting the possible outcomes.}.
    Let \(\lambda_1, \lambda_2, \lambda_3\) be large enough positive constants.
    Assume that opinion 1 is the initial plurality opinion with \(\bias_0 \ge \lambda_1 \sqrt{\conf_0(1)}\) and \(\conf_0(1) \ge \lambda_2 \log n\).
    If \(h \ge \lambda_3 n \log n / \conf_0(1)\), then, with high probability, the \hmaj process reaches plurality consensus within \(O(\log n)\) rounds.
\end{theorem}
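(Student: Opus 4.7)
Because the graph is complete and has self-loops, at every round $t$ each node independently draws $h$ i.i.d.\ samples from the distribution $\mathbf{p}_t=\conf_t/n$ and adopts their mode. Hence, conditional on $\conf_t$, the vector $\conf_{t+1}$ is distributed as $\mathrm{Multinomial}(n,\mathbf{q}(t))$, where $q_i(t)=\pr{\mode(X_1,\dots,X_h)=i}$ and the $X_s$'s are i.i.d.\ draws from $\mathbf{p}_t$ (with ties broken uniformly). The plan is to use this representation to reduce the theorem to a one-step drift analysis of the bias $\bias_t$, and then iterate for $O(\log n)$ rounds.

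The heart of the argument is a \emph{bias-amplification lemma}: there is an absolute constant $\alpha>0$ such that, under the hypotheses, for every opinion $j\neq 1$,
\[
    q_1(t)-q_j(t)\ \geq\ (1+\alpha)\bigl(p_1(t)-p_j(t)\bigr).
\]
The scaling $h\cdot\conf_t(1)/n\geq\lambda_3\log n$ makes $X_1$ tightly concentrated around $h p_1$ by Bernstein's inequality, while $p_1-p_j\geq \bias_t/n\geq \lambda_1\sqrt{\conf_t(1)}/n$ translates, under the lower bound on $h$, into a standardised gap of order $\myOmega{\sqrt{\log n}}$ between $X_1$ and $X_j$. I would prove the inequality either by a direct Berry--Esseen / Gaussian-tail comparison of $X_1-X_j$, or by an elementary coupling that rewrites an $X_j$-sample as an $X_1$-sample and tracks how often such a swap changes the mode. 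Taking expectations and multiplying by $n$ gives $\expect[\conf_{t+1}(1)-\conf_{t+1}(j)\mid \conf_t]\geq (1+\alpha)(\conf_t(1)-\conf_t(j))\geq (1+\alpha)\,\bias_t$ for every $j\neq 1$.

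The second step is concentration. Each $\conf_{t+1}(i)$ is a sum of $n$ independent Bernoulli indicators, so Bernstein yields $|\conf_{t+1}(i)-n q_i(t)|=\myO{\sqrt{n q_i(t)\log n}}$ w.h.p.\ for each $i$. Choosing $\lambda_1$ large enough, these fluctuations are dominated by the drift: by a union bound over the at most $k\leq n$ opinions, w.h.p.\ opinion $1$ remains the plurality, $\bias_{t+1}\geq (1+\alpha/2)\,\bias_t$, $\conf_{t+1}(1)\geq \conf_t(1)$, and the invariants $\bias_t\geq \lambda_1\sqrt{\conf_t(1)}$ and $h\geq \lambda_3 n\log n/\conf_t(1)$ are preserved (both only get easier as $\conf_t(1)$ and $\bias_t$ grow). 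Iterating, a union bound over $O(\log n)$ rounds shows that $\bias_t$ reaches $\myOmega{n}$ in $O(\log n)$ rounds, after which a short endgame (opinion $1$ holds a constant fraction of the nodes, so each node adopts opinion $1$ with probability tending to $1$) drives the remaining masses to $0$ in $O(\log n)$ additional rounds via Chernoff.

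\paragraph{Main obstacle.} The crux is the bias-amplification lemma when $k$ is large and $\conf_t(1)$ is small. In that regime $p_1$ itself is close to $1/k$, the $q_i$'s are all close to $1/k$, and the per-pair standardised gap between $X_1$ and $X_j$ is only $\Theta(\sqrt{\log n/\conf_t(1)})$; proving that the multiplicative amplification factor is strictly greater than $1$ uniformly in the admissible ranges of $\conf_t(1)$, $h$, and $k$ is delicate and will probably require splitting into regimes (e.g.\ $\conf_t(1)=\Theta(\log n)$, an intermediate regime, and $\conf_t(1)\geq n/2$), with a careful tail estimate of the multinomial mode in the intermediate case.
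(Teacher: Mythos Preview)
Your overall architecture (one-step drift on the bias, Bernstein concentration, invariant maintenance, $O(\log n)$ iterations) matches the paper's, but the bias-amplification step has a genuine gap, in two respects. First, the methods you propose do not yield $q_1-q_j\geq (1+\alpha)(p_1-p_j)$: a Berry--Esseen estimate on $X_1-X_j$ controls $\Pr(X_1>X_j)-\Pr(X_j>X_1)$, but $q_1-q_j=\Pr(\WW_1)-\Pr(\WW_j)$ concerns the \emph{mode among all $k$ coordinates}, and there is no direct passage from the pairwise quantity to the $k$-wise one (indeed $\Pr(X_1>X_j)$ can be close to $1$ while $q_1$ is close to $1/k$). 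Your claim that the standardised gap is $\Omega(\sqrt{\log n})$ is also wrong: under the minimal hypotheses it is $\Theta\!\bigl(\sqrt{\log n/\conf_t(1)}\bigr)$, as you yourself compute later, hence $o(1)$ whenever $\conf_t(1)=\omega(\log n)$; in that regime $\expect[X_2]=\expect[X_1]-o(\sqrt{hp_1})$ and no concentration argument separates $X_1$ from $X_2$, which the paper explicitly flags as the non-trivial case. Second, even granting a constant amplification $(1+\alpha)$, it would not suffice: the one-round fluctuation of $\conf_{t+1}(1)-\conf_{t+1}(j)$ has variance $\Theta(nq_1)\geq\Theta(\conf_t(1))$, so a w.h.p.\ bound via Bernstein needs drift $\Omega(\sqrt{\conf_t(1)\log n})$, whereas $\alpha\,\bias_t$ at the minimal $\bias_t=\lambda_1\sqrt{\conf_t(1)}$ is only $\Theta(\sqrt{\conf_t(1)})$. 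You need amplification by $\Theta(\sqrt{\log n})$, not a constant, and ``choosing $\lambda_1$ large enough'' cannot manufacture a $\sqrt{\log n}$ factor.

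The paper obtains exactly this $\Theta(\sqrt{\log n})$ amplification by a route quite different from regime-splitting plus tail bounds. It conditions on the event $\winstrict{1,2}$ that opinion $1$ or $2$ is the \emph{strict} mode; under this conditioning $\Pr(\WW_1)-\Pr(\WW_2)$ collapses to $\Pr(X_1>X_2)-\Pr(X_2>X_1)$, and after further conditioning on $X_1+X_2=m$ one is left with a genuine two-opinion problem to which an anti-concentration lemma of Fraigniaud and Natale applies. Removing the conditioning requires the second non-trivial ingredient: proving $\Pr(\winstrict{1,2})\geq c(p_1+p_2)$, i.e.\ that ties among the maxima are rare when $hp_1\gtrsim\log n$, which is done via an injective map from tie configurations to strict-win configurations that preserves probability up to constants. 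The two pieces combine to give $q_1-q_2\geq c'\sqrt{h(p_1+p_2)}\,(p_1-p_2)\geq c''\sqrt{\log n}\,(p_1-p_2)$. Neither the conditional reduction to two opinions nor the tie-control argument appears in your sketch.
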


Let us discuss \cref{thm:intro:main} a bit more in details. 
First, notice that it always holds \(\conf_0(1) \ge n/k\) since \(k \ge n / \conf_0(1)\).
In the following, we say that a configuration is almost-balanced if there exists a subset \(I \subseteq [k]\), with \(\abs{I} = O(1)\), such that \(\conf_0(i) = \Theta(\conf_0(1))\) for all \(i \in [k]\setminus I \), assuming that opinion \(1\) is the plurality opinion.
Notice that, in this case, it must hold that \(\conf_0(1) = \Theta(n/k)\) and \(\conf_0(i) = \Theta(n/k)\) for all \(i \in [k]\setminus I\).

\subsubsection*{\texorpdfstring{Case 1: \(k = O(n/ \log n)\).}{Case 1}}
Here we have two sub-cases depending on how we fix \(\conf_0(1)\).
In the first sub-case, we fix \(\conf_0(1) = \Theta(n/k)\) such that \(\conf_0(1) \ge \lambda_2 \log n\) (which is always possible).
Here, we must select some large enough \(h = \Theta(n \log n / c_0(1)) = \Theta(k \log n)\) in order to apply \cref{thm:intro:main}, with the required bias that is minimum, that is, \(\bias_0 = \Theta(\sqrt{n / k})\): observe that the configuration is necessarily almost-balanced.
If, instead, \(\conf_0(1) = \omega(n/k)\), it is guaranteed that \(\conf_0(1) = \omega(\log n)\).
In this case, the configuration \textit{cannot} be almost-balanced because the required bias is \(\bias_0 = \Theta(\sqrt{\conf_0(1)}) = \omega(\sqrt{n/k})\).
Here, we can pick \(h = \Theta(n \log n / \conf_0(1)) = o( k \log n)\).

\subsubsection*{\texorpdfstring{Case 2: \(k = \omega(n/ \log n)\).}{Case 2}}
Since \(k = \omega(n / \log n)\) and \(\conf_0(1) = \Omega(\log n)\) by hypothesis, we \textit{cannot} have almost balanced configurations: we can always choose some \(h = o(k \log n)\) and we will have that the minimum bias is \(\bias_0 = \omega(\sqrt{n/k})\). Note that in this setting, $h$ may be larger than $n$, which we do not exclude as the samples can contain repetitions.

\subsubsection*{Observations.}

Despite the assumption \(h \ge \lambda_3 n \log n / \conf_0(1)\) is undoubtedly a strong one (that is, \(h\) needs to be at least roughly \(k \log n\) for almost-balanced configurations), in \cref{sec:intro:sketch} we show how the analysis for such regime of \(h\) is already quite involved and we discuss what the main difficulties to reduce such assumption are.
Also, in \cref{sec:intro:sketch} we will highlight why we believe that \(h \sim n/\conf_0(1)\) is a ``threshold'' value, since the two settings \(h \ll n/\conf_0(1) \) and \(h \gg n/\conf_0(1) \) seem to need complementary approaches, which is also why our analysis cannot be adapted to the case \(h \ll n/\conf_0(1) \).
However, since \(h\) is a ``static'' parameter of the \hmaj dynamics, in order to approach the case \(h < \lambda_3 n \log n /\conf_0(1) \), one has to show only that there is a time \(t >0\) such that \(\conf_t(1) \ge \lambda_3 n \log n / h\) and that opinion \(1\) is still the plurality opinion with \(B_t\) that meets the hypothesis of \cref{thm:intro:main}, which implies that our analysis applies.

Prior to this work, the only result for the \hmaj dynamics with non-constant \(h\) and \(k\) was a lower bound of \(\Omega(k/h^2)\) rounds to reach consensus given by \cite{becchettiSimpleDynamicsPlurality2017} that holds with high probability whenever \(\max_{i \in [k]}\{\conf_0(i)\} \le 3n/(2k)\).
Hence, \cref{thm:intro:main} shows that the lower bound cannot be pushed to \(\widetilde{\Omega}(k/h)\) (where the notation \(\widetilde{\Omega}\) hides \(\text{polylog}(n)\) factors), at least as long as \(\max_{i \in [k]}\{\conf_0(i)\} \le 3n/(2k)\) is satisfied, e.g., when \((n - \bias_0)/(\conf_0(1) - \bias_0) \le k \le 3n/(2\conf_0(1))\), which is always realized when \(\max\{(n + (k-1)\bias_0)/k, \lambda_2 \log n\} \le \conf_0(1) \le 3n/(2k)\) and \(k \le 3n/(2\lambda_2 \log n)\) for any value of \(\bias_0\) meeting the hypothesis of \cref{thm:intro:main}.

Furthermore, as we will discuss in the related works, in the context of opinion dynamics, the minimum required bias in any other opinion dynamics is always asymptotically larger than our minimum required bias.
Interestingly, the \hmaj dynamics can amplify a very small bias even when \(h = O(k \log n)\) (this is easy to prove for very large \(h\) as we argue in \cref{sec:intro:sketch}, but it is not at all trivial for our parameters).

A final remark that we stress is about the message complexity. 
According to \cref{thm:intro:main},
the message complexity required to reach consensus is \(O(h n \log n)\), which can be as high as \(O(k n \log^2 n ) \) in almost-balanced configurations. 
By recent work on the \threemaj dynamics (see \cref{sec:related-works}), the message complexity of the \threemaj dynamics is \(O(k n \log n)\).
We believe that the gap between these two quantities is due to the fact that the analysis of the \hmaj dynamics is not yet tight.

\subsection{Previous results}\label{sec:related-works}

There is a vast body of research that investigated the \hmaj dynamics when either \(h\) or \(k\) are constant values, both in the synchronous and in the asynchronous settings \cite{becchettiSimpleDynamicsPlurality2017,BerenbrinkCEKMN17,BerenbrinkCGHKR22,GhaffariL18,CooperMRSS25,Shimizu2025}.
Before exploring the results in the two settings, let us discuss the work \cite{BerenbrinkCGHKR22}.
The authors compared the power of the \hmaj dynamics and that of the \((h+1)\)-majority dynamics with \(k = 2\) opinions.
They proved that, starting from the same configuration, the consensus time \(T_{h+1}\) of the \((h+1)\)-majority dynamics is stochastically dominated by the consensus time \(T_h\) of the \hmaj dynamics in the following sense: for each \(t > 0\), \(\Pr(T_{h+1} \le t) \ge \Pr(T_h \le t)\) in both the synchronous and the asynchronous settings.
However, to date no such extension to the case of \(k > 2\) opinions is known.

\subsubsection*{Synchronous setting.}
Let us specify that all the results we mention in this subsection and the next two ones hold in the complete graph of \(n\) nodes with self-loops.
The work \cite{BecchettiCNPT16} proved an upper bound of \(O\pa{(k^2 \sqrt{\log n} + k \log n)(k + \log n)}\) rounds to reach consensus that holds w.h.p., provided that \(k \le n^{\alpha}\) for a suitable positive constant \(\alpha < 1\).

In \cite{becchettiSimpleDynamicsPlurality2017}, the authors showed that the synchronous \threemaj dynamics  with \(k\) opinions converges in \(O(\min\{k, (n/\log n)^{\frac 13}\} \log n)\) with high probability, provided that the bias of the initial configuration is at least \(c \sqrt{\min\{2k, (n/\log n)^{\frac 13}\} n \log n}\) for some constant \(c > 0\).
Moreover, the authors provided a lower bound of \(\Omega(k\log n)\) on the convergence time to consensus, w.h.p., whenever the initial configuration is sufficiently balanced, that is, \(\max_{i \in [k]}\{\conf_0(i)\} \le n/k + (n/k)^{1 - \varepsilon}\) for some \(\varepsilon > 0\) and \(k \le (n/\log n)^{1/4}\).

The work \cite{BerenbrinkCEKMN17} compared the synchronous \threemaj dynamics with the synchronous \twochoices dynamics.
The \twochoices dynamics works as follows: 
At each round, each agent picks two neighbors (with repetition) u.a.r. 
If the two sampled neighbors support the same opinion, the agent adopts that opinion.
Otherwise, the agent keeps its own opinion.
The authors first proved a generic lower bound of \(\Omega(\min\{k, n/\log n\})\) rounds to reach consensus starting from the initial perfectly balanced configuration, w.h.p.
Furthermore, they proved that the \threemaj  dynamics works better in symmetric configurations (i.e., with no initial bias) when, e.g., \(\max_{i \in [k]}\{\conf_0(i)\} = O(\log n)\).
In particular, w.h.p., the \threemaj takes time at most \(O(n^{3/4} \log^{7/8} n)\) to reach consensus w.h.p., regardless of any other hypothesis on the initial configuration, while the \twochoices needs time \(\Omega(n / \log n)\) whenever \(\max_{i \in [k]}\{\conf_0(i)\} = O(\log n)\). 
This was the first work to notice that, for a large number of opinions, the \threemaj dynamics is polynomially (in \(k\)) faster than the \twochoices dynamics.

The work \cite{GhaffariL18} improved upon \cite{becchettiSimpleDynamicsPlurality2017} and showed that, for the \twochoices with \(k = O(\sqrt{n / \log n})\) and for the \threemaj with \(k = O(n^{1/3}/\sqrt{\log n})\), the convergence time to consensus is \(O(k \log n)\), with high probability.
Notice that this upper bound is tight according to the lower bound by \cite{becchettiSimpleDynamicsPlurality2017}, at least as long as \(k \le (n/\log n)^{1/4}\).
Furthermore, the authors showed that the convergence time on the \threemaj dynamics is \(O(n^{2/3} \log^{3/2} n)\) with high probability, regardless of the number of opinions.

Very recently, \cite{Shimizu2025} settled almost tightly the complexity of both the \threemaj and the \twochoices dynamics.
The authors proved that, w.h.p., the \threemaj dynamics reaches consensus in \(O(k\log n)\) rounds whenever \(k = o(\sqrt{n}/\log n)\), while it takes time \(O(\sqrt{n} \log^2 n )\) for other values of \(k\).
Furthermore, \cite{Shimizu2025} proved that plurality consensus is ensured w.h.p.\ as long as the initial bias is \(\bias_0 = \omega(\sqrt{n \log n})\).
As for the \twochoices, they showed that, w.h.p., the dynamics reaches consensus in \(O(k\log n)\) rounds whenever \(k = o(n/\log^2 n)\), while it takes time \(O(n \log^3 n)\) otherwise.
In this case, plurality consensus is ensured w.h.p.\ as long as the initial bias is \(\bias_0 = \omega(\sqrt{\conf_0(1) \log n})\).\footnote{This is the only example of initial bias that gets ``close enough'' to what we require in \cref{thm:intro:main}, but with an exceeding \(\sqrt{\log n}\) factor.}
These results almost match the generic lower bound given by \cite{BerenbrinkCEKMN17}, up to logarithmic factors.

\subsubsection*{Asynchronous setting.}
The only works that analyzed the \threemaj dynamics in the asynchronous setting are \cite{CooperMRSS25,BerenbrinkCGHKR22}. 
In \cite{BerenbrinkCGHKR22}, the authors consider the binary opinion case and show that the convergence time of the asynchronous \threemaj dynamics is \(O(n \log n)\) rounds, w.h.p., and that a bias of \(\Theta(\sqrt{n \log n})\) is sufficient to ensure plurality consensus, w.h.p.
The authors of \cite{CooperMRSS25} showed that the convergence time is \(O(\min\{kn\log^2 n, n^{3/2} \log^{3/2} n\})\), w.h.p., no matter the number of initial opinions. 
They also provided a generic lower bound of \(\Omega(\min\{kn, n^{3/2}/\sqrt{\log n}\})\) rounds to reach consensus (starting from balanced configurations), w.h.p.
The work \cite{CooperMRSS25} (which came before \cite{Shimizu2025}) was the first to establish exactly how the linear-in-\(k\) dependence in the consensus time of the \threemaj dynamics breaks when the number of opinions grows over \(\sqrt{n}\), in which case the consensus time is sublinear in the number of opinions. 
The reader may observe that the asynchronous setting has exactly the same convergence time as the synchronous model, but for a multiplicative factor \(n\).
This is to be expected: 
In the asynchronous setting, in a round only one agent (sampled u.a.r.) updates its state, and hence we need roughly \(n\) rounds to activate all agents at least once (up to polylogarithmic factors).
For processes with small variance (smaller w.r.t.\ the voter model), convergence times from the synchronous to the asynchronous setting usually scale with such a factor.
Note that this is not true for processes with large variance \cite{BecchettiCPTVZ24}.

\subsubsection*{On the \hmaj dynamics.}
As previously discussed, \cite{BerenbrinkCGHKR22} established the ``hierarchy'' of the \(h\)-majority dynamics when the number of opinions is \(k=2\), by showing that the \((h+1)\)-majority is ``faster'' than the \hmaj (both in the synchronous and asynchronou settings),
and  \cite{becchettiSimpleDynamicsPlurality2017} exhibited the lower bound of \(\Omega(k/h^2)\) rounds to reach consensus (that holds w.h.p.).
Apart for the two aforementioned works, there is no theoretical result on the \hmaj dynamics for \(h \gg 1\), which makes it one of the open problems in consensus dynamics.
\cite{Shimizu2025} put the \hmaj dynamics in the ``Open Question'' section of their recent work, and wondered whether the techniques developed in \cite{Shimizu2025} applied to the \hmaj dynamics as well.
We do not know the answer to this question: 
As we will discuss in \cref{sec:intro:sketch}, in the \hmaj the main difficulty is to compute the expectation of \(\conf_{t+1}(i)\) conditional on the whole configuration \(\conf_t\) at the previous rounds, while this is straightforward when \(h = 3\) and is at the core of every paper analyzing the \threemaj dynamics \cite{BecchettiCNPT16,becchettiSimpleDynamicsPlurality2017,GhaffariL18,Shimizu2025,CooperMRSS25}.
More in general, there is no non-trivial upper bound to the consensus time of the \hmaj dynamics with \(k \gg 1\) opinions.
In this paper, we take a first step into the analysis of the \hmaj dynamics.
As we will show, our main effort basically is providing a lower bound to \(\expect[\conf_{t+1}(1) - \conf_{t+1}(2) \mid \conf_t]\), supposing that opinion 1 is the plurality opinion at round \(t\), and opinion \(2\) is the opinion supported by the second-largest community.

\subsubsection*{Other related works.}
The \threemaj, the \hmaj, and other majority-based dynamics have been investigated also in other settings.
For example, in presence of communication noise (which corrupts the exchanged messages) or in presence of stubborn agents \cite{DAmoreZ22,dAmoreZ25}, or when some opinion is preferred among the others, that is, there is some probability that an agent, instead of running the protocol, spontaneously switches to the preferred opinion with some fixed probability \cite{CrucianiMQR23,LesfariGP22}.

Other opinion dynamics for the (plurality) consensus problem have been investigated both in the synchronous and asynchronous settings.
One of the most prominent is the \undecided dynamics.
In the \undecided dynamics there is an extra opinion, the 
\textit{undecided} opinion.
The update rule works as follows:
A node samples one neighbor u.a.r.\ and pulls its opinion.
If the received opinion is different from the supported one, the node becomes undecided.
If the node is undecided, it just copies whatever opinion receives.
The \undecided dynamics has been studied both in the synchronous setting and in the population protocol model (asynchronous setting) \cite{AngluinAE08,ClementiGGNPS18,BecchettiCNPS15,DAmoreCN20,DAmoreCN22,AmirABBHKL23,berenbrink2024,BankhamerBBEHKK22} and performs roughly the same as the \threemaj dynamics for a small number of opinions (while its analysis for the unconstrained general case is still open).
More in general, work on opinion dynamics fits into the recent trend in the distributed computing community of drawing inspiration from social and biological systems.
Other then the consensus problem, researchers have analyzed distributed search problems but also more generic problems (MIS, leader election, etc.) in extremely weak computational models (such as the beeping model or the stone-age model) \cite{KormanV23,FeinermanK17,ClementiDGN21,GiakkoupisZ23,GiakkoupisTZ24,DArchivioV24,FuggerNR24,vacus-ziccardi2025,FraigniaudKR16,FeinermanHK14}.

A special mention goes to \cite{FraigniaudN19}.
Here, the authors considered noisy distributed systems, i.e., systems in which exchanged messages can be corrupted and changed with some positive probability, a setting that takes inspiration by biological societies, where environmental factors can disturb communication.
They characterized the type of noise for which the tasks of \textit{information spreading} and plurality consensus can be solved, when the number of opinions is constant.\footnote{Information spreading is defined as follows: 
In the distributed system there is only one node that is informed about one out \(k\) opinions, while all other nodes are not informed. 
The task is to design a protocol that informs all nodes as fast as possible.}
The protocol designed by \cite{FraigniaudN19} to solve plurality consensus relies on some majority-based rule, and proves a technical result that is at the heart of our analysis. 
We will discuss this in more details in \cref{sec:intro:sketch}.
We nevertheless stress the fact that \cite{FraigniaudN19} only considers the case where the number of opinions \(k\) is a constant.

\section{Preliminaries}

We work on a complete graph of \(n\) nodes with self-loops.
Initially, each node supports one out of \(k\) opinions in the set \([k]\).
Time is synchronous and controlled by some global clock.
At each round, in parallel, every node \(v\) samples \(h\) neighbors u.a.r.\ with repetition and pulls their opinions. 
Then, \(v\) adopts the unique majority opinion, if any, breaking ties u.a.r.\ among the opinions that are in the majority.
For each opinion \(i \in [k]\), we denote by \(\conf_t(i)\) the number of nodes supporting opinion \(i\) at time after performing the update-rule at time \(t\).
Let \(\conf_t = (\conf_t(1), \ldots, \conf_t(k))\) denote the \textit{configuration} of the system at time \(t\).
Notice that the random variable \(\conf_t\) defines a Markov chain: for any sequence of configurations \(y_1, \ldots, y_t \in [n]^k\), it holds that \(\Pr(\conf_t = y_t \mid \cap_{i = 1}^{t - 1} \{\conf_i = y_i\}) = \Pr(\conf_t = y_t \mid \conf_{t-1} = y_{t-1})\) .

Without loss of generality, we assume that at the beginning of the process we have \(\conf_0(1) \ge \conf_0(2) \ge \ldots \conf_0(k)\).
At any time \(t\), we call \textit{plurality opinion} at time \(t\) the opinion \(i\) such that \(\conf_t(i) \ge \conf_t(j)\) for all \(j \in [k]\). 
At any time \(t\), the \textit{bias} of a configuration \(\conf_t\) is the quantity \(\bias_t = \max_{i \in [k]} \min_{j \neq i} \{\conf_t(i) - \conf_t(j)\}\), that is, of how many nodes the community supporting the plurality opinion exceeds the community supporting the second largest one.
Note that \(\bias_t = 0\) if there are multiple plurality opinions, while \(\bias_t > 0\) the plurality opinion is unique.

We work from the perspective of a single node.
When a node \(v\) samples \(h\) neighbors u.a.r., we are drawing from a multinomial distribution.
In particular, given a configuration \(\conf_t\) at time \(t\) and a node \(v\), let \(\mathbf{X}^{(t)}(v) = (X_1^{(t)}(v), \ldots, X_k^{(t)}(v) )\) be a multinomial random variable where \(X_i^{(t)}(v) \sim \text{Bin}(n,p_i^{(t)})\), with \(p_i^{(t)} = \conf_t(i)/n\), counts the number of neighbors supporting opinion \(i\) that \(v\) samples at time \(t\): it must hold that \(\sum_{i = 1}^k X_i^{(t)}(v) = n\), so the variables are negatively correlated.
Let us denote the event that \(v\) supports opinion \(i\) at the end of round \(t\) by \(\WW_i^{(t)}(v)\).

For the rest of the paper, we will use only \(p_1^{(t)}, \ldots, p_k^{(t)}\), since these are the probabilities defining the multinomial distribution of interest.
We define the \textit{normalized bias} to be \(\delta_t = \max_{i \in [k]} \min_{j \neq i} \{p_i^{(t)} - p_j^{(t)}\} \), which is equal to \(\bias_t /n\).
When the round we are referring to is clear from the context, we only write \(p_1, \ldots, p_k\), \(\delta\), \(\mathbf{X}(v) = (X_1(v), \ldots, X_k(v) )\), and \(\WW_i(v)\), omitting the dependency on \(t\), to refer to \(p_1^{(t)}, \ldots, p_k^{(t)}\), \(\delta_t\), \(\mathbf{X}^{(t)}(v) = (X_1^{(t)}(v), \ldots, X_k^{(t)}(v) )\), and \(\WW_i^{(t)}(v)\).
Also, since in any given round \(t\) the random variables \(\{\mathbf{X}(v)\}_{v \in V}\) are i.i.d., and the events \(\{\WW_i(v)\}_{v \in V}\) are mutually independent, when analyzing the distribution of some \(\mathbf{X}(v)\) or the probabilities of some \(\WW_i(v)\), we omit the dependency on \(v\) and only write \(\mathbf{X} = (X_1, \ldots, X_k)\) and \(\WW_i\).

From now on, we will refer to the normalized bias \(\delta_t\) simply as the \textit{bias} of the configuration, as we will not make use of \(\bias_t\) anymore.
With this notation, our main theorem can be rewritten as follows:
\begin{theorem}\label{thm:prelim:main}
    Let \(\lambda_1, \lambda_2, \lambda_3\) be large enough positive constants.
    Assume that opinion 1 is the initial plurality opinion with \(\delta_0 \ge \lambda_1 \sqrt{p_1/n}\) and \(p_1 \ge \lambda_2 \log n / n\).
    If \(h p_1 \ge \lambda_3 \log n\), then, with high probability, the \hmaj process reaches plurality consensus within \(O(\log n)\) rounds.
\end{theorem}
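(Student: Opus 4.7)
The proof proceeds by establishing a per-round drift lemma, combined with Chernoff concentration, so that the bias \(\delta_t\) grows geometrically until reaching \(\Omega(1)\); a small number of additional rounds then drive the process to plurality consensus. A union bound over the \(O(\log n)\) rounds controls the overall failure probability. The key quantity to control is \(\pr{\WW_1} - \pr{\WW_j}\) for each \(j \neq 1\), since \(\expect[\conf_{t+1}(1) - \conf_{t+1}(j) \mid \conf_t] = n(\pr{\WW_1} - \pr{\WW_j})\).

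To estimate \(\pr{\WW_1} - \pr{\WW_j}\) I would use the following pairing identity. Let \(\mathbf{X} = (X_1, \ldots, X_k)\) be the multinomial sample of a node, and let \(f_i(\mathbf{x})\) denote the conditional probability that the node adopts opinion \(i\) given \(\mathbf{X} = \mathbf{x}\) (as the mode, with uniform tie-breaking). For the involution \(\sigma_j\) that swaps coordinates \(1\) and \(j\), we have \(f_1(\sigma_j \mathbf{x}) = f_j(\mathbf{x})\) and \(P(\mathbf{x})/P(\sigma_j \mathbf{x}) = (p_1/p_j)^{x_1 - x_j}\), so pairing \(\mathbf{x}\) with \(\sigma_j \mathbf{x}\) yields
\[
\pr{\WW_1} - \pr{\WW_j} \;=\; \sum_{\mathbf{x}:\, x_1 > x_j} \bigl[P(\mathbf{x}) - P(\sigma_j \mathbf{x})\bigr]\bigl[f_1(\mathbf{x}) - f_j(\mathbf{x})\bigr],
\]
in which every term is nonnegative. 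The hypothesis \(hp_1 \ge \lambda_3 \log n\) makes the multinomial concentrate via Chernoff around \((hp_1, \ldots, hp_k)\) with additive fluctuations \(O(\sqrt{hp_1 \log n})\); on this ``typical'' band both \(P(\mathbf{x}) - P(\sigma_j \mathbf{x})\) and \(f_1(\mathbf{x}) - f_j(\mathbf{x})\) admit explicit lower bounds in terms of \(\delta\), \(h\), and \(p_1\). The Fraigniaud--Natale technical lemma is invoked to certify that, with constant probability on the typical band, opinion \(1\) is the unique mode (so that \(f_1(\mathbf{x}) - f_j(\mathbf{x})\) is close to \(1\)), which handles the delicate regime where \(hp_1\) is only logarithmic in \(n\). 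The resulting estimate is of the form \(\pr{\WW_1} - \pr{\WW_j} \ge (1+\alpha)(p_1 - p_j)\) for some constant \(\alpha > 0\), uniformly in \(j \neq 1\).

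Given the single-round drift, concentration is standard. Conditional on \(\conf_t\), the updates \(\bbOne_{\WW_i(v)}\) at distinct nodes are independent, so Chernoff bounds yield \(|\conf_{t+1}(i) - \expect[\conf_{t+1}(i) \mid \conf_t]| = O(\sqrt{n\,\pr{\WW_i}\log n})\) with high probability. The hypothesis \(\delta_t \ge \lambda_1 \sqrt{p_1^{(t)}/n}\), together with a sufficiently strong drift estimate, ensures that this fluctuation is dominated by the expected amplification \(\alpha \cdot n \delta_t\), so that \(\delta_{t+1} \ge (1 + \alpha/2)\,\delta_t\) w.h.p. The preservation of the hypotheses across rounds---that is, \(\delta_t \ge \lambda_1 \sqrt{p_1^{(t)}/n}\) and \(hp_1^{(t)} \ge \lambda_3 \log n\)---follows because \(p_1^{(t)}\) is non-decreasing in expectation and \(\delta_t\) grows geometrically, so both quantities only improve. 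Iterating for \(O(\log n)\) rounds brings \(\delta_t\) up to \(\Omega(1)\); a final \(O(1)\) rounds, during which \(\pr{\WW_1}\) becomes very close to \(1\), finish consensus.

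The main technical obstacle is the single-round drift in the borderline regime \(hp_1 = \Theta(\log n)\) and \(\delta = \Theta(\sqrt{p_1/n})\), where the ``signal'' \(h\delta\) per sample can be comparable to (or even smaller than) the Binomial noise \(\sqrt{hp_1}\). Consequently, for most realizations of \(\mathbf{X}\) opinion \(1\) is not the obvious mode, and the difference \(f_1(\mathbf{x}) - f_j(\mathbf{x})\) must be teased out from a delicate balance among the \(k\) competing Binomials---including the contributions from ties, near-modes, and interference from opinions \(\ell \notin \{1, j\}\). This is exactly where the Fraigniaud--Natale ingredient, and the bulk of the analytic effort, comes in.
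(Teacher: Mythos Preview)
Your high-level outline---single-round drift, concentration, geometric growth of the bias, union bound over \(O(\log n)\) rounds---is the same as the paper's, and your pairing identity via the swap \(\sigma_j\) is a correct and elegant decomposition (different from the paper's route, which conditions on \(\winstrict{1,2}\), then on \(X_1+X_2=m\), to reduce to a genuine two-opinion binomial where the Fraigniaud--Natale lemma applies). But the proposal has a quantitative gap at the critical step.

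The drift you claim, \(\pr{\WW_1}-\pr{\WW_j}\ge(1+\alpha)(p_1-p_j)\) with a \emph{constant} \(\alpha\), is too weak at the minimal bias \(\delta=\lambda_1\sqrt{p_1/n}\). Conditional on \(\conf_t\), the per-node variance of the \(\{-1,0,1\}\)-valued indicator for ``adopt \(1\)'' versus ``adopt \(j\)'' is \(\Theta(\pr{\WW_1}+\pr{\WW_j})\ge c\,p_1\), so the standard deviation of \(\delta_{t+1}\) is \(\Theta(\sqrt{p_1/n})\), i.e.\ of the \emph{same order} as \(\delta_t\) itself. A constant-factor expected increment \(\alpha\delta_t\) is then comparable to the noise, and Bernstein's inequality yields an exponent of order \(n(\alpha\delta_t)^2/\pr{\WW_1}=\Theta(\lambda_1^2)\), a constant---not \(\Theta(\log n)\). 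The paper needs, and proves, the stronger drift \(\pr{\WW_1}-\pr{\WW_j}\ge C\,\delta\sqrt{h/(p_1+p_2)}\,\pr{\WW_1}\), which for \(hp_1\ge\lambda_3\log n\) gives an amplification factor \(\Theta(\sqrt{\log n})\); that extra \(\sqrt{\log n}\) is exactly what pushes the Bernstein exponent to \(\Theta(\log n)\) and makes the per-round growth hold with high probability. Your sketch does not explain how the pairing sum yields this sharper factor.

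A second gap is the role you assign to Fraigniaud--Natale. That lemma is strictly a two-opinion estimate on \(\Pr(Y_1>Y_2)-\Pr(Y_2>Y_1)\); it does not certify that ``opinion \(1\) is the unique mode with constant probability'' among \(k\) competing binomials, and indeed \(\pr{\winstrict{1}}\) is only \(\Theta(p_1)\), not \(\Theta(1)\). Bridging from two to \(k\) opinions is precisely the paper's main analytic effort: it shows \(\pr{\winstrict{1}}\ge\tfrac{1}{6}\pr{\winties{1}}\ge\tfrac{1}{18}p_1\) via an injection of tie configurations into strict-win configurations, and only then conditions on \(\winstrict{1,2}\) and on \(X_1+X_2=m\) to land in the two-opinion setting. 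Your pairing approach could plausibly be made to work, but as written you have not shown how to lower-bound \(f_1(\mathbf{x})\) on enough of the mass (handling ties and interference from opinions \(\ell\notin\{1,j\}\)), nor how to extract the needed \(\sqrt{hp_1}\) amplification from \(\sum_{\mathbf{x}:x_1>x_j}[P(\mathbf{x})-P(\sigma_j\mathbf{x})]f_1(\mathbf{x})\).
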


\section{Overview of our analysis and technical challenges}\label{sec:intro:sketch}

The whole analysis is an effort to estimate the quantity \(\Pr(\WW_1) - \Pr(\WW_2)\) from below (assuming that 
\(p_1 \ge p_2 \ge \ldots \ge p_k\)), and use it to get a lower bound on the expect round-by-round growth of the bias \(\delta\).
Notice that, for any given configuration \(\conf_t\) at time \(t\), it holds that
\begin{align}
    \expect[\delta_{t+1} \mid \conf_t] & \ge \expect\left[\max_{i \in [k]} \min_{j \neq i} \left\{\Pr(\WW_i^{(t+1)}) - \Pr(\WW_j^{(t+1)})\right\} \st \conf_t\right] \nonumber \\
    & \ge \expect\left[\min_{j \neq 1} \left\{\Pr(\WW_1^{(t+1)}) - \Pr(\WW_j^{(t+1)})\right\} \st \conf_t\right] \nonumber \\
    & \ge \expect\left[\Pr(\WW_1^{(t+1)}) - \Pr(\WW_2^{(t+1)}) \st \conf_t\right]  . \label{eq:prelim:expected-bias}
\end{align}
Suppose that \(p_1^{(t)} = p_2^{(t)} + \delta_t > p_2^{(t)} \ge \ldots \ge p_k^{(t)}\).
This implies that \(\Pr\mspace{2mu}(\WW_1^{(t+1)} \mid \conf_t) - \Pr\mspace{2mu}(\WW_i^{(t+1)} \mid \conf_t) \ge \Pr\mspace{2mu}(\WW_1^{(t+1)} \mid \conf_t) - \Pr\mspace{2mu}(\WW_2^{(t+1)} \mid \conf_t)\) for all \(i \ge 3\) since \(p_2^{(t)} \ge p_i^{(t)}\) for \(i \ge 3\).
If we proved that \(\Pr\mspace{2mu}(\WW_1^{(t+1)} \mid \conf_t) - \Pr\mspace{2mu}(\WW_2^{(t+1)} \mid \conf_t) \ge \delta_t (1+x)\) for some constant \(x > 0\), then we would get \(\expect[\delta_{t+1} \mid \conf_t] \ge \delta_t(1 + x) \), which allows us to concentrate and establish that \(\delta_{t+1} \ge \delta_t(1 +x/2)\) w.h.p.\ (conditional on \(\conf_t\)), for a suitable choice of \(x > 0\) and for a large enough \(\delta_t\), thanks to the Hoeffding bound (\cref{lemma:app:hoeffding}).

Notice that, in general, the probability of \(\WW_i\) is the probability that \(X_i\) is either the unique maximum or, if it is a maximum and is not unique, it must win the tie broken u.a.r.
In formulas, the probability is
\begin{align}
    \Pr(\WW_i) = \sum_{m = 1}^k \frac{1}{m} \sum_{\substack{1 \le i_1 < \ldots < i_m \le k: \\ i_j \neq i \, \forall j\in[m]}} \Pr(\bigcap_{j\in[m]} \{X_i = X_{i_j}\} \cap (\bigcap_{j \notin \{i,i_1,\ldots,i_m\}} \{X_i > X_j\})). \label{eq:prelim:prob-winning}
\end{align}
Using directly \cref{eq:prelim:prob-winning}  to bound \(\Pr(\WW_1) - \Pr(\WW_2)\) is difficult, and we did not find general estimations in the literature.

\paragraph*{\texorpdfstring{Trivial case: very large \(h\)}{Trivial case: very large h}.}
If, e.g., \(h\) is very large compared to \(\delta\), in one round all nodes would adopt the plurality opinion with high probability.
The reason follows.

Suppose \(p_1 = p_2 + \delta > p_2 \ge \ldots \ge p_k\).
Then, by the multiplicative Chernoff bound (\cref{lemma:app:multiplicative-chernoff} in \cref{sec:tools}) we get that
\[
    \Pr(X_1 \ge hp_1 (1 - x)) \ge 1 - \exp(- \Theta(x^2 hp_1)).
\]
Similarly, 
\[
    \Pr(X_i \le hp_2 (1 + x)) \ge 1 - \exp(- \Theta(x^2 hp_2)).
\]
Suppose \(p_1\) and \(p_2\) are comparable, that is, \(p_1 = \Theta(p_2)\) (anyway the bias we require is always an \(o(p_1)\)).
Then, \(X_1 \ge hp_1 (1 - x)\) and \(X_i \le hp_2 (1 + x)\) for all \(i \ge 2\) hold w.h.p.\ as long as \(h \ge C \log n / (p_1 x^2)\) for a large enough constant \(C > 0\).
%, which is equivalent to \(x \ge \sqrt{C \log n / (h p_1)}\).
If \(hp_1(1-x) > hp_2(1+x)\), then \(\WW_1\) holds w.h.p., that is, a node would adopt opinion 1 in just 1 round w.h.p.
Playing with constants, by the union bound, one would obtain plurality consensus w.h.p.\ in just 1 round.
The condition is satisfied whenever \(\delta > x(p_1 + p_2)\), that is, for 
\(x < \delta / (p_1 + p_2)\).
Since \(h \ge C \log n / (p_1 x^2)\) and \(x < \delta / (p_1 + p_2)\) must hold at the same time, it is sufficient to choose \(h \ge C' p_1 \log n / \delta^2\) for a large enough constant \(C' > C\).
Notice that, since our bias can be such that \(\delta \sim \sqrt{p_1/n}\), we must have that \(hp_1 > (C' / \lambda_1^2)p_1 \cdot n \log n \gg \log n\) for any choice of \(p_1 = \Omega(\log n / n)\).
The minimum that \(h\) can be with this approach is \(\Theta(\log^2 n / p_1)\) and can be as large as  \(h = \Omega(n^{2/3} / p_1)\) if \(p_1 \ge 1/n^{1/3}\).
Our \cref{thm:prelim:main} guarantees us that we can always set \(h = \Theta(\log n / p_1)\) as long as \(p_1 \ge \lambda_2 \log n / n\) for a large enough constant \(\lambda_2 > 0\).
For the minimum bias \(\delta = \lambda_1 \sqrt{p_1/n}\), we have that \(p_2 = p_1 - o(p_1)\).
Hence, \(\expect[X_2] = hp_2 = hp_1 - o(hp_1) = \expect[X_1] - o(\log n)\). 
Unfortunately, we cannot capture such a deviation from the average of $X_1$ using concentration bounds, so we need to adopt a different approach.

\subsection{Our approach.}
\label{sec:our_approach}

At the heart of our analysis, there is a technical lemma that was proved in \cite[Lemma 9]{FraigniaudN19}.

The lemma can be reformulated as follows.
Suppose we are performing the \hmaj when in the system there are only \(k=2\) opinions.
Then, \cite[Lemma 9]{FraigniaudN19} provides a tight lower bound on the expected bias after one round, i.e., on $\pr{\WW_1}-\pr{\WW_2}$.
Notice that, when \(k = 2\), $\pr{\WW_1}-\pr{\WW_2} = \Pr(X_1 > X_2) + \Pr(X_1 = X_2)/2 - \Pr(X_2 > X_1) - \Pr(X_1 = X_2)/2$, which is equal to \(\Pr(X_1 > X_2) - \Pr(X_2 > X_1)\).

\begin{lemma}[Lemma 9 in \cite{FraigniaudN19}]
\label{lemma:from_lemma_9}
    For any integer \(h > 0\), let $X_1 \sim \binomial(h,p)$ and $X_2=h-X_1 \sim \binomial(h,1 - p)$, with \(p > 1/2\). 
    Let $\delta = 2p - 1$ be the bias. 
    Then, we have that
    \begin{align*}
        \Pr\pa{ X_1 > X_2} - \Pr\pa{ X_2 > X_1} \geq \sqrt{\frac{2h}{\pi}} \cdot  g(\delta, h),
    \end{align*}
    where
    \begin{equation*}
        g(\delta, h)=
        \begin{cases}
            \delta \, \pa{1-\delta^2}^{\frac{h-1}{2}} & \text{if } \delta < \frac{1}{\sqrt{h}}, \\
            \frac{1}{\sqrt{h}} \, \pa{1-\frac{1}{h}}^{\frac{h-1}{2}} & \text{if } \delta \geq \frac{1}{\sqrt{h}}.
        \end{cases}
    \end{equation*}
\end{lemma}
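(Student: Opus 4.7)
Since $X_2 = h - X_1$, the events $\{X_1 > X_2\}$ and $\{X_2 > X_1\}$ coincide with $\{X_1 > h/2\}$ and $\{X_1 < h/2\}$ respectively, so with $p = (1+\delta)/2$ the quantity $F(\delta) := \Pr(X_1 > X_2) - \Pr(X_2 > X_1)$ is a polynomial in $\delta$ satisfying $F(0) = 0$. The plan is to derive a closed form for $F'(\delta)$ as an explicit constant $\kappa_h$ times a single power of $1-\delta^2$, integrate, and then use monotonicity to extract both branches of $g(\delta, h)$.

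\emph{Closed form for the derivative.} Expanding
\[
F(\delta) = 2^{-h}\sum_{j>h/2}\binom{h}{j}(1+\delta)^j(1-\delta)^{h-j} - 2^{-h}\sum_{j<h/2}\binom{h}{j}(1+\delta)^j(1-\delta)^{h-j}
\]
and differentiating term by term, the identities $j\binom{h}{j} = h\binom{h-1}{j-1}$ and $(h-j)\binom{h}{j} = h\binom{h-1}{j}$ make the derivative of each of the two sums telescope, leaving only a single central binomial contribution per sum. After simplification one obtains
\[
F'(\delta) = \kappa_h\,(1-\delta^2)^{\lfloor (h-1)/2 \rfloor},
\]
where for odd $h = 2m+1$ the constant is $\kappa_h = (2m+1)\binom{2m}{m}/4^m$. (An equivalent route is to use the regularized-incomplete-Beta representation of $\Pr(X_1 \geq \lceil h/2 \rceil)$ and then apply the change of variables $t = (1+s)/2$.)

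\emph{The two branches.} Integrating from $0$ to $\delta$ and using that the integrand is positive and decreasing on $[0,1]$, its value on $[0,\delta]$ is at least $(1-\delta^2)^{\lfloor(h-1)/2\rfloor}$, hence
\[
F(\delta) \;\geq\; \kappa_h\,\delta\,(1-\delta^2)^{\lfloor(h-1)/2\rfloor},
\]
which matches the first branch. Since the integrand is non-negative, $F$ is non-decreasing in $\delta$, so for $\delta \geq 1/\sqrt{h}$ the same inequality applied at $\delta = 1/\sqrt h$ yields
\[
F(\delta) \;\geq\; F(1/\sqrt h) \;\geq\; \kappa_h\,\tfrac{1}{\sqrt h}\,(1-1/h)^{\lfloor(h-1)/2\rfloor},
\]
matching the second branch.

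\emph{Main obstacle.} The remaining and most delicate step is to verify that $\kappa_h \geq \sqrt{2h/\pi}$. For $h = 2m+1$ this is equivalent to the sharp central-binomial inequality $\binom{2m}{m} \geq 4^m/\sqrt{\pi(m+1/2)}$, whose leading Stirling asymptotic $\binom{2m}{m} \sim 4^m/\sqrt{\pi m}$ is not by itself strong enough; one needs the correct next-order constant in the denominator. This can be established via Wallis's product formula, or by induction starting from the base case $m = 0$, which provides enough slack to absorb the defect in the one-step recursion.
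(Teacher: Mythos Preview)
The paper gives essentially no self-contained argument here: after rewriting $\Pr(X_1>X_2)-\Pr(X_2>X_1)$ as a difference of two binomial tail sums, it simply cites \cite[Lemma~9]{FraigniaudN19} for the remainder. Your route---compute $F'(\delta)=\kappa_h(1-\delta^2)^{\lfloor(h-1)/2\rfloor}$ via the incomplete-Beta representation (or by telescoping), integrate using that the integrand is decreasing, and invoke monotonicity of $F$ for the second branch---is the natural approach and almost certainly what the original proof does. For odd $h=2m+1$ your plan is complete: the needed inequality $\kappa_h=(2m{+}1)\binom{2m}{m}/4^m\ge\sqrt{2h/\pi}$ is precisely $\binom{2m}{m}\ge 4^m/\sqrt{\pi(m+1/2)}$, and Wallis's product gives it cleanly (the partial products $W_m$ increase to $\pi/2$ and satisfy $(\binom{2m}{m}/4^m)^2(2m+1)=1/W_m>2/\pi$).

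There is, however, a real gap for even $h$, which you do not treat. For $h=2m$ the same computation yields $\kappa_h=2m\binom{2m}{m}/4^m$ with exponent $m-1$, and the inequality $\kappa_h\ge\sqrt{2h/\pi}$ becomes $\binom{2m}{m}\ge 4^m/\sqrt{\pi m}$, which is \emph{false} for every $m\ge 1$. In fact the lemma as stated fails already at $h=2$: there $F(\delta)=p^2-(1-p)^2=\delta$ exactly, whereas the claimed lower bound $\sqrt{4/\pi}\,\delta\sqrt{1-\delta^2}$ exceeds $\delta$ whenever $\delta^2<1-\pi/4\approx 0.215$. So no argument can recover the constant $\sqrt{2h/\pi}$ for even $h$; what your method correctly delivers in that case is the bound with exponent $\lfloor(h-1)/2\rfloor$ rather than $(h-1)/2$ (equivalently, with $\sqrt{2h/\pi}$ relaxed to a slightly smaller constant). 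This weakened form is all the paper actually needs downstream---\cref{lemma: reduction to 2 opinions} only uses the conclusion up to an unspecified constant $C_1$---so the defect is in the transcription of the statement, not in your strategy, but you should flag the parity issue explicitly rather than leave the even case silent.
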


We show how to adapt the proof of \cite[Lemma 9]{FraigniaudN19} to get \cref{lemma:from_lemma_9} in \cref{sec:omitted:preliminaries}.
Basically, \cref{lemma:from_lemma_9} makes explicit the minimum sample size required by the expectation of the bias in the next round to increase by a constant multiplicative factor relative to the current bias, to which we refer here as $\delta$. 
More specifically, it is sufficient that the number of samples $h$ satisfies $\sqrt{\frac{2h}{\pi}}\pa{1-\delta^2}^{\frac{h-1}{2}}\geq \sqrt{\frac{2h}{\pi \,e}}\geq c$ for some positive constant \(c > 0\) as long as $\delta<\frac{1}{\sqrt{h}}$. 
If, instead, $\delta\geq\frac{1}{\sqrt{h}}$, the lemma states that the new expected value of the bias is more than a constant, which allows us to show that the bias increases just by standard concentration arguments around the averages of \(X_1\) and \(X_2\) (through Berry-Esseen's inequality or Chernoff bounds depending on how large \(\delta\) is w.r.t.\ \(h\)).

We would like to exploit something that is similar, in spirit, to \cref{lemma:from_lemma_9} when \(k \gg 1\), in order to get that the bias increases each round.
However, \(X_1 \sim \text{Binomial}(h,p_1)\) and \(X_2 \sim \text{Binomial}(h,p_2)\) but \(p_1 + p_2 < 1\) and \(X_2 \neq h - X_1\).
In order to overcome this issue, let us define few events.

\begin{definition}[Winning events]
    \label{def:W_1-2_definition}
    Let \(i \in [k]\).
    We define $\winties{i}$ to be the event in which opinion "$i$" is the most sampled opinion, including possible ties. 
    Formally,
    \begin{equation}
        \winties{i}=\cap_{j \in [k]\setminus \{i\}} \cpa{X_i \geq X_j}.
    \end{equation}
    Similarly, we define $\winstrict{i}$ to be the event in which opinion "$i$" is the most sampled opinion, {without} ties. 
    Formally,
    \begin{equation}
        \winstrict{i}=\cap_{j \in [k]\setminus \{i\}} \cpa{X_i > X_j}.
    \end{equation}
    Finally, for \(i \neq j \in [k]\), we define \(\WW_{i,j} = \WW_i \cup \WW_j\), $\winstrict{i,j} = \winstrict{i} \cup \WW_{j,\text{strict}}$, and \(\winties{i,j} = \winties{i} \cup \WW_{j,\text{ties}}\).
\end{definition}

Clearly, \(\winstrict{i} \subseteq \WW_i \subseteq \winties{i}\) and \(\winstrict{i,j} \subseteq \WW_{i,j} \subseteq \winties{i,j}\).
The idea is to perform a conditional analysis: if we condition on \(\WW_{1,2}\), then we already know that either \(X_1\) or \(X_2\) win at the next round.
However, we do not quite get that the conditional random variable \((X_1 \mid \WW_{1,2})\) is a binomial one.
At the same time, it is not true that \((X_2 \mid \WW_{1,2})\) counts the number of failures of \((X_1 \mid \WW_{1,2})\).
The whole \cref{subsection:reduce2opinions} presents the analytical effort to demonstrate that, in practice, we can get some result that is similar to \cref{lemma:from_lemma_9}. 
The section is quite technical and involved, with ad-hoc results that are needed for this adaptation.
We stress that we actually use, as conditional event, the event \(\winstrict{1,2}\).
The reason being that (as we show in \cref{lemma:difference_maj_is_difference_comparison})
\begin{align}
    & \pr{\WW_1 \st \winstrict{1,2}} - \pr{\WW_2 \st \winstrict{1,2}} \nonumber\\
    = \ & \pr{X_1 > X_2 \st \winstrict{1,2}} - \pr{X_2 > X_1 \st \winstrict{1,2}}, \label{eq:prelim:nice-diff}
\end{align}
while the equality is not true if instead of \(\winstrict{1,2}\) we use \(\WW_{1,2}\), because we need to take ties into account.
Observe that \cref{eq:prelim:nice-diff} gives a formula that resembles the one that is estimated in \cref{lemma:from_lemma_9}.

In \cref{sec:analysis:ties-estimation}, we estimate the probability of \(\winstrict{1,2}\) with respect to \(\WW_{1,2}\) and \(\winties{1,2}\).
In order to do that, we classify the \(k\) opinions in two classes: the \textit{strong} and the \textit{rare} opinions.
The strong ones are opinions whose probabilities are comparable to \(p_1\), say, \(p_i > p_1 / 2\).
The rare ones are all the others.
Through concentration arguments, we can show that rare opinions disappear in one round w.h.p., and it is easy to see, by symmetry arguments, that \(\WW_1 = \Omega(p_1)\).
The main result that we obtain in \cref{sec:analysis:ties-estimation} is that, basically, under the hypothesis that \(h p_1 \ge C \log n\) for some large enough constant \(C > 0\), \(p_1 / 8 \le \pr{\winties{1}}/8 \le \Pr(\winstrict{1}) \le \pr{\WW_1} \le \pr{\winties{1}}\), that is,  \textit{ties do not matter that much}.
Notice that the latter fact is, again, trivial if \(h p_1 = \Omega(\max\{p_1^2 n, \log^2 n\} \gg O(\log n)\), while we require a smaller (and, possibly, \textit{much} smaller) \(h\).
Also, observe that \( \Pr(\winstrict{1}) \ge p_1/8 \) implies that \(\Pr(\winstrict{1,2}) = \Omega(p_1 + p_2)\).
Computing directly the probability of ties in the multinomial is hard, but we resolve this issue mapping injectively events in which opinion 1 wins with a tie to an event in which opinion 1 wins without a tie.
We show that this mapping preserves probabilities up to small constants, therefore we can conclude that events with ties (where opinion 1 wins) have probability weights that are comparable to events without ties (where opinion 1 wins). 
See \cref{lemma:relationship_ties_without_ties} for more details (and note that no initial bias is required for the result of \cref{lemma:relationship_ties_without_ties} to hold).

Finally, in \cref{sec:analysis:alltogether} we show how to combine all the ingredients that we have proved in \cref{subsection:reduce2opinions,sec:analysis:ties-estimation} to obtain the round-by-round expected growth of the bias and the final statement on plurality consensus.
As for the round-by-round expected growth of the bias, we just use the law of total probability which implies that 
\begin{align}
    & \pr{\WW_1} - \pr{\WW_2} \\
    \ge \ & \pr{\winstrict{1,2}} \left[\pr{\WW_1 \st \winstrict{1,2}} - \pr{\WW_2 \st \winstrict{1,2}}\right] \nonumber \\
    = \ & \pr{\winstrict{1,2}} \left[\pr{X_1 > X_2 \st \winstrict{1,2}} - \pr{X_2 > X_1 \st \winstrict{1,2}}\right],
\end{align}
and then use \cref{eq:prelim:expected-bias}.
Given the expected growth of the bias, we make use of the Bernstein's inequality (\cref{lemma:bernstein_inequality} in \cref{sec:tools}) to show its round-by-round increase in high probability, and we need to make sure that all other surrounding conditions of the growth are still satisfied in order to iterate in high probability each round.

\subsection{Final discussion and open questions}\label{sec:final-discussion}

In this paper we present an analysis of the \hmaj dynamics when \(h p_1 \ge C \log n\) for a large enough constant \(C\).
Interestingly, \cref{thm:prelim:main} answers (partially) a long-standing open question on mathoverflow \cite{210018} which basically asks how to estimate \(\Pr(\WW_1) - \Pr(\WW_2)\) when \(h = o(1/\delta^2)\), that is, when \(\delta = o(\sqrt{1/h})\).
With \cref{proposition:bias_hp>1} we answer to this question under the hypothesis that \(h p_1 \ge C \log n\), and we also show that the bias increases w.h.p.\ each round as long as \(C' \sqrt{p_1 / n} \le \delta = o(\sqrt{p_1 / \log n})\) for some large enough constant \(C' > 0\).

We also remark that \(\delta \sim \sqrt{p_1 / n}\) is the smallest bias that was ever required in the literature on opinion dynamics (see \cref{sec:related-works}). 
Indeed, in the proof of \cref{thm:hp_1 large}, we show that the standard deviation of the bias at the next round is no smaller than some function \(\Theta(\sqrt{p_1/n})\), while its expected growth is at least a multiplicative factor \(\sim \sqrt{\log n}\).
It remains open to understand whether the bias can be reduced for \(h \sim k \log n\), which would imply that the growth of the bias is higher than a multiplicative factor \(\sim \sqrt{\log n}\): however, we conjecture that this growth is optimal in the regime \(h \sim k \log n\).

The main open question here is whether the lower bound \(\Omega(k/ h^2)\) given by \cite{becchettiSimpleDynamicsPlurality2017} is tight or not. 
To understand this, one has to improve the assumption on \(hp_1\) and analyze especially the case where \(hp_1 \ll 1\).
We argue more in the next subsection.

We remark that our work does not deal with perfectly-balanced configurations: one simply has to show that in short time the system reaches a configuration with the bias specified in the hypotheses of \cref{thm:prelim:main}.
We leave the analysis of the symmetry-breaking phase for future research.

\paragraph*{On the multinomial distribution.}

Our analysis offers a corollary for the multinomial distribution that is interesting per-se, and is just a reformulation of \cref{lemma:relationship_ties_without_ties}.

\begin{corollary}\label{cor:multinomial}
    Consider a multinomial random variable \((X_1, \ldots, X_k)\) where each \(X_i \sim \binomial(h, p_i)\), with \(p_1 \ge \ldots \ge p_k\).
    If \(h p_1 \ge C \log n\) for a large enough constant \(C > 0\), then \(\Pr(\cap_{i = 2}^k \{X_1 > X_i\}) \ge c p_1\) for some small enough constant \(c > 0\). 
\end{corollary}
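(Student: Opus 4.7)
The event $\bigcap_{i=2}^{k}\{X_1 > X_i\}$ is by construction the strict-winning event $\winstrict{1}$ of \cref{def:W_1-2_definition}. My plan is therefore to invoke \cref{lemma:relationship_ties_without_ties} established in \cref{sec:analysis:ties-estimation}, which, as anticipated in the overview, yields the chain $p_1/8 \le \Pr(\winties{1})/8 \le \Pr(\winstrict{1})$ under the hypothesis $h p_1 \ge C \log n$. Setting $c = 1/8$ then gives the statement. Since the corollary is explicitly advertised as a reformulation of that lemma phrased for a general multinomial audience, the proof really is a one-liner citing the lemma; what follows is a sketch of the two ingredients that go into it, so that the location of the hypothesis is visible.

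The first ingredient is a symmetry/coupling argument yielding $\Pr(\winties{1}) = \Omega(p_1)$. Because $p_1 \ge p_j$ for every $j$, the marginals admit a coupling with $X_1 \ge X_j$ almost surely, and hence $\Pr(\winties{1}) \ge \Pr(\winties{j})$ for every $j$. Since the events $\{\winties{i}\}_{i \in [k]}$ cover the sample space, this already gives $\Pr(\winties{1}) \ge 1/k$. Splitting opinions into the \emph{strong} ones (those with $p_i \ge p_1/2$) and the \emph{rare} ones (which concentrate around $0$ w.h.p.\ once $h p_1 \ge C \log n$) then upgrades the crude $1/k$ bound to $\Omega(p_1)$, essentially because at most $O(1/p_1)$ opinions can be strong.

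The second, more delicate ingredient is that ties cost at most a constant factor, namely $\Pr(\winstrict{1}) \ge \Pr(\winties{1})/8$. Since multinomial tie probabilities are hard to compute directly, the strategy is the injection alluded to in the overview: build an explicit map from tied-winning outcomes to strict-winning outcomes by, for instance, transferring a single sample from a suitable ``donor'' opinion $\ell$ to opinion $1$, and verify both that the map is injective and that the ratio of multinomial probabilities between an outcome and its image is bounded below by a constant. The latter ratio involves factors of the form $(h - X_\ell + 1)/(X_1 + 1) \cdot p_1 / p_\ell$, and this is where $h p_1 \ge C \log n$ is really used, via Chernoff-type estimates on the $X_i$'s ensuring that a donor exists and that the weight loss stays bounded. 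The main obstacle is precisely this combinatorial step, which must treat all tie patterns among the $k$ opinions uniformly without incurring a multiplicative loss depending on $k$; once it is in place, summing over pre-images and combining with the first ingredient yields the corollary immediately.
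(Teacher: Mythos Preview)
Your proposal is correct and takes exactly the paper's route: the corollary is nothing more than the conclusion of \cref{lemma:relationship_ties_without_ties}, whose proof combines the strong/rare split giving $\Pr(\winties{1})=\Omega(p_1)$ with the injective map from tied-winning outcomes to strict-winning outcomes that you describe. One small correction to the sketch: when a single sample is transferred from donor opinion $\ell$ to opinion $1$, the multinomial ratio is $\tfrac{x_\ell}{x_1+1}\cdot\tfrac{p_1}{p_\ell}$, not $\tfrac{h-x_\ell+1}{x_1+1}\cdot\tfrac{p_1}{p_\ell}$, and the paper takes $\ell$ to be the least-sampled \emph{strong} opinion precisely so that $x_\ell/(x_1+1)$ is bounded below by a constant on the concentration event; the resulting constant is $c=1/18$ rather than $1/8$.
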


% Notice that, when \(h p_1 \ge C \log n\), we have that the probability that opinion \(1\) is maximum  with at most \(m = O(1)\) ties is at least \(c \Pr(\WW_1)\) for some constant \(c > 0\): By contradiction, assume that \(m = O(1)\) and that the probability that \(1\) is maximum and we have at least \(m\) ties is at most \(o(\Pr(\WW_1)\).
% Then, 
% \[
%     \Pr(\winstrict{1}) \le  \Pr(\WW_1 \cap \{ \text{opinion $1$ is maximum with at most $m$ ties} \}) = o(\Pr(\WW_1)),
%     \]
% which, in turn, is at most \(\Pr(\WW_1)\).
% However, it is not difficult to see that \(\Pr(\WW_1) = \Theta(p_1)\) (exploiting the fact that \(p_1 \ge p_j\) for all \(j > 1\)), which would imply that \(\Pr(\winstrict{1}) = o(p_1)\), that is a contradiction.

We do not know how the statement of \cref{cor:multinomial} changes when \(h p_1\) decreases, which remains an open question.
This is linked to the generalized birthday paradox, which asks as follows:
If \(p_1 = \ldots = p_k\), what is the probability that there exists \(i \in [k]\) such that \(X_i > 1\)?
In fact, if \(h p^2_1 = \Theta(1)\), we are in the standard birthday paradox:
with constant probability, we have that \(\max_{i\in[k]} \cpa{X_i} = 1\), which means that whenever opinion 1 is the maximum, the number of ties is $h$ with constant probability. In contrast, \cref{cor:multinomial} suggests that when \(hp_1 = \Omega(\log n)\) the number of ties that involve opinion 1 when it is the maximum is small.

Since the number of ties among the maxima increases when \(hp_1\) gets smaller, we conjecture that, for \(hp_1 \ll 1\), \(\Pr(\winstrict{1}) = o(p_1)\), implying that our approach, that entirely builds around the fact that \(\Pr(\winstrict{1}) \sim \Pr(\WW_1)\), cannot apply, and new ideas are required.

Another open question is to quantify the number of ties among the maxima, in general, and especially when opinion \(1\) is in the maxima for a generic configuration where \(p_1 \ge \ldots \ge p_k\). 
This would help bridging the two cases \(hp_1 \ll 1\) and \(hp_1 \gg 1\).

\section{Analysis}
In this section we present the proof of \Cref{thm:prelim:main}, following the overview presented in \Cref{sec:intro:sketch}. 
Most of the technical proofs are deferred to \Cref{sec:missing-proofs}.

\subsection{\texorpdfstring{Conditional analysis of the expected behavior of the process}{Conditional analysis of the expected behavior of the process}}
\label{subsection:reduce2opinions}
Due to space limitations, we defer all the proofs in this section to \Cref{sec:omitted:reduce2opinions}.

\Cref{lemma:from_lemma_9} shows a good lower bound of $\pr{\WW_1}-\pr{\WW_2}$ for the \hmaj when the number of opinion is 2. In order to use this result for $k$ opinions, our plan is to condition on the event that either opinion 1  or 2 wins, without ties. Let $x=\max_{i\geq 3}\cpa{X_i}$ the mode of the sample among the opinions $\cpa{3,\ldots, k}$. This event can be written as $\max\cpa{X_1,X_2}>x$.
In the following lemma, we adapt the result of \Cref{lemma:from_lemma_9} to work under the additional conditioning event that opinion 1 or 2 is sampled more than $x$ times.

\begin{lemma} \label{lemma: reduction to 2 opinions}
    For any integer \(m > 0\), let $Y_1\sim \binomial(m,q)$ and $Y_2=m-Y_1\sim \binomial(m,1 - q)$, with $q > 1/2$. We have, for any $m/2 < x \leq m$,
    \[
        \Pr\left( Y_1>Y_2 \mid \max(Y_1,Y_2)>x \right) - \Pr\left( Y_2>Y_1 \mid \max(Y_1,Y_2)>x \right) \geq C_1 \cdot \min(\sqrt{m}\cdot (2q-1),1),
    \]
    for some constant $C_1>0$.
\end{lemma}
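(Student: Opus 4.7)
The plan is to reduce the conditional difference to a ratio of binomial tail probabilities and then to exploit a monotonicity property to reduce the problem to the unconditional case already handled by \cref{lemma:from_lemma_9}.

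To set up, since $Y_1$ is integer-valued with $Y_2 = m - Y_1$ and $x > m/2$, the events $\{Y_1 > x\}$ and $\{Y_2 > x\} = \{Y_1 < m - x\}$ are disjoint, and each one automatically forces the corresponding strict-majority event (for instance, $Y_1 > x > m/2$ forces $Y_2 = m - Y_1 < m - x < x < Y_1$, hence $Y_1 > Y_2$). Writing $y := \lfloor x \rfloor + 1$ for the smallest integer strictly greater than $x$, $A(y) := \Pr\pa{Y_1 \ge y}$, and $B(y) := \Pr\pa{Y_1 \le m - y}$, the quantity we need to lower bound equals $(A(y) - B(y))/(A(y) + B(y))$. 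Note that $y$ ranges over integers in $[y_{\min}, m]$, where $y_{\min} := \lfloor m/2 \rfloor + 1$ is the smallest integer exceeding $m/2$.

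The heart of the argument is to show that $f(y) := A(y)/B(y)$ is non-decreasing in $y$ on this range. Setting $P_1(k) := \binom{m}{k} q^k (1-q)^{m-k}$, a direct manipulation shows that $f(y+1) \ge f(y)$ is equivalent to $A(y)/B(y) \ge P_1(y)/P_1(m-y) = (q/(1-q))^{2y-m}$. I would establish this by changing variable $k \mapsto m - k$ in $A(y) = \sum_{k \ge y} P_1(k)$ to rewrite it as $\sum_{j=0}^{m-y} \binom{m}{j} q^{m-j}(1-q)^j$, and then comparing it with $B(y) = \sum_{j=0}^{m-y}\binom{m}{j} q^{j}(1-q)^{m-j}$ term by term: the ratio of $j$-th summands is $(q/(1-q))^{m-2j}$, which is at least $(q/(1-q))^{2y-m}$ whenever $j \le m-y$, since $q > 1/2$ and $m - 2j \ge 2y - m$. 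Summing these pointwise inequalities yields the desired bound, and in particular $f(y) > 1$ throughout the range.

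To conclude, the map $u \mapsto (u-1)/(u+1)$ is strictly increasing on $u > -1$, so the monotonicity of $f$ gives
\begin{equation*}
    \frac{A(y) - B(y)}{A(y) + B(y)} \ \ge\ \frac{A(y_{\min}) - B(y_{\min})}{A(y_{\min}) + B(y_{\min})}.
\end{equation*}
At $y = y_{\min}$, one has $A(y_{\min}) = \Pr(Y_1 > m/2) = \Pr(Y_1 > Y_2)$ and $B(y_{\min}) = \Pr(Y_1 < m/2) = \Pr(Y_2 > Y_1)$, and $A(y_{\min}) + B(y_{\min}) \le 1$. Hence \cref{lemma:from_lemma_9}, together with the observation (already made after its statement) that $\sqrt{2m/\pi} \cdot g(2q-1, m) \ge C_1 \min\pa{\sqrt{m}\,(2q-1),\,1}$ for some absolute constant $C_1 > 0$, immediately finishes the proof.

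The main obstacle I expect is the monotonicity of $f$, which is the only genuine calculation; the reduction and the final invocation of \cref{lemma:from_lemma_9} are essentially bookkeeping. A minor subtlety is the degenerate endpoint $x = m$, where the conditioning event has zero probability; this I would either exclude from the hypothesis or treat as a vacuous case.
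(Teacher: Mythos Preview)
Your proof is correct and follows the same overall strategy as the paper: show that the conditional difference is non-decreasing in the threshold, drop the threshold to $\lceil m/2\rceil$ to recover the unconditional quantity, and invoke \cref{lemma:from_lemma_9}. The execution of the monotonicity step is where you diverge. The paper first conditions on the \emph{exact} value $M=j$, computes $b_j:=\Pr(Y_1>Y_2\mid M=j)-\Pr(Y_2>Y_1\mid M=j)=\bigl((q/(1-q))^{2j-m}-1\bigr)/\bigl((q/(1-q))^{2j-m}+1\bigr)$ and observes it is increasing in $j$, and then spends a separate, somewhat delicate summation argument (the proof of \cref{claim: monotonicity_max_binomial_inequality}) to pass from $\{M=j\}$ to $\{M\ge i\}$. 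You bypass this two-stage structure entirely: working directly with the tails $A(y),B(y)$, the monotonicity of $A(y)/B(y)$ drops out of a single term-by-term comparison, and the map $u\mapsto(u-1)/(u+1)$ does the rest. Both arguments ultimately rest on the same elementary inequality $(q/(1-q))^{m-2j}\ge (q/(1-q))^{2y-m}$ for $j\le m-y$, but your packaging is more economical and avoids the aggregation step altogether. Your remark on the degenerate endpoint $x=m$ is also apt; the paper's statement tacitly assumes the conditioning event is non-empty.
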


The next lemma shows that the expected bias at the next round, i.e. $\pr{\WW_1}-\pr{\WW_2}$, is equal to $\pr{X_1>X_2}-\pr{X_2>X_1}$, conditioning on the event either opinion 1 or 2 wins, without ties. The last quantity is easier to estimate, because we don't have to handle possible ties. 
This is to get a formula that resembles that of \Cref{lemma:from_lemma_9}.

\begin{lemma} Recall the \Cref{def:W_1-2_definition} of the event $\winstrict{1,2}$. We have
    \label{lemma:difference_maj_is_difference_comparison}
    \begin{align*}
        & \ \Pr\left( \WW_{1} \mid \winstrict{1,2} \right) - \Pr\left( \WW_{2} \mid \winstrict{1,2} \right)
        \\
        = & \ \Pr\left( X_1 > X_2 \mid \winstrict{1,2} \right) - \Pr\left( X_2 > X_1 \mid \winstrict{1,2} \right)
    \end{align*}
\end{lemma}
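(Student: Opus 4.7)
The identity follows from a clean decomposition of the conditioning event into two disjoint pieces, and the observation that on each piece both sides of the equation reduce to the same indicator. The plan is to show that, conditional on $\winstrict{1,2}$, the random tie-breaking of the \hmaj rule plays no role, so adopting opinion $i$ is the same event as $X_i$ being strictly larger than $X_{3-i}$.

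First I would write $\winstrict{1,2}=\winstrict{1}\sqcup\winstrict{2}$ as a \emph{disjoint} union: indeed, under $\winstrict{1}$ we have $X_1>X_2$ strictly, which is incompatible with $\winstrict{2}$. Next I would note that $\winstrict{i}\subseteq\WW_i$ for $i\in\{1,2\}$, because if opinion $i$ is the strict maximum of the multinomial sample, then it is the unique mode and the node deterministically adopts it (no tie-breaking is invoked). Symmetrically, $\winstrict{i}\cap\WW_j=\emptyset$ whenever $j\neq i$ and $j\in\{1,2\}$. Combining these two facts yields
\[
\WW_1\cap\winstrict{1,2}\;=\;\winstrict{1},\qquad \WW_2\cap\winstrict{1,2}\;=\;\winstrict{2}.
\]

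For the right-hand side, the same disjoint decomposition gives
\[
\{X_1>X_2\}\cap\winstrict{1,2}\;=\;\winstrict{1},\qquad \{X_2>X_1\}\cap\winstrict{1,2}\;=\;\winstrict{2},
\]
because on $\winstrict{1}$ we have $X_1>X_j$ for all $j\neq 1$ (in particular $X_1>X_2$), while the intersection of $\{X_1>X_2\}$ with $\winstrict{2}$ is empty. Dividing both identities by $\pr{\winstrict{1,2}}>0$ shows that both the LHS and the RHS of the claimed equation equal
\[
\frac{\pr{\winstrict{1}}-\pr{\winstrict{2}}}{\pr{\winstrict{1,2}}},
\]
which completes the proof.

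\textbf{Anticipated difficulty.} There is essentially no analytic obstacle here; the whole content of the lemma is a careful bookkeeping of events. The only subtlety to watch is to make sure the decomposition of $\winstrict{1,2}$ is really disjoint (which it is, thanks to the strict inequalities in the definition of $\winstrict{i}$), and to check that the conditioning event has positive probability so that dividing is legitimate — this is automatic in the regime considered in the paper, since by the estimates of \Cref{sec:analysis:ties-estimation} we will have $\pr{\winstrict{1,2}}=\Omega(p_1+p_2)>0$.
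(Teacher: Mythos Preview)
Your proof is correct and follows essentially the same event-bookkeeping approach as the paper. The paper's version writes out the tie event $\{X_1=X_2\}\cap\winstrict{1,2}$ with its $\tfrac12$ coefficients and lets the two copies cancel, whereas you observe up front that this event is empty (since $\winstrict{1}$ and $\winstrict{2}$ are disjoint and each forces a strict inequality between $X_1$ and $X_2$); your route is slightly cleaner but equivalent.
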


We remark that the statement of \Cref{lemma:difference_maj_is_difference_comparison} does not hold if we condition on the event $\WW_{1,2}$, which includes possible ties. In fact, the event $\cpa{\WW_1,\WW_{1,2}}$ contains sub-events for which opinion 1 ties with many other opinions, and we could not find a simple close expression for the probability of these sub-events.

To use \Cref{lemma: reduction to 2 opinions} after conditioning on the fact either opinion 1 or 2 wins, we need to estimate the number of total samples of opinion 1 and 2. Since, at the end, we assume $hp_1 \ge \log n$, we can apply the Chernoff bound (\Cref{lemma:app:multiplicative-chernoff} in \Cref{sec:tools}) to show that $X_1+X_2\geq (p_1+p_2)/2$ w.h.p., but we must ensure that this remains true under the aforementioned conditioning. The following lemma shows this is indeed the case. Despite the intuitive nature of the statement (the fact either opinion 1 or 2 wins is positively correlated with the absolute value of $X_1$ and $X_2$) we need to use an involved coupling argument to formally prove it.
\begin{lemma} \label{lemma:sum_12_is_independent_maximum}
                It holds that
                \begin{equation*} 
                    \Pr\left( X_1 + X_2 \geq h\cdot (p_1+p_2)/2 \mid \winstrict{1,2} \right)\geq \Pr\left( X_1 + X_2 \geq h\cdot (p_1+p_2)/2 \right).
                \end{equation*}
\end{lemma}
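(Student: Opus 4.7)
The plan is to derive the lemma from the stronger monotonicity statement that the conditional distribution of $X_1+X_2$ given $\winstrict{1,2}$ stochastically dominates its unconditional counterpart. By a Chebyshev-sum (or FKG-type) argument on the single variable $X_1+X_2$, this reduces to showing that
\[
f(s) := \pr{\winstrict{1,2} \st X_1+X_2 = s}
\]
is non-decreasing in $s$. Indeed, writing $\pr{X_1+X_2 \ge t,\,\winstrict{1,2}} = \expect[\mathds{1}[X_1+X_2 \ge t] \cdot f(X_1+X_2)]$ and $\pr{\winstrict{1,2}} = \expect[f(X_1+X_2)]$, the comonotonicity of the two functions $\mathds{1}[X_1+X_2\ge t]$ and $f(X_1+X_2)$ of the single random variable $X_1+X_2$ yields non-negative covariance, so $\pr{X_1+X_2 \ge t \st \winstrict{1,2}} \ge \pr{X_1+X_2 \ge t}$, and the lemma follows by taking $t = h(p_1+p_2)/2$.

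To establish $f(s+1) \ge f(s)$, I would first decompose the conditional multinomial. Given $X_1+X_2 = s$, the vector $(X_1, X_2)$ is distributed as $(A, s-A)$ with $A \sim \binomial(s,\, p_1/(p_1+p_2))$, independently of $(X_3, \ldots, X_k)$, which is a $(k-2)$-dimensional multinomial on $h-s$ trials with probabilities $(p_j/(1-p_1-p_2))_{j\ge 3}$. Writing $M_r$ for the maximum of such a multinomial on $r$ trials, the event $\winstrict{1,2}$ becomes $\max(A, s-A) > M_{h-s}$.

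The heart of the argument is then a two-sided monotone coupling between the slices at $s$ and $s+1$. On the $\{1,2\}$-side, let $B \sim \mathrm{Bernoulli}(p_1/(p_1+p_2))$ be independent of $A$ and set $A' := A + B \sim \binomial(s+1,\, p_1/(p_1+p_2))$; a short case check on $B \in \{0,1\}$ shows $\max(A', (s+1)-A') \ge \max(A, s-A)$ pointwise. On the complementary side, realise the $(h-s)$-trial multinomial as the $(h-s-1)$-trial multinomial augmented by one independent extra draw over $\{3,\ldots,k\}$; adding a draw cannot decrease any coordinate, so $M_{h-s} \ge M_{h-s-1}$ pointwise. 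Because the two sides are independent under the conditional law, both couplings combine on a common product space, yielding
\[
\mathds{1}\{\max(A', (s+1)-A') > M_{h-s-1}\} \ \ge\ \mathds{1}\{\max(A, s-A) > M_{h-s}\}
\]
almost surely, which is exactly $f(s+1) \ge f(s)$.

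The main obstacle I anticipate is the bookkeeping in the coupling: one must carry out both sub-couplings on a single probability space while keeping the conditional marginals exactly right and preserving independence between the $\{1,2\}$-part and the $\{3,\ldots,k\}$-part, since the negative correlation inside the original multinomial $(X_1,\ldots,X_k)$ can otherwise silently spoil positive correlation. Once this is set up cleanly, the two pointwise comparisons glue together immediately. Note also that the specific threshold $h(p_1+p_2)/2$ plays no role in the argument — the proof actually delivers stochastic domination at every threshold, which may be useful elsewhere in the paper.
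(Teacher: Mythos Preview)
Your proof is correct. Conditioning on $X_1+X_2=s$ factorises the multinomial into an independent product of a $\binomial\!\bigl(s,\,p_1/(p_1+p_2)\bigr)$ part and an $(h-s)$-trial multinomial over $\{3,\ldots,k\}$, which is exactly what makes the coupling clean; both pointwise comparisons $\max(A',\,s+1-A')\ge\max(A,\,s-A)$ and $M_{h-s}\ge M_{h-s-1}$ are immediate, and the one-variable Chebyshev-sum lift from monotonicity of $f$ to the threshold inequality is standard.

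The paper's own proof uses the same underlying coupling idea (transfer one draw from the $\{3,\ldots,k\}$ side to the $\{1,2\}$ side) but conditions on the tail event $\{X_1+X_2\ge x\}$ rather than on the exact slice, shows $\pr{\winstrict{1,2}\mid X_1+X_2\ge x}$ is non-decreasing in $x$, and finishes with Bayes. Your route is cleaner for a concrete reason: the conditional-on-exact-sum law of a multinomial is a textbook product identity, whereas the law of $(X_1,\ldots,X_k)$ given $X_1+X_2\ge x$ does \emph{not} have such a neat description --- in particular the paper's asserted identity $(X_i\mid X_1+X_2\ge x)\sim Z_i$ is not literally true (try $h=2$, $k=3$, $p_1=p_2=p_3=1/3$, $x=1$: then $\pr{X_3=1\mid X_1+X_2\ge 1}=1/2$ but $\pr{Z_3=1}=1/3$). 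Working with exact slices sidesteps this delicacy entirely, and as you note you also get stochastic domination at every threshold for free.
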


\subsection{Estimating the probability of the conditional event}\label{sec:analysis:ties-estimation}
In this section our goal is to estimate a lower bound for \(\pr{\winstrict{1,2}}\).
More specifically, we show that $\pr{\winstrict{1,2}}\geq C (p_1+p_2)$ for some constant $C>0$. 
As we already argued after \Cref{lemma:difference_maj_is_difference_comparison}, it is important to exclude ties in the event we condition on.

The next definitions introduce terms to refer to opinions that will disappear w.h.p.\ at the next round, the \textit{weak opinions}, and their counter part, the \textit{strong opinions}.
\begin{definition}[Rare opinions]
    Denote with $\mathcal{R}_x:= \cpa{i\in [k] : p_i\leq x\cdot p_1 }$ be the set of the $x$-\textit{rare opinions}, for $0<x<1$.
\end{definition}
\begin{definition}[Strong opinions]
    Denote with $\mathcal{S}:= \cpa{i\in [k] : p_i > \frac{1}{2} p_1 }$ be the set of the \textit{strong opinions}, for $0<x<1$. 
\end{definition}
Note that the set of the strong opinions and the set of $1/2$-rare opinions are complementary.

In the next lemma we show that w.h.p.\ all weak opinions will be sampled less than the opinion 1, and therefore, they will disappear at the next round.
The proof is just an application of Chernoff bounds (\Cref{lemma:app:multiplicative-chernoff} in \Cref{sec:tools}) and we defer it to \Cref{sec:omitted:ties-estimation}.
\begin{lemma}
\label{claim:weak_opinion_lose_whp}
    Let $C_2, C_3$ be two constants s.t. $0<C_2<1$ and $C_3>2$. If $h p_1 > C_4 \log n$, with $C_4= \pa{3\, C_3\frac{1}{1-C_2}}^2$, then
    \[
        \pr{\cap_{i\in\mathcal{R}_{C_2}} \cpa{X_1 > X_i}} \geq 1-\frac{1}{n^{C_3-2}}.
    \]
\end{lemma}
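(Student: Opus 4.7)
}

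The plan is a direct Chernoff-plus-union-bound argument. Set the separating threshold
\[
    T = \tfrac{1+C_2}{2}\, h p_1,
\]
i.e.\ the midpoint between the expectation $hp_1$ of $X_1$ and the upper bound $C_2 h p_1$ on $\expect[X_i] = h p_i$ for every rare opinion $i \in \mathcal{R}_{C_2}$. It suffices to prove $X_1 \geq T$ and $X_i < T$ for all $i \in \mathcal{R}_{C_2}$ simultaneously with the claimed probability.

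First, I apply the lower-tail multiplicative Chernoff bound (\Cref{lemma:app:multiplicative-chernoff}) to $X_1 \sim \binomial(h,p_1)$ with deviation $\delta = (1-C_2)/2 \in (0,1)$ to obtain
\[
    \pr{X_1 \leq T} \leq \exp\!\left( -\tfrac{(1-C_2)^2}{8}\, h p_1 \right).
\]
Plugging in the hypothesis $h p_1 \geq C_4 \log n = \left(\tfrac{3 C_3}{1-C_2}\right)^2 \log n$, the exponent becomes at least $\tfrac{9 C_3^2}{8}\log n$, so this probability is at most $n^{-9 C_3^2/8}$.

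Next, fix $i \in \mathcal{R}_{C_2}$. Since $p_i \leq C_2 p_1$, the variable $X_i$ is stochastically dominated by $X_i' \sim \binomial(h, C_2 p_1)$, so it is enough to bound $\pr{X_i' \geq T} = \pr{X_i' \geq (1 + \delta'')\cdot C_2 h p_1}$ where $\delta'' = \tfrac{1-C_2}{2 C_2}$ is a constant. I apply the upper-tail multiplicative Chernoff bound, splitting according to whether $\delta'' \leq 1$ (i.e.\ $C_2 \geq 1/3$) or $\delta'' > 1$; a short computation in each case shows
\[
    \pr{X_i \geq T} \;\leq\; \exp\!\left( -\tfrac{(1-C_2)^2}{12\,C_2}\, h p_1 \right) \;\leq\; n^{-3 C_3^2/4},
\]
again using $h p_1 \geq C_4 \log n$ and $C_2 < 1$. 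A union bound over the at most $k \leq n$ rare opinions together with the event on $X_1$ yields total failure probability at most $n \cdot n^{-3 C_3^2/4} + n^{-9 C_3^2/8}$. Since $C_3 > 2$, both exponents are at most $-(C_3 - 2)$ (the inequalities $9 C_3^2/8 \geq C_3 - 2$ and $3 C_3^2/4 \geq C_3 - 1$ hold for all $C_3$), so the total is at most $n^{-(C_3-2)}$, as required.

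\paragraph*{Expected obstacles.} This is essentially a routine concentration argument, so there is no conceptual obstacle. The only mildly delicate point is the Chernoff bound on $X_i$ when $h p_i$ is very small, where the multiplicative ratio $\delta''$ could exceed $1$; this is handled cleanly by the stochastic-domination reduction to $X_i' \sim \binomial(h, C_2 p_1)$, after which $\delta''$ becomes a fixed constant depending only on $C_2$. The remaining work is bookkeeping: choosing $C_4 = (3 C_3/(1-C_2))^2$ so that the Chernoff exponents dominate $C_3 \log n$ after the union bound absorbs an extra $\log n$ factor.
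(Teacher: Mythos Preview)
Your proposal is correct and follows essentially the same Chernoff-plus-union-bound argument as the paper: the paper concentrates each $X_j$ within $\sqrt{3C_3\, hp_j \log n}$ of its mean and checks that the resulting interval for $X_1$ lies above those for the rare opinions, whereas you fix a single midpoint threshold $T=\tfrac{1+C_2}{2}hp_1$ and reduce each rare $X_i$ to $\binomial(h,C_2 p_1)$ by stochastic domination. Your treatment of rare opinions with very small $p_i$ is in fact slightly cleaner than the paper's, since the paper's per-opinion multiplicative Chernoff (stated only for $\delta\in(0,1)$) is delicate when $hp_j$ is tiny; one minor slip is that in the case $\delta''>1$ the intermediate bound $\exp\!\big(-(1-C_2)^2 hp_1/(12C_2)\big)$ does not literally hold, but the correct large-deviation bound $\exp\!\big(-(1-C_2)hp_1/6\big)$ still yields an exponent well beyond $C_3-2$, so the conclusion stands.
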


To bound $\pr{\WW_{1,\text{strict}}}$, we plan to first show that there exists a constant $C$ s.t. $\winties{1}\geq C p_1$, and, therefore, show that $\pr{\WW_{1,\text{strict}}}\geq C \Pr(\winties{1})$, for some other constant $C$. Hence, in the next lemma we show that $\Pr(\WW_{1})\geq C p_1$ which, in turn, implies that \(\Pr(\winties{1}) \ge Cp_1\). The idea is to use \Cref{claim:weak_opinion_lose_whp} to show that only strong opinion compete for the win and in the worst case they have all same probability to win without ties. The thesis follows after showing that the number of strong opinions is at most $2/p_1$, and that $\sum_{i\in\mathcal{S}}\pr{\WW_i}\geq 1$.
Again, the proof is deferred to \Cref{sec:omitted:ties-estimation}.
\begin{lemma}
    \label{lemma:bound_W_1-2}
    Let $C_4=\pa{18}^2$. If $p_1 h \geq  C_4\log n$, then
    \[
        \Pr(\WW_{1}) \geq \frac{1}{3} p_1.
    \]
\end{lemma}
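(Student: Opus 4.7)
}

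The plan is to split $[k]$ into the strong opinions $\mathcal{S}$ and the $(1/2)$-rare opinions $\mathcal{R}_{1/2}$, show that no rare opinion wins except on an event of negligible probability, count the strong opinions, and use a symmetry/monotonicity argument to lower bound $\Pr(\WW_1)$.

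First, I apply \Cref{claim:weak_opinion_lose_whp} with $C_2 = 1/2$ and $C_3 = 3$, for which $(3 C_3 / (1 - C_2))^2 = 18^2 = C_4$, matching the hypothesis $h p_1 \geq C_4 \log n$. The claim then yields $\Pr\bigl(\bigcap_{i \in \mathcal{R}_{1/2}} \{X_1 > X_i\}\bigr) \geq 1 - 1/n$. Since $\WW_i \subseteq \{X_i \geq X_1\}$ for every $i \neq 1$, a union bound gives $\sum_{i \in \mathcal{R}_{1/2}} \Pr(\WW_i) \leq 1/n$. Because the events $\{\WW_i\}_{i \in [k]}$ partition the sample space (tie-breaking ensures that exactly one opinion wins), we deduce $\sum_{i \in \mathcal{S}} \Pr(\WW_i) \geq 1 - 1/n$.

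Next, each $i \in \mathcal{S}$ satisfies $p_i > p_1/2$ and $\sum_{i \in [k]} p_i \leq 1$, so $|\mathcal{S}| < 2/p_1$. Assuming the monotonicity $\Pr(\WW_1) \geq \Pr(\WW_i)$ for every $i \in \mathcal{S}$, averaging over $\mathcal{S}$ will yield
\[
    \Pr(\WW_1) \;\geq\; \frac{1}{|\mathcal{S}|} \sum_{i \in \mathcal{S}} \Pr(\WW_i) \;>\; \frac{p_1}{2}\Bigl(1 - \frac{1}{n}\Bigr) \;\geq\; \frac{p_1}{3}
\]
for $n$ large enough, which is exactly the desired bound.

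The main obstacle is establishing the monotonicity $\Pr(\WW_1) \geq \Pr(\WW_i)$ when $p_1 \geq p_i$. I would handle it via an exchange/coupling argument: swapping the roles of opinions $1$ and $i$ in the probability vector produces a multinomial $(p_i, p_2, \ldots, p_{i-1}, p_1, p_{i+1}, \ldots, p_k)$ in which the winning probability of the relabelled opinion $1$ equals $\Pr(\WW_i)$ in the original system. One then compares the two systems on common randomness, arguing that moving mass from $p_i$ to $p_1$ (with all other $p_j$ and the tie-breaking coins fixed) can only push $X_1$ up and $X_i$ down componentwise, which makes $\WW_1$ at least as likely. Formalizing this coupling so that random tie-breaking is preserved is the only delicate point; the rest of the argument is bookkeeping.
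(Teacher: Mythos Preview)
Your proposal is correct and follows essentially the same route as the paper: split into strong and $(1/2)$-rare opinions, use \Cref{claim:weak_opinion_lose_whp} with $C_2=1/2$, $C_3=3$ to dispose of the rare ones, bound $|\mathcal{S}|<2/p_1$, and average using $\Pr(\WW_1)\ge\Pr(\WW_i)$. The only cosmetic difference is that the paper conditions on $\bigcap_{\ell\in\mathcal{R}_{1/2}}\{X_1>X_\ell\}$ and argues inside that event, whereas you stay unconditional by bounding $\sum_{i\in\mathcal{R}_{1/2}}\Pr(\WW_i)\le 1/n$ directly; your version is arguably cleaner, since the monotonicity $\Pr(\WW_1)\ge\Pr(\WW_i)$ is easier to justify unconditionally. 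For that monotonicity step, rather than a sample-by-sample coupling (where tie-breaking is indeed awkward), the slicker argument is the swap pairing: for an outcome $x$ let $\sigma(x)$ swap coordinates $1$ and $i$; then $\pi_1(x)=\pi_i(\sigma(x))$ and $\Pr(X=x)/\Pr(X=\sigma(x))=(p_1/p_i)^{x_1-x_i}\ge 1$ whenever $\pi_1(x)>0$, which immediately gives $\Pr(\WW_1)-\Pr(\WW_i)=\sum_x[\Pr(X=x)-\Pr(X=\sigma(x))]\pi_1(x)\ge 0$ with tie-breaking handled automatically.
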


The next Lemma plays a key role in avoiding the computation of the probability of ties. Assuming that $hp_1 = O(\log n)$, we can map each realization of the multinomial in which opinion 1 wins with ties to a realization in which opinion 1 wins without ties. We will show that these two realizations have the same probability up to a constant. Therefore, we can conclude that tie events have a comparable weight to events without ties.
We present its proof in \cref{sec:omitted:ties-estimation} as we believe this is an interesting result per-se, which also proves \Cref{cor:multinomial} in \Cref{sec:final-discussion}.
\begin{lemma}
\label{lemma:relationship_ties_without_ties}
    Let $hp_1 > C_4 \log n$ for $C_4=18^2$. It holds that
    \[
        \pr{\WW_{1,\text{strict}}}   \geq \frac{1}{6}\pr{\winties{1}} \geq \frac{1}{18} p_1.
    \]
    %\nnote{Indeed in the proof of this lemma we get that $\pr{\WW_{1}}  \geq \Omega(p_1)$ and then show that implies the thesis. I left this statement in case we make changes later to the whole analysis}
\end{lemma}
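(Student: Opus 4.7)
The second inequality $\tfrac{1}{6}\pr{\winties{1}} \geq \tfrac{p_1}{18}$ is immediate: \Cref{lemma:bound_W_1-2} gives $\pr{\WW_1} \geq p_1/3$, and the trivial inclusion $\WW_1 \subseteq \winties{1}$ yields $\pr{\winties{1}} \geq p_1/3$. So the plan concentrates entirely on the first inequality, equivalently $\pr{\winties{1} \setminus \winstrict{1}} \leq 5\,\pr{\winstrict{1}}$.

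The strategy is to construct an essentially injective map $\Phi : \winties{1} \setminus \winstrict{1} \to \winstrict{1}$ that preserves the multinomial probability up to a constant factor. Given $\omega = (x_1,\ldots,x_k) \in \winties{1}\setminus\winstrict{1}$, let $T(\omega) := \{j\neq 1 : x_j = x_1\}$ be the non-empty set of opinions tied with opinion $1$ at the top, pick a canonical element $j^*(\omega) \in T(\omega)$ (to be specified), and define
\[
    \Phi(\omega) \;=\; \omega + e_1 - e_{j^*(\omega)},
\]
where $e_i$ denotes the $i$-th standard basis vector; i.e.\ transfer one sampled vote from opinion $j^*$ to opinion $1$. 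In $\Phi(\omega)$, opinion $1$ has $x_1 + 1$ samples while every other opinion has at most $x_1$, so $\Phi(\omega) \in \winstrict{1}$.

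The probability ratio follows from a direct computation with multinomial coefficients:
\[
    \frac{P(\Phi(\omega))}{P(\omega)} \;=\; \frac{x_{j^*}}{x_1 + 1}\cdot\frac{p_1}{p_{j^*}} \;\geq\; \frac{x_1}{x_1 + 1} \;\geq\; \frac{1}{2},
\]
using $x_{j^*} = x_1 \geq 1$ (the maximum coordinate of $\omega$ is positive since $h > 0$) and $p_{j^*} \leq p_1$ (the opinions are labeled in non-increasing order of probability). Hence $P(\omega) \leq 2\,P(\Phi(\omega))$ holds uniformly in $\omega$. If $\Phi$ were a strict injection, summing over $\omega$ would already give $\pr{\winties{1} \setminus \winstrict{1}} \leq 2\,\pr{\winstrict{1}}$, strictly stronger than the target.

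The main obstacle is controlling the multiplicity of $\Phi$. Given $\omega' \in \winstrict{1}$, a pre-image under $\Phi$ must take the form $\omega' + e_j - e_1$ for some index $j \in B(\omega') := \{i \neq 1 : x_i' = x_1' - 2\}$, subject to the compatibility condition that $j$ be the canonical choice for the pre-image's tied set $T(\omega) = A(\omega') \cup \{j\}$, where $A(\omega') := \{i \neq 1 : x_i' = x_1' - 1\}$; a priori $|B(\omega')|$ can be as large as $|\mathcal{S}| = O(1/p_1)$. The plan to bound the effective multiplicity is twofold: (i) apply a strengthened version of \Cref{claim:weak_opinion_lose_whp} --- the slack $hp_1 \geq 18^2\log n$ in the hypothesis admits an additional Chernoff step showing that every rare opinion satisfies $X_j \leq X_1 - 2$ except on an event of probability $n^{-c}$, so that on the complementary event $B(\omega')$ is confined to the strong opinions; and (ii) tailor the canonical rule for $j^*(\omega)$, e.g.\ smallest index of $T(\omega)$ refined by $p_{j^*}$ when ties occur, so that on this high-probability event each $\omega' \in \winstrict{1}$ admits at most one valid pre-image. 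The negligible mass of the exceptional event is absorbed using the $\Omega(p_1)$ lower bound on $\pr{\winstrict{1}}$ that the argument itself produces a posteriori. Combining the probability-ratio bound with the essential injectivity then yields $\pr{\winties{1} \setminus \winstrict{1}} \leq 5\,\pr{\winstrict{1}}$, completing the plan. The most delicate step is the design of the canonical rule in (ii), since a naïve choice leaves multiple pre-images when several strong opinions happen to sit at level $x_1'-2$.
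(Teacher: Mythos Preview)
Your reduction of the second inequality to \Cref{lemma:bound_W_1-2} is correct, and the overall strategy---a probability-preserving map from tie configurations into $\winstrict{1}$---is exactly the paper's idea. The gap is in step (ii): no canonical rule for $j^*(\omega)$ can make $\Phi$ injective, or even $O(1)$-to-one. The obstruction is that $T(\omega)$ can be a singleton. Take any $\omega'\in\winstrict{1}$ with $A(\omega')=\emptyset$ and $|B(\omega')|\ge 2$, e.g.\ $\omega'=(10,8,8,\dots)$. For every $j\in B(\omega')$ the candidate pre-image $\omega_j=\omega'+e_j-e_1$ has $T(\omega_j)=\{j\}$, so $j^*(\omega_j)=j$ is forced regardless of the rule; all such $\omega_j$ are valid pre-images. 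Since $|B(\omega')|$ can be as large as $|\mathcal S|=\Theta(1/p_1)$, confining $B(\omega')$ to strong opinions via (i) does not save the argument, and the resulting bound $\pr{\winties{1}\setminus\winstrict{1}}\le O(1/p_1)\,\pr{\winstrict{1}}$ is vacuous.

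The paper sidesteps this by stealing the vote not from a tied opinion but from the \emph{least-scoring strong opinion} $j=\max\{i\in\mathcal S: x_i=\min_{r\in\mathcal S}x_r\}$. After the transfer this $j$ is the unique strict minimum among strong opinions in the image, so the donor is recoverable and the map is genuinely injective. The price is that the ratio $\frac{x_j}{x_1+1}\cdot\frac{p_1}{p_j}$ is no longer automatically close to $1$ (your $x_{j^*}=x_1$ trick is lost). The paper therefore works on the concentration event $\CC=\bigcap_j\{|X_j-hp_j|<\sqrt{9hp_j\log n}\}$, which holds with probability $1-O(1/n)$; on $\CC$ one has $x_j\ge \tfrac{1}{2}hp_1-\sqrt{9hp_1\log n}$ and $x_1\le hp_1+\sqrt{9hp_1\log n}$, so the ratio is at least $1/4$ once $hp_1\ge 18^2\log n$. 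This is precisely where the constant $C_4=18^2$ enters. Your ratio bound $\ge 1/2$ is tighter, but the paper's choice of donor is what makes the combinatorics close.
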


The goal of this section was to show that $\pr{\WW_{1,2,\text{strict}}}\geq C \pa{p_1+p_2}$ for some constant $C>0$. This can be easily obtained by applying \Cref{lemma:relationship_ties_without_ties}, because $p_1+p_2 \leq 2p_1$ and $\WW_{1,\text{strict}}\subseteq \WW_{1,2,\text{strict}}$.
\begin{corollary}
    \label{lemma:bound_M_1-2}
    Let $hp_1 > C_4 \log n$ for $C_4=18^2$. It holds that
    \[
        \pr{\WW_{1,2,\text{strict}}} \geq \frac{1}{36} \pa{p_1+p_2}.
    \]
\end{corollary}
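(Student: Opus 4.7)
The plan is to derive this corollary as an immediate consequence of \Cref{lemma:relationship_ties_without_ties}, exploiting two elementary observations: first, the event $\WW_{1,\text{strict}}$ is contained in $\WW_{1,2,\text{strict}} = \WW_{1,\text{strict}} \cup \WW_{2,\text{strict}}$ by \Cref{def:W_1-2_definition}, so monotonicity of probability gives $\pr{\WW_{1,2,\text{strict}}} \ge \pr{\WW_{1,\text{strict}}}$; second, since $p_2 \le p_1$, we have $p_1 + p_2 \le 2 p_1$, so any lower bound of the form $c \, p_1$ on $\pr{\WW_{1,\text{strict}}}$ immediately yields the lower bound $(c/2)(p_1 + p_2)$ on $\pr{\WW_{1,2,\text{strict}}}$.

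Concretely, I would invoke \Cref{lemma:relationship_ties_without_ties} under the hypothesis $h p_1 > 18^2 \log n$, which gives $\pr{\WW_{1,\text{strict}}} \ge p_1/18$. Combining this with the two observations above yields
\[
    \pr{\WW_{1,2,\text{strict}}} \ge \pr{\WW_{1,\text{strict}}} \ge \frac{p_1}{18} \ge \frac{p_1 + p_2}{36},
\]
which is precisely the statement. No additional probabilistic argument, coupling, or concentration step is needed, since all of the technical work (Chernoff concentration on the $X_j$'s, the injective map handling ties, and the strong/rare classification) is already packaged inside \Cref{lemma:relationship_ties_without_ties}.

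There is essentially no obstacle here: the only place one has to be slightly careful is to make sure that the set-theoretic containment $\WW_{1,\text{strict}} \subseteq \WW_{1,2,\text{strict}}$ is being read in the sense of \Cref{def:W_1-2_definition}, namely that $\winstrict{1,2}$ is defined as the union $\winstrict{1} \cup \winstrict{2}$ rather than as the intersection event ``both 1 and 2 strictly beat everyone else.'' Once this is clarified, the corollary follows in one line from the preceding lemma, which is why the authors flag it as a direct consequence rather than give it its own self-contained argument.
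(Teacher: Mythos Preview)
Your proposal is correct and matches the paper's own proof essentially line for line: the paper also uses the containment $\winstrict{1} \subseteq \winstrict{1,2}$, invokes \Cref{lemma:relationship_ties_without_ties} to get $\pr{\winstrict{1}} \ge p_1/18$, and then uses $p_2 \le p_1$ to conclude $p_1/18 = (p_1+p_1)/36 \ge (p_1+p_2)/36$. The only cosmetic difference is that the paper additionally cites \Cref{lemma:bound_W_1-2}, but that bound is already absorbed into the final inequality of \Cref{lemma:relationship_ties_without_ties}, so your single citation suffices.
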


\subsection{Putting everything together}\label{sec:analysis:alltogether}

In this section we can finally put together all the results to give a lower bound to $\Pr\left( \WW_{1} \right) - \Pr\left( \WW_{j} \right)$ and to use it to prove \Cref{thm:prelim:main}.
Next lemma is the statement that is ``equivalent in spirit'' to \Cref{lemma:from_lemma_9} but for many opinions, conditional on \(\winstrict{1,2}\).
Its proof involves a mix of the previous results and is quite intricate.
\begin{lemma}
    \label{lemma:lemma_ema_with_conditioning}
    Let $p_1\,h\geq C_4\,\log n$, with $C_4=18^2$. We have
    \[
        \Pr\left( \WW_{1} \mid \WW_{1,2,\text{strict}} \right) - \Pr\left( \WW_{2} \mid \WW_{1,2,\text{strict}} \right) \geq C \min\left\{\frac{ (p_1-p_2) \sqrt{h}}{\sqrt{2(p_1+p_2)}},1\right\},
    \]
    for some constant $C>0$.
\end{lemma}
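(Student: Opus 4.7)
The plan is to reduce the multi-opinion quantity to the two-opinion bound of Lemma \ref{lemma: reduction to 2 opinions} via a careful conditioning on the sum $X_1 + X_2$ and on the vector $(X_3, \ldots, X_k)$. First, by Lemma \ref{lemma:difference_maj_is_difference_comparison}, it suffices to lower bound
\[
    D := \Pr\!\pa{X_1 > X_2 \mid \WW_{1,2,\text{strict}}} - \Pr\!\pa{X_2 > X_1 \mid \WW_{1,2,\text{strict}}}.
\]

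The key step is the following standard property of the multinomial: conditional on $X_1 + X_2 = m$ and on the tuple $(X_3, \ldots, X_k) = (x_3, \ldots, x_k)$, the variable $X_1$ is distributed as $\binomial(m, q)$ with $q := p_1 / (p_1 + p_2) > 1/2$, and $X_2 = m - X_1$. Setting $x^\star := \max_{j \geq 3} x_j$, the event $\WW_{1,2,\text{strict}}$ becomes exactly $\{\max(X_1, X_2) > x^\star\}$. If $x^\star \in (m/2, m]$, Lemma \ref{lemma: reduction to 2 opinions} applies directly and yields
\[
    \Pr\!\pa{X_1 > X_2 \mid S = m,\, X_3{=}x_3,\ldots,X_k{=}x_k,\, \WW_{1,2,\text{strict}}} - \Pr\!\pa{X_2 > X_1 \mid \cdots} \;\geq\; C_1 \min\!\pa{\sqrt{m}\,(2q-1),\, 1};
\]
if $x^\star \leq m/2$, the conditioning event holds automatically (up to boundary integer issues), and the same bound follows by applying Claim \ref{claim: monotonicity_max_binomial_inequality} at $i = \lceil m/2 \rceil$ together with Lemma \ref{lemma:from_lemma_9}. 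If instead $x^\star > m$, the conditional probability of $\WW_{1,2,\text{strict}}$ is zero and such tuples contribute nothing. Averaging the pointwise bound against the joint conditional distribution of $(S, X_3, \ldots, X_k)$ given $\WW_{1,2,\text{strict}}$ gives
\[
    D \;\geq\; C_1 \cdot \expect\!\left[\min\!\pa{\sqrt{S}\,(2q-1),\, 1} \st \WW_{1,2,\text{strict}}\right].
\]

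To extract a clean deterministic bound, I would restrict to the event $\EE := \{S \geq h(p_1+p_2)/2\}$. Since $S \sim \binomial(h, p_1+p_2)$ unconditionally, the multiplicative Chernoff bound (Lemma \ref{lemma:app:multiplicative-chernoff}) combined with the hypothesis $h p_1 \geq 18^2 \log n$ yields $\Pr(\EE) \geq 1/2$; Lemma \ref{lemma:sum_12_is_independent_maximum} then lifts this to $\Pr(\EE \mid \WW_{1,2,\text{strict}}) \geq 1/2$. On $\EE$ one has $\sqrt{S}\,(2q-1) \geq \sqrt{h(p_1+p_2)/2}\cdot (p_1-p_2)/(p_1+p_2) = (p_1-p_2)\sqrt{h}/\sqrt{2(p_1+p_2)}$, so
\[
    D \;\geq\; \frac{C_1}{2}\,\min\!\left\{\frac{(p_1-p_2)\sqrt{h}}{\sqrt{2(p_1+p_2)}},\, 1\right\},
\]
which is the claim with $C = C_1/2$.

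The main obstacle is the reduction step: one must verify that, on each conditioning slice $\{S=m,\, X_3=x_3,\ldots,X_k=x_k\}$, the pair $(X_1, X_2)$ really fits the hypothesis $Y_2 = m - Y_1$ of Lemma \ref{lemma: reduction to 2 opinions}, and that the random threshold $x^\star$ can be handled uniformly across its range (including the cases $x^\star \leq m/2$ and $x^\star > m$) without introducing losses. The monotonicity provided by Claim \ref{claim: monotonicity_max_binomial_inequality} is what makes this uniform treatment possible, and Lemma \ref{lemma:sum_12_is_independent_maximum} is what prevents the conditioning on $\WW_{1,2,\text{strict}}$ from destroying the Chernoff concentration of $S$ around $h(p_1+p_2)$.
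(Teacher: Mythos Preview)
Your proposal is correct and follows essentially the same route as the paper's proof: both reduce via Lemma~\ref{lemma:difference_maj_is_difference_comparison}, condition on $X_1+X_2=m$ (making $X_1\sim\binomial(m,p_1/(p_1+p_2))$ independent of $X_3,\ldots,X_k$), invoke Lemma~\ref{lemma: reduction to 2 opinions} slice by slice, and then use Lemma~\ref{lemma:sum_12_is_independent_maximum} together with Chernoff to control $\Pr(S\geq h(p_1+p_2)/2\mid\WW_{1,2,\text{strict}})$. The only cosmetic differences are that you condition on the full tuple $(X_3,\ldots,X_k)$ rather than just on $\max_{j\geq 3}X_j$, and you are more explicit than the paper about the boundary cases $x^\star\leq m/2$ and $x^\star\geq m$ (which the paper leaves implicit when applying Lemma~\ref{lemma: reduction to 2 opinions}).
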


\begin{proof}[Proof of \Cref{lemma:lemma_ema_with_conditioning}]
    Let
    \[
        M= \{m\in \mathbb{N} :  \Pr\left( X_1 + X_2 = m \mid \winstrict{1,2} \right)>0\} \cap \{m \geq h \cdot (p_1 + p_2)/2 \}
    \]
    and 
    \[
    L_m= \{x\in \mathbb{N} :  \Pr\left( \max_{ j \geq 3 } X_j = x \mid   X_1 + X_2 = m , \winstrict{1,2} \right)>0\}.
    \]
    We have
    \begin{align}
        \label{eq:many_law_of_total_prob}
        \nonumber
        &\Pr\left( \WW_{1} \mid \winstrict{1,2} \right) - \Pr\left( \WW_{2} \mid \winstrict{1,2} \right) \\ 
        \nonumber
        \underset{(i)}{=} \ & \Pr\left( X_1 > X_2 \mid \winstrict{1,2} \right) - \Pr\left( X_2 > X_1 \mid \winstrict{1,2} \right)\\ 
        \nonumber
        \underset{(ii)}{\geq} \ & \sum_{ m \in M} \Pr\left( X_1 + X_2 = m \mid \winstrict{1,2} \right) \left[ \Pr\left( X_1 > X_2 \mid \winstrict{1,2} , X_1 + X_2 = m \right) \right. 
        \\
        \nonumber
        & - \left. \Pr\left( X_2 > X_1 \mid \winstrict{1,2}, X_1 + X_2 = m \right) \right]  \\ 
        \nonumber
        \underset{(iii)}{=} \ & \sum_{ m \in M} \Pr\left( X_1 + X_2 = m \mid \winstrict{1,2} \right) \\
        \nonumber
        & \cdot \sum_{x \in L_m} \Pr\left( \max_{ j \geq 3 } X_j = x \mid   X_1 + X_2 = m , \winstrict{1,2} \right)\\  
        \nonumber
        & \cdot \left[ \Pr\left( X_1 > X_2 \mid \winstrict{1,2} , X_1 + X_2 = m, \max_{ j \geq 3 } X_j = x \right)  \right.
        \\
        \nonumber
        & - \left. \Pr\left( X_2 > X_1 \mid \winstrict{1,2}, X_1 + X_2 = m , \max_{ j \geq 3 } X_j = x \right) \right] \\ 
        \nonumber
        \underset{(iv)}{=} \ & \sum_{ m \in M} \Pr\left( X_1 + X_2 = m \mid \winstrict{1,2} \right) 
        \\
        \nonumber
        & \cdot \sum_{x \in L_m} \Pr\left( \max_{ j \geq 3 } X_j = x \mid   X_1 + X_2 = m , \winstrict{1,2} \right)\\  
        \nonumber
        & \cdot \left[ \Pr\left( X_1 > X_2 \mid \max\{X_1, X_2\} > x , X_1 + X_2 = m, \max_{ j \geq 3 } X_j = x \right) \right. \\
        & - \left. \Pr\left( X_2 > X_1 \mid \max\{X_1, X_2\}> x, X_1 + X_2 = m , \max_{ j \geq 3 } X_j = x \right) \right],
     \end{align}
     where in $(i)$ we used \Cref{lemma:difference_maj_is_difference_comparison}, in $(ii),(iii)$ we used the law of total probabilities, and in $(iv)$ we used that, by \Cref{def:W_1-2_definition}, 
     \begin{align*}
                    & \ \left( \winstrict{1,2},  X_1 + X_2 = m , \max_{ j \geq 3 } X_j = x \right) 
                    \\
                    = & \  \left( \max\{X_1, X_2\}> x , X_1 + X_2 = m, \max_{ j \geq 3 } X_j = x \right).
    \end{align*}
     Conditioning on $\{X_1 + X_2 = m\}$, makes $X_1, X_2$ independent from $X_3,\ldots, X_k$, and distributed as $Y_1,Y_2$, where $Y_1 \sim \binomial(m,q_1)$ with $q_1=\frac{p_1}{p_1+p_2}$ and $Y_2=m-Y_1$. We obtain, by \Cref{eq:many_law_of_total_prob}
     \begin{align}
        \label{eq:final_equation_with_conditioning}
        \nonumber
        &\Pr\left( \WW_{1} \mid \winstrict{1,2} \right) - \Pr\left( \WW_{2} \mid \winstrict{1,2} \right)
        \\
        \nonumber
         \geq \ &\sum_{ m \in M} \Pr\left( X_1 + X_2 = m \mid \winstrict{1,2} \right) 
         \\
         \nonumber
         & \cdot \sum_{x \in L_m} \Pr\left( \max_{ j \geq 3 } X_j = x \mid   X_1 + X_2 = m , \winstrict{1,2} \right)
         \\
         \nonumber
          & \cdot \left[ \Pr\left( Y_1 > Y_2 \mid \max\{Y_1, Y_2\}\geq x \right) - \Pr\left( Y_2 > Y_1 \mid  \max\{Y_1, Y_2\}\geq x \right) \right]
         \\
         \nonumber
         \underset{(i)}{\geq} \  & \sum_{ m \in M} \Pr\left( X_1 + X_2 = m \mid \winstrict{1,2} \right) 
         \\
         \nonumber
         & \cdot \sum_{x \in L_m} \Pr\left( \max_{ j \geq 3 } X_j = x \mid   X_1 + X_2 = m , \winstrict{1,2} \right)\\
         \nonumber
          & \cdot C_1 \cdot \min\pa{\sqrt{m} \pa{\frac{\delta_2}{p_1+p_2}},1}\\
         \nonumber
         \underset{(ii)}{=} \ & \sum_{ m \in M} \Pr\left( X_1 + X_2 = m \mid \winstrict{1,2} \right) \cdot C_1 \cdot \min\pa{\sqrt{m} \pa{\frac{\delta_2}{p_1+p_2}},1}\\
         \nonumber
         \underset{(iii)}{\geq} \ &  \pr{ X_1 + X_2 \geq h \cdot (p_1 + p_2)/2 \mid \winstrict{1,2}} \cdot C_1 \cdot \min\pa{ \pa{\frac{\delta_2 \sqrt{h}}{\sqrt{2(p_1+p_2)}}},1} \\
         \nonumber
         \underset{(iv)}{\geq} \ & C_1 \pa{1-\exp\pa{-\frac{1}{8}h(p_1+p_2)}}  \cdot \min\pa{ \pa{\frac{\delta_2 \sqrt{h}}{\sqrt{2(p_1+p_2)}}},1}
         \\
         \nonumber
         \underset{(v)}{\geq} \ & C_1 \pa{1-\frac{1}{n^2}}  \cdot \min\pa{ \pa{\frac{\delta_2 \sqrt{h}}{\sqrt{2(p_1+p_2)}}},1}
         \\
         \nonumber
         \geq \ & C \min\pa{ \pa{\frac{\delta_2 \sqrt{h}}{\sqrt{2(p_1+p_2)}}},1},
    \end{align}
    where in $(i)$ we used \Cref{lemma: reduction to 2 opinions}, in $(ii)$ we used the law of total probabilities, in $(iii)$ we used that by definition of $M$, $m\geq h(p_1+p_2)/2$, in $(iv)$ we used \Cref{lemma:sum_12_is_independent_maximum} and the multiplicative Chernoff bound (\Cref{lemma:app:multiplicative-chernoff} in \Cref{sec:tools}), and in $(v)$ we used that $hp_1 \geq 16 \log n$. This concludes the proof of \Cref{lemma:lemma_ema_with_conditioning}.
\end{proof}

In the next lemma, we finally remove the conditional event \(\winstrict{1,2}\) and bound $\pr{\WW_1}-\pr{\WW_2}$ from below. 
We need to use \Cref{lemma:relationship_ties_without_ties,lemma:lemma_ema_with_conditioning}, but its proof consists in just mixing in the ``right way'' previous results, hence we defer it to \Cref{sec:omitted:alltogether}.
From now on, we denote the difference \(p_1 - p_j\) by \(\delta(j)\).

\begin{lemma}
    \label{proposition:bias_hp>1} %main lemma on the bias
    Let $p_1\,h\geq C_4\,\log n$, with $C_4=18^2$. 
    If $\delta(j) \leq \sqrt{\frac{2(p_1+p_2)}{h}}$, then there exists a constant $C_5>0$ s.t.
    \[
        \Pr\left( \WW_{1} \right) - \Pr\left( \WW_{j} \right) \geq C_5 \left (  \delta(j)\sqrt{\frac{h}{p_1+p_2}}  \right )\pr{\WW_1}.% \geq \Omega\left (  \delta\sqrt{\frac{h}{k}}  \right )
    \]
    Instead, if $\delta(j)\geq \sqrt{\frac{2(p_1+p_2)}{h}}$, there exists a constant $0<C_6<1$ s.t.
    \[
        \Pr\left( \WW_{1} \right) \geq  \frac{1}{1-C_6} \Pr\left( \WW_{j} \right) .
    \]
\end{lemma}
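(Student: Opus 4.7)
The plan is to decompose $\Pr(\WW_1)-\Pr(\WW_j)$ according to whether or not the event $\winstrict{1,j}$ holds, bound the conditional part via the analog of \Cref{lemma:lemma_ema_with_conditioning} for the pair $(1,j)$, and convert the prefactor $\Pr(\winstrict{1,j})$ into $\Pr(\WW_1)$ using the near-equivalence between $\winstrict{1}$ and $\WW_1$ supplied by \Cref{lemma:relationship_ties_without_ties}.

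First, I would split
\[
    \Pr(\WW_1)-\Pr(\WW_j) = \bigl[\Pr(\WW_1,\winstrict{1,j})-\Pr(\WW_j,\winstrict{1,j})\bigr] + \bigl[\Pr(\WW_1,\winstrict{1,j}^c)-\Pr(\WW_j,\winstrict{1,j}^c)\bigr].
\]
Since on $\winstrict{1,j}$ the winner is the unique strict maximum, $\WW_1\cap\winstrict{1,j}=\winstrict{1}$ and $\WW_j\cap\winstrict{1,j}=\winstrict{j}$, so the first bracket equals $\Pr(\winstrict{1,j})\bigl[\Pr(\WW_1\mid\winstrict{1,j})-\Pr(\WW_j\mid\winstrict{1,j})\bigr]$. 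The proof of \Cref{lemma:lemma_ema_with_conditioning} goes through verbatim with opinion $2$ replaced by opinion $j$---the binomial reduction only uses $q=p_1/(p_1+p_j)>1/2$ and \Cref{lemma:sum_12_is_independent_maximum} admits the same coupling proof---so this bracket is at least $\Pr(\winstrict{1,j})\cdot C\min\bigl(\delta(j)\sqrt{h/(2(p_1+p_j))},\,1\bigr)$ for some constant $C>0$.

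The second bracket simplifies to $[\Pr(\WW_1)-\Pr(\winstrict{1})]-[\Pr(\WW_j)-\Pr(\winstrict{j})]$, i.e., the gap between the tie-contributions to $\WW_1$ and to $\WW_j$; I would show it is nonnegative by a multinomial relabeling argument. For each maximal tie set $S\ni j$ with $S\not\ni 1$ and $|S|\geq 2$, pair it with $S'=(S\setminus\{j\})\cup\{1\}$: since $p_1\geq p_j$, swapping coordinates $1$ and $j$ in the multinomial multiplies each probability mass by $(p_1/p_j)^{m-n_1}\geq 1$ (where $m$ is the winning count), and this gives a bijection between configurations with $\argmax=S$ and those with $\argmax=S'$, so $\Pr(\argmax=S')\geq\Pr(\argmax=S)$; tie sets containing both $\{1,j\}$ contribute $0$ to the bracket (they enter $\WW_1$ and $\WW_j$ with the same weight $1/|S|$), and tie sets avoiding both contribute $0$ to either term. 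I would then bound the prefactor $\Pr(\winstrict{1,j})\geq\Pr(\winstrict{1})\geq\Pr(\WW_1)/6$, using $\Pr(\WW_1)\leq\Pr(\winties{1})\leq 6\Pr(\winstrict{1})$ from \Cref{lemma:relationship_ties_without_ties}.

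Combining yields $\Pr(\WW_1)-\Pr(\WW_j)\geq(C/6)\Pr(\WW_1)\cdot\min\bigl(\delta(j)\sqrt{h/(2(p_1+p_j))},\,1\bigr)$, at which point a case split on $\delta(j)$ closes the argument. In Case~1 ($\delta(j)\leq\sqrt{2(p_1+p_2)/h}$), since $p_1+p_j\leq p_1+p_2$ the min is at least $\delta(j)\sqrt{h/(2(p_1+p_2))}$ regardless of which branch is active, yielding the conclusion with $C_5=C/(6\sqrt{2})$; in Case~2 ($\delta(j)\geq\sqrt{2(p_1+p_2)/h}\geq\sqrt{2(p_1+p_j)/h}$), the min equals $1$, giving $\Pr(\WW_1)-\Pr(\WW_j)\geq(C/6)\Pr(\WW_1)$, i.e., $\Pr(\WW_1)\geq(1-C/6)^{-1}\Pr(\WW_j)$ with $C_6=C/6<1$. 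The hard part will be verifying the nonnegativity of the second bracket: while the intuition---opinion $1$ is more likely than opinion $j$ to appear in a winning tie because $p_1\geq p_j$---is clear, turning it into a proof requires a careful bijection on tie configurations combined with a pointwise comparison of multinomial masses; once that monotonicity is in hand, the rest is a clean pipeline combining \Cref{lemma:lemma_ema_with_conditioning} with \Cref{lemma:relationship_ties_without_ties}.
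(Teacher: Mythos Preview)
Your proposal is correct and follows essentially the same pipeline as the paper: condition on the strict-winner event, apply (the analog of) \Cref{lemma:lemma_ema_with_conditioning}, and convert the prefactor to $\Pr(\WW_1)$ via \Cref{lemma:relationship_ties_without_ties}. There are two minor points of divergence. First, the paper begins by reducing to $j=2$ via $\Pr(\WW_j)\le\Pr(\WW_2)$ and then works only with $\winstrict{1,2}$, so it can invoke \Cref{lemma:lemma_ema_with_conditioning} as stated; you instead work directly with $\winstrict{1,j}$ and re-derive the conditional bound for the pair $(1,j)$---arguably cleaner, since the paper's reduction formally delivers $\delta(2)$ rather than $\delta(j)$ in the first-case bound. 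Second, the paper simply writes the inequality $\Pr(\WW_1)-\Pr(\WW_2)\ge\Pr(\winstrict{1,2})\bigl[\Pr(\WW_1\mid\winstrict{1,2})-\Pr(\WW_2\mid\winstrict{1,2})\bigr]$ without justifying that the complement contribution is nonnegative, whereas you supply a coordinate-swap bijection to establish it; your swap argument is correct (the bijection preserves multinomial coefficients and multiplies the mass by $(p_1/p_j)^{m-x_1}\ge 1$) and makes explicit a step the paper leaves implicit.
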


    Given a configuration $\pa{p_1,\dots,p_k}$,
    we denote by \(p_j'\) the probability associated to opinion \(j\) at the next round.
    Fix an opinion $j\neq 1$. Let $\delta(j)'$ be the bias between opinion 1 and opinion j at the next round, i.e. $p_1' - p_j'$.

    Before proceeding in the analysis presentation, let us comment \Cref{proposition:bias_hp>1}. Note that $\Pr\left( \WW_{j} \right)$ is the expected value of $p'_j$. In the case the bias between opinion 1 and opinion $j$ is more than $\sqrt{\frac{2(p_1+p_2)}{h}}$, at the next round we have $p'_1 \geq (1+C) p'_j$ in expectation, for some constant $C>0$. In the case the bias is smaller than $\sqrt{\frac{2(p_1+p_2)}{h}}$, by the fact $h= \Omega\pa{p_1 \log n}$ and $\pr{\WW_1} = \Omega\pa{p_1 + p_2}$, \Cref{proposition:bias_hp>1} implies that at the next round the expected bias grows of a $\Omega\pa{\log n}$ factor. Thanks to standard concentration bounds (Chernoff Bound and Bernstein Inequality) we show that w.h.p. the new bias is sufficiently close to its expectation.

    Now we all have all ingredients to prove \Cref{thm:prelim:main}. Next theorem is just a reformulation of \Cref{thm:prelim:main}.

    \begin{theorem}
    \label{thm:hp_1 large}
    Consider an initial configuration $\pa{p_1,\dots,p_k}$ s.t. $p_1\geq C_7\frac{\log n}{n}$, and $p_1-p_j \geq C_8\sqrt{\frac{p_1}{n}}$ for all $j\geq 2$. Then the \hmaj dynamics  with $h p_1 >  C_4 \log n$ converges in time \(O(\log n)\) to opinion 1, w.h.p., for some constants $C_4, C_7,C_8>0$.
\end{theorem}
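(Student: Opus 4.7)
The plan is to iterate Proposition~\ref{proposition:bias_hp>1} together with Bernstein's inequality, showing that with high probability, for every $j\ge 2$ the gap $p_1^{(t)}-p_j^{(t)}$ grows by at least a constant multiplicative factor each round, until within $O(\log n)$ rounds opinion~$1$ absorbs the entire population. Concretely, I would combine Proposition~\ref{proposition:bias_hp>1} with the lower bound $\pr{\WW_1}\ge p_1/3$ from \Cref{lemma:bound_W_1-2} and with inequality~\eqref{eq:prelim:expected-bias}. In the small-bias branch, where $p_1-p_j\le\sqrt{2(p_1+p_2)/h}$, the inequalities $p_1+p_2\le 2p_1$ and $hp_1\ge C_4\log n$ then give
\[
    \expect[p_1^{(t+1)}-p_j^{(t+1)}\mid\conf_t]\ge \pr{\WW_1}-\pr{\WW_j}\ge \tfrac{C_5}{3}\sqrt{\tfrac{hp_1}{2}}\,(p_1-p_j)=\Omega(\sqrt{\log n})\,(p_1-p_j),
\]
so the conditional expected gap grows by a factor $\Omega(\sqrt{\log n})$. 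In the large-bias branch the proposition instead yields $\pr{\WW_j}\le(1-C_6)\pr{\WW_1}$, which (once transferred to empirical frequencies) causes each non-plurality opinion to lose a constant multiplicative fraction of its mass per round.

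To promote these expectations to high-probability guarantees, I would view $n(p_1^{(t+1)}-p_j^{(t+1)})$ as a sum of $n$ independent $\{-1,0,1\}$-valued random variables whose per-term variance is $O(\pr{\WW_1})$, and apply Bernstein's inequality (\Cref{lemma:bernstein_inequality}) to obtain, with probability $1-n^{-\Omega(1)}$,
\[
    p_1^{(t+1)}-p_j^{(t+1)}\ge \expect[p_1^{(t+1)}-p_j^{(t+1)}\mid\conf_t]-O\bigl(\sqrt{\pr{\WW_1}\log n/n}\bigr).
\]
Combining this with $\pr{\WW_1}\ge p_1^{(t)}/3$ and the inductive assumption $p_1^{(t)}-p_j^{(t)}\ge C_8\sqrt{p_1^{(t)}/n}$ shows that, for $C_8$ taken large relative to $C_4,C_5$, the Bernstein deviation term is absorbable into the $\Omega(\sqrt{\log n})(p_1^{(t)}-p_j^{(t)})$ expected growth, yielding $p_1^{(t+1)}-p_j^{(t+1)}\ge c\,(p_1^{(t)}-p_j^{(t)})$ for some constant $c>1$. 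A union bound over $j\in\{2,\dots,k\}$ and over $O(\log n)$ rounds keeps the cumulative failure probability at $n^{-\Omega(1)}$. Hence the gap escapes the small-bias regime within $O(\log n)$ rounds; in the large-bias regime the ratio $p_j^{(t+1)}/p_1^{(t+1)}$ contracts by a constant multiplicative factor per round, so after $O(\log n)$ further rounds $p_1^{(t)}$ is close enough to $1$ that a single application of the multiplicative Chernoff bound (\Cref{lemma:app:multiplicative-chernoff}) certifies consensus on opinion~$1$.

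The main obstacle is maintaining the hypotheses of Proposition~\ref{proposition:bias_hp>1} along the entire $O(\log n)$-round trajectory, in particular that $hp_1^{(t)}\ge C_4\log n$ continues to hold. The fact that opinion~$1$ stays the unique plurality follows from the gap-growth step above. To control $p_1^{(t)}$ itself, I would use that Bernstein also provides $|p_1^{(t+1)}-\pr{\WW_1}|=O(\sqrt{\pr{\WW_1}\log n/n})$ w.h.p., combined with $\pr{\WW_1}\ge p_1^{(t)}/3$, so that $p_1^{(t)}$ remains within a constant factor of $p_1^{(0)}\ge C_7\log n/n$ across $O(\log n)$ rounds, provided $C_7$ is taken sufficiently large. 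Together with \Cref{claim:weak_opinion_lose_whp}, which already guarantees that rare opinions vanish after the first round, this ensures that the conditional estimates of Proposition~\ref{proposition:bias_hp>1} remain applicable throughout and that the cumulative union-bound failure probability stays $n^{-\Omega(1)}$.
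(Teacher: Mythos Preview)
Your overall strategy matches the paper's: split according to the two branches of Proposition~\ref{proposition:bias_hp>1}, use Bernstein's inequality on the $\{-1,0,1\}$-valued per-node indicators to pass from the expected gap growth to a w.h.p.\ statement, and union-bound over opinions and rounds. The paper proceeds exactly this way, so most of what you wrote is on target.

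There is, however, one genuine gap: your argument for preserving the hypothesis $hp_1^{(t)}\ge C_4\log n$ along the trajectory does not work. From $\pr{\WW_1}\ge p_1^{(t)}/3$ and the Bernstein fluctuation bound you only get $p_1^{(t+1)}\ge p_1^{(t)}/3-O(\sqrt{\pr{\WW_1}\log n/n})$, which permits $p_1$ to shrink by a constant factor \emph{every} round; over $\Theta(\log n)$ rounds this is a polynomial-in-$n$ loss, and no choice of the constant $C_7$ absorbs it. The paper closes this gap with a separate combinatorial step, \Cref{claim:all_bias_increase_p1_increase}: it shows that once, for every $j\neq 1$, either $\delta(j)'>\delta(j)$, or $p_j'/p_1'<p_j/p_1$, or $p_j'=0$, then necessarily $p_1'>p_1$. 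This deterministic monotonicity of $p_1$ (conditional on the per-round high-probability events) is what keeps both hypotheses $p_1\ge C_7\log n/n$ and $hp_1\ge C_4\log n$ intact throughout, and it is not implied by the one-sided bound $\pr{\WW_1}\ge p_1/3$ you invoke. Relatedly, to re-enter the induction you also need $\delta(j)'\ge C_8\sqrt{p_1'/n}$ rather than $\ge c\,\delta(j)$; the paper derives this by comparing the $\Omega(\sqrt{\log n})$ growth of $\delta(j)$ against the concentration of $p_1'$ around $\pr{\WW_1}$ (see the chain leading to \eqref{eq:bias_increase}), which again requires controlling $p_1'$ and not just a lower bound on $\pr{\WW_1}$.
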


    The proof relies on first quantifying the increase in the bias between the first opinion and other opinions after one round. 
    Then it iterates the same reasoning until consensus (for the complete proof, see \Cref{sec:omitted:alltogether}). 
    For the first point, we use concentration bounds on the results obtained in \Cref{proposition:bias_hp>1}. 
    In particular, we distinguish three sub-cases based on the size of the bias between two opinions. 
    For the second point, we need to show that the initial hypotheses are still satisfied after one round to ensure that we can iterate.
    
    In the next lemma, we show that if $p_1 \geq (1+C) p_j$ for some constant $C>0$, opinion $j$ disappears at the next round, by simply applying the Chernoff bound.

\begin{lemma}[Opinion $j$-disappear at the next round]
    \label{lemma:disappear_one_step}
    Let $hp_1 \geq C_4 \log n$.
    If $\delta(j) \geq \pa{1-\frac{1}{1+C_6}}p_1$ for the constant $C_6$ defined in \Cref{proposition:bias_hp>1}, with probability at least $1-\frac{1}{n^4}$ opinion $j$ will disappear at the next round
\end{lemma}

The next Lemma shows that if the bias is $\Omega\pa{\sqrt{p_1/h}}$,  then the new bias will fall within the hypothesis of \Cref{lemma:disappear_one_step} in one step. Therefore, we conclude that, in two steps, opinion $j$ will disappear if the bias is large enough. The proof relies on the results obtained in \Cref{proposition:bias_hp>1}, combined with the Chernoff bound.

\begin{lemma}[Opinion $j$-disappear in two rounds]
    \label{lemma:disappear_two_steps}
    Let $hp_1 \geq C_4 \log n$.
   If $p_1\geq C_7\frac{\log n}{n}$ and $\sqrt{\frac{2(p_1+p_2)}{h}} \leq \delta(j) \leq \pa{1 - \frac{1}{1+C_6}}p_1$ for the constant $C_6>0$ defined in \Cref{proposition:bias_hp>1} and for a constant $C_7>0$, we have that
    \[
        \pr{\delta(j)' \geq \pa{1-\frac{1}{1+C_6}}p_1'} \geq 1 - \frac{2}{n^4}.
    \]
\end{lemma}

The next Lemma addresses the minimum possible bias. We show that, until we fall back into the hypotheses of \Cref{lemma:disappear_one_step,lemma:disappear_two_steps}, the bias grows exponentially. The proof is more complex than the previous two lemmas because it requires bounding the second moment of the bias to apply the Bernstein inequality.

\begin{lemma}[Bias exponential growth]
    \label{lemma:bias_exponential_growth}
    Let $hp_1 \geq C_4 \log n$.
     If $p_1\geq C_7\frac{\log n}{n}$ and $C_8\sqrt{\frac{p_1}{n}} \leq \delta(j) \leq \sqrt{\frac{2(p_1+p_2)}{h}}$ for the constant $C_6>0$ defined in \Cref{proposition:bias_hp>1} and for some constants $C_7,C_8>0$, we have that
    \[
        \pr{\delta(j)' \geq e \, \delta(j)}\geq 1-\frac{1}{n^4}
    \]
\end{lemma}

The next Lemma ensures that the hypothesis of \Cref{thm:prelim:main} on the bias is always satisfied so that we can iterate \Cref{lemma:bias_exponential_growth}.

\begin{lemma}[The new bias satisfies the initial hypothesis]
    \label{lemma:new_bias_hypothesis}
    Let $\delta(j)\geq\sqrt{\frac{p_1}{n}}$, and a constant $C_8>0$. For $n$ large enough we have
    \[
        \pr{\delta(j)' \geq C \sqrt{\frac{p_1'}{n}}}\geq 1-\frac{2}{n^4}.
    \]
\end{lemma}

%\nnote{We need $C_6<2$. $C_6$ is introduced by \Cref{proposition:bias_hp>1}, as product of the constant in \Cref{lemma: reduction to 2 opinions}, $1-n^{-2}$, and $c_{10}$ introduced in \Cref{lemma:bound_M_1-2} . The constant in \Cref{proposition:bias_hp>1} converges to $\sqrt{\frac{2}{\pi}} e^{-1/2}\leq \frac{1}{2} $, and probably can't be improved longer since it comes from Ema and Fraigniaud's paper. Then our goal is to show that $c_{10} < 4$.}

\section*{Acknowledgments}

This work was partially supported by the MUR (Italy) Department of Excellence 2023 - 2027
for GSSI, and the AID INRIA-DGA project n°2023000872 “BioSwarm”. The authors are grateful to thank Frédéric Giroire for helpful discussions regarding this project.

\bibliography{biblio}

\begin{thebibliography}{10}

\bibitem{AmirABBHKL23}
Talley Amir, James Aspnes, Petra Berenbrink, Felix Biermeier, Christopher Hahn,
  Dominik Kaaser, and John Lazarsfeld.
\newblock Fast convergence of k-opinion undecided state dynamics in the
  population protocol model.
\newblock In Rotem Oshman, Alexandre Nolin, Magn{\'{u}}s~M. Halld{\'{o}}rsson,
  and Alkida Balliu, editors, {\em Proceedings of the 2023 {ACM} Symposium on
  Principles of Distributed Computing, {PODC} 2023, Orlando, FL, USA, June
  19-23, 2023}, pages 13--23. {ACM}, 2023.
\newblock \href {https://doi.org/10.1145/3583668.3594589}
  {\path{doi:10.1145/3583668.3594589}}.

\bibitem{AngluinAE08}
Dana Angluin, James Aspnes, and David Eisenstat.
\newblock A simple population protocol for fast robust approximate majority.
\newblock {\em Distributed Comput.}, 21(2):87--102, 2008.
\newblock URL: \url{https://doi.org/10.1007/s00446-008-0059-z}, \href
  {https://doi.org/10.1007/S00446-008-0059-Z}
  {\path{doi:10.1007/S00446-008-0059-Z}}.

\bibitem{BankhamerBBEHKK22}
Gregor Bankhamer, Petra Berenbrink, Felix Biermeier, Robert Els{\"{a}}sser,
  Hamed Hosseinpour, Dominik Kaaser, and Peter Kling.
\newblock Fast consensus via the unconstrained undecided state dynamics.
\newblock In Joseph~(Seffi) Naor and Niv Buchbinder, editors, {\em Proceedings
  of the 2022 {ACM-SIAM} Symposium on Discrete Algorithms, {SODA} 2022, Virtual
  Conference / Alexandria, VA, USA, January 9 - 12, 2022}, pages 3417--3429.
  {SIAM}, 2022.
\newblock \href {https://doi.org/10.1137/1.9781611977073.135}
  {\path{doi:10.1137/1.9781611977073.135}}.

\bibitem{BecchettiCNPS15}
Luca Becchetti, Andrea Clementi, Emanuele Natale, Francesco Pasquale, and
  Riccardo Silvestri.
\newblock Plurality consensus in the gossip model.
\newblock In Piotr Indyk, editor, {\em Proceedings of the Twenty-Sixth Annual
  {ACM-SIAM} Symposium on Discrete Algorithms, {SODA} 2015, San Diego, CA, USA,
  January 4-6, 2015}, pages 371--390. {SIAM}, 2015.
\newblock \href {https://doi.org/10.1137/1.9781611973730.27}
  {\path{doi:10.1137/1.9781611973730.27}}.

\bibitem{becchettiSimpleDynamicsPlurality2017}
Luca Becchetti, Andrea Clementi, Emanuele Natale, Francesco Pasquale, Riccardo
  Silvestri, and Luca Trevisan.
\newblock Simple dynamics for plurality consensus.
\newblock {\em Distributed Computing}, 30(4):293--306, August 2017.
\newblock \href {https://doi.org/10.1007/s00446-016-0289-4}
  {\path{doi:10.1007/s00446-016-0289-4}}.

\bibitem{BecchettiCNPT16}
Luca Becchetti, Andrea Clementi, Emanuele Natale, Francesco Pasquale, and Luca
  Trevisan.
\newblock Stabilizing consensus with many opinions.
\newblock In Robert Krauthgamer, editor, {\em Proceedings of the Twenty-Seventh
  Annual {ACM-SIAM} Symposium on Discrete Algorithms, {SODA} 2016, Arlington,
  VA, USA, January 10-12, 2016}, pages 620--635. {SIAM}, 2016.
\newblock URL: \url{https://doi.org/10.1137/1.9781611974331.ch46}, \href
  {https://doi.org/10.1137/1.9781611974331.CH46}
  {\path{doi:10.1137/1.9781611974331.CH46}}.

\bibitem{BecchettiCPTVZ24}
Luca Becchetti, Andrea Clementi, Francesco Pasquale, Luca Trevisan, Robin
  Vacus, and Isabella Ziccardi.
\newblock The minority dynamics and the power of synchronicity.
\newblock In David~P. Woodruff, editor, {\em Proceedings of the 2024 {ACM-SIAM}
  Symposium on Discrete Algorithms, {SODA} 2024, Alexandria, VA, USA, January
  7-10, 2024}, pages 4155--4176. {SIAM}, 2024.
\newblock \href {https://doi.org/10.1137/1.9781611977912.144}
  {\path{doi:10.1137/1.9781611977912.144}}.

\bibitem{berenbrink2024}
Petra Berenbrink, Felix Biermeier, and Christopher Hahn.
\newblock Undecided state dynamics with stubborn agents.
\newblock {\em CoRR}, abs/2406.07335, 2024.
\newblock URL: \url{https://doi.org/10.48550/arXiv.2406.07335}, \href
  {https://arxiv.org/abs/2406.07335} {\path{arXiv:2406.07335}}, \href
  {https://doi.org/10.48550/ARXIV.2406.07335}
  {\path{doi:10.48550/ARXIV.2406.07335}}.

\bibitem{BerenbrinkCEKMN17}
Petra Berenbrink, Andrea Clementi, Robert Els{\"{a}}sser, Peter Kling, Frederik
  Mallmann{-}Trenn, and Emanuele Natale.
\newblock Ignore or comply?: On breaking symmetry in consensus.
\newblock In Elad~Michael Schiller and Alexander~A. Schwarzmann, editors, {\em
  Proceedings of the {ACM} Symposium on Principles of Distributed Computing,
  {PODC} 2017, Washington, DC, USA, July 25-27, 2017}, pages 335--344. {ACM},
  2017.
\newblock \href {https://doi.org/10.1145/3087801.3087817}
  {\path{doi:10.1145/3087801.3087817}}.

\bibitem{BerenbrinkCGHKR22}
Petra Berenbrink, Amin Coja{-}Oghlan, Oliver Gebhard, Max Hahn{-}Klimroth,
  Dominik Kaaser, and Malin Rau.
\newblock On the hierarchy of distributed majority protocols.
\newblock In Eshcar Hillel, Roberto Palmieri, and Etienne Rivi{\`{e}}re,
  editors, {\em 26th International Conference on Principles of Distributed
  Systems, {OPODIS} 2022, December 13-15, 2022, Brussels, Belgium}, volume 253
  of {\em LIPIcs}, pages 23:1--23:19. Schloss Dagstuhl - Leibniz-Zentrum
  f{\"{u}}r Informatik, 2022.
\newblock URL: \url{https://doi.org/10.4230/LIPIcs.OPODIS.2022.23}, \href
  {https://doi.org/10.4230/LIPICS.OPODIS.2022.23}
  {\path{doi:10.4230/LIPICS.OPODIS.2022.23}}.

\bibitem{ClementiDGN21}
Andrea Clementi, Francesco D'Amore, George Giakkoupis, and Emanuele Natale.
\newblock Search via {P}arallel {L}{\'{e}}vy {W}alks on {Z2}.
\newblock In Avery Miller, Keren Censor{-}Hillel, and Janne~H. Korhonen,
  editors, {\em {PODC} '21: {ACM} Symposium on Principles of Distributed
  Computing, Virtual Event, Italy, July 26-30, 2021}, pages 81--91. {ACM},
  2021.
\newblock \href {https://doi.org/10.1145/3465084.3467921}
  {\path{doi:10.1145/3465084.3467921}}.

\bibitem{ClementiGGNPS18}
Andrea Clementi, Mohsen Ghaffari, Luciano Gual{\`{a}}, Emanuele Natale,
  Francesco Pasquale, and Giacomo Scornavacca.
\newblock A tight analysis of the parallel undecided-state dynamics with two
  colors.
\newblock In Igor Potapov, Paul~G. Spirakis, and James Worrell, editors, {\em
  43rd International Symposium on Mathematical Foundations of Computer Science,
  {MFCS} 2018, August 27-31, 2018, Liverpool, {UK}}, volume 117 of {\em
  LIPIcs}, pages 28:1--28:15. Schloss Dagstuhl - Leibniz-Zentrum f{\"{u}}r
  Informatik, 2018.
\newblock URL: \url{https://doi.org/10.4230/LIPIcs.MFCS.2018.28}, \href
  {https://doi.org/10.4230/LIPICS.MFCS.2018.28}
  {\path{doi:10.4230/LIPICS.MFCS.2018.28}}.

\bibitem{CooperMRSS25}
Colin Cooper, Frederik Mallmann{-}Trenn, Tomasz Radzik, Nobutaka Shimizu, and
  Takeharu Shiraga.
\newblock Asynchronous 3-majority dynamics with many opinions.
\newblock In Yossi Azar and Debmalya Panigrahi, editors, {\em Proceedings of
  the 2025 Annual {ACM-SIAM} Symposium on Discrete Algorithms, {SODA} 2025, New
  Orleans, LA, USA, January 12-15, 2025}, pages 4095--4131. {SIAM}, 2025.
\newblock \href {https://doi.org/10.1137/1.9781611978322.140}
  {\path{doi:10.1137/1.9781611978322.140}}.

\bibitem{CrucianiMQR23}
Emilio Cruciani, Hlafo~Alfie Mimun, Matteo Quattropani, and Sara Rizzo.
\newblock Phase transition of the k-majority dynamics in biased communication
  models.
\newblock {\em Distributed Comput.}, 36(2):107--135, 2023.
\newblock URL: \url{https://doi.org/10.1007/s00446-023-00444-2}, \href
  {https://doi.org/10.1007/S00446-023-00444-2}
  {\path{doi:10.1007/S00446-023-00444-2}}.

\bibitem{DAmoreCN20}
Francesco D'Amore, Andrea Clementi, and Emanuele Natale.
\newblock Phase transition of a non-linear opinion dynamics with noisy
  interactions - (extended abstract).
\newblock In Andr{\'{e}}a~Werneck Richa and Christian Scheideler, editors, {\em
  Structural Information and Communication Complexity - 27th International
  Colloquium, {SIROCCO} 2020, Paderborn, Germany, June 29 - July 1, 2020,
  Proceedings}, volume 12156 of {\em Lecture Notes in Computer Science}, pages
  255--272. Springer, 2020.
\newblock \href {https://doi.org/10.1007/978-3-030-54921-3\_15}
  {\path{doi:10.1007/978-3-030-54921-3\_15}}.

\bibitem{DAmoreCN22}
Francesco D'Amore, Andrea Clementi, and Emanuele Natale.
\newblock Phase transition of a nonlinear opinion dynamics with noisy
  interactions.
\newblock {\em Swarm Intell.}, 16(4):261--304, 2022.
\newblock URL: \url{https://doi.org/10.1007/s11721-022-00217-w}, \href
  {https://doi.org/10.1007/S11721-022-00217-W}
  {\path{doi:10.1007/S11721-022-00217-W}}.

\bibitem{DAmoreZ22}
Francesco D'Amore and Isabella Ziccardi.
\newblock Phase transition of the 3-majority dynamics with uniform
  communication noise.
\newblock In Merav Parter, editor, {\em Structural Information and
  Communication Complexity - 29th International Colloquium, {SIROCCO} 2022,
  Paderborn, Germany, June 27-29, 2022, Proceedings}, volume 13298 of {\em
  Lecture Notes in Computer Science}, pages 98--115. Springer, 2022.
\newblock \href {https://doi.org/10.1007/978-3-031-09993-9\_6}
  {\path{doi:10.1007/978-3-031-09993-9\_6}}.

\bibitem{dAmoreZ25}
Francesco D'Amore and Isabella Ziccardi.
\newblock Phase transition of the 3-majority opinion dynamics with noisy
  interactions.
\newblock {\em Theor. Comput. Sci.}, 1028:115030, 2025.
\newblock URL: \url{https://doi.org/10.1016/j.tcs.2024.115030}, \href
  {https://doi.org/10.1016/J.TCS.2024.115030}
  {\path{doi:10.1016/J.TCS.2024.115030}}.

\bibitem{DArchivioV24}
Niccol{\`{o}} D'Archivio and Robin Vacus.
\newblock On the limits of information spread by memory-less agents.
\newblock In Dan Alistarh, editor, {\em 38th International Symposium on
  Distributed Computing, {DISC} 2024, October 28 to November 1, 2024, Madrid,
  Spain}, volume 319 of {\em LIPIcs}, pages 18:1--18:21. Schloss Dagstuhl -
  Leibniz-Zentrum f{\"{u}}r Informatik, 2024.
\newblock URL: \url{https://doi.org/10.4230/LIPIcs.DISC.2024.18}, \href
  {https://doi.org/10.4230/LIPICS.DISC.2024.18}
  {\path{doi:10.4230/LIPICS.DISC.2024.18}}.

\bibitem{dubhashi2009}
Devdatt~P. Dubhashi and Alessandro Panconesi.
\newblock {\em Concentration of Measure for the Analysis of Randomized
  Algorithms}.
\newblock Cambridge University Press, 2009.
\newblock URL: \url{http://www.cambridge.org/gb/knowledge/isbn/item2327542/}.

\bibitem{FeinermanHK14}
Ofer Feinerman, Bernhard Haeupler, and Amos Korman.
\newblock Breathe before speaking: efficient information dissemination despite
  noisy, limited and anonymous communication.
\newblock In Magn{\'{u}}s~M. Halld{\'{o}}rsson and Shlomi Dolev, editors, {\em
  {ACM} Symposium on Principles of Distributed Computing, {PODC} '14, Paris,
  France, July 15-18, 2014}, pages 114--123. {ACM}, 2014.
\newblock \href {https://doi.org/10.1145/2611462.2611469}
  {\path{doi:10.1145/2611462.2611469}}.

\bibitem{FeinermanK17}
Ofer Feinerman and Amos Korman.
\newblock The {ANTS} problem.
\newblock {\em Distributed Comput.}, 30(3):149--168, 2017.
\newblock URL: \url{https://doi.org/10.1007/s00446-016-0285-8}, \href
  {https://doi.org/10.1007/S00446-016-0285-8}
  {\path{doi:10.1007/S00446-016-0285-8}}.

\bibitem{FraigniaudKR16}
Pierre Fraigniaud, Amos Korman, and Yoav Rodeh.
\newblock Parallel exhaustive search without coordination.
\newblock In Daniel Wichs and Yishay Mansour, editors, {\em Proceedings of the
  48th Annual {ACM} {SIGACT} Symposium on Theory of Computing, {STOC} 2016,
  Cambridge, MA, USA, June 18-21, 2016}, pages 312--323. {ACM}, 2016.
\newblock \href {https://doi.org/10.1145/2897518.2897541}
  {\path{doi:10.1145/2897518.2897541}}.

\bibitem{FraigniaudN19}
Pierre Fraigniaud and Emanuele Natale.
\newblock Noisy rumor spreading and plurality consensus.
\newblock {\em Distributed Computing}, 32(4):257--276, 2019.
\newblock URL: \url{https://doi.org/10.1007/s00446-018-0335-5}, \href
  {https://doi.org/10.1007/S00446-018-0335-5}
  {\path{doi:10.1007/S00446-018-0335-5}}.

\bibitem{FuggerNR24}
Matthias F{\"{u}}gger, Thomas Nowak, and Joel Rybicki.
\newblock Majority consensus thresholds in competitive lotka-volterra
  populations.
\newblock In Ran Gelles, Dennis Olivetti, and Petr Kuznetsov, editors, {\em
  Proceedings of the 43rd {ACM} Symposium on Principles of Distributed
  Computing, {PODC} 2024, Nantes, France, June 17-21, 2024}, pages 76--86.
  {ACM}, 2024.
\newblock \href {https://doi.org/10.1145/3662158.3662823}
  {\path{doi:10.1145/3662158.3662823}}.

\bibitem{GhaffariL18}
Mohsen Ghaffari and Johannes Lengler.
\newblock Nearly-tight analysis for 2-choice and 3-majority consensus dynamics.
\newblock In Calvin Newport and Idit Keidar, editors, {\em Proceedings of the
  2018 {ACM} Symposium on Principles of Distributed Computing, {PODC} 2018,
  Egham, United Kingdom, July 23-27, 2018}, pages 305--313. {ACM}, 2018.
\newblock URL: \url{https://dl.acm.org/citation.cfm?id=3212738}.

\bibitem{GiakkoupisTZ24}
George Giakkoupis, Volker Turau, and Isabella Ziccardi.
\newblock Self-stabilizing {MIS} computation in the beeping model.
\newblock In Dan Alistarh, editor, {\em 38th International Symposium on
  Distributed Computing, {DISC} 2024, October 28 to November 1, 2024, Madrid,
  Spain}, volume 319 of {\em LIPIcs}, pages 28:1--28:21. Schloss Dagstuhl -
  Leibniz-Zentrum f{\"{u}}r Informatik, 2024.
\newblock URL: \url{https://doi.org/10.4230/LIPIcs.DISC.2024.28}, \href
  {https://doi.org/10.4230/LIPICS.DISC.2024.28}
  {\path{doi:10.4230/LIPICS.DISC.2024.28}}.

\bibitem{GiakkoupisZ23}
George Giakkoupis and Isabella Ziccardi.
\newblock Distributed self-stabilizing {MIS} with few states and weak
  communication.
\newblock In Rotem Oshman, Alexandre Nolin, Magn{\'{u}}s~M. Halld{\'{o}}rsson,
  and Alkida Balliu, editors, {\em Proceedings of the 2023 {ACM} Symposium on
  Principles of Distributed Computing, {PODC} 2023, Orlando, FL, USA, June
  19-23, 2023}, pages 310--320. {ACM}, 2023.
\newblock \href {https://doi.org/10.1145/3583668.3594581}
  {\path{doi:10.1145/3583668.3594581}}.

\bibitem{210018}
Immanuel~Weihnachten (https://mathoverflow.net/users/23351/immanuel
  weihnachten).
\newblock Lower bound on the probability of guessing the mode in a small
  multinomial sample.
\newblock MathOverflow.
\newblock URL:https://mathoverflow.net/q/210018 (version: 2015-06-24).
\newblock URL: \url{https://mathoverflow.net/q/210018}, \href
  {https://arxiv.org/abs/https://mathoverflow.net/q/210018}
  {\path{arXiv:https://mathoverflow.net/q/210018}}.

\bibitem{KormanV23}
Amos Korman and Robin Vacus.
\newblock Distributed alignment processes with samples of group average.
\newblock {\em {IEEE} Trans. Control. Netw. Syst.}, 10(2):960--971, 2023.
\newblock \href {https://doi.org/10.1109/TCNS.2022.3212640}
  {\path{doi:10.1109/TCNS.2022.3212640}}.

\bibitem{LesfariGP22}
Hicham Lesfari, Fr{\'{e}}d{\'{e}}ric Giroire, and St{\'{e}}phane
  P{\'{e}}rennes.
\newblock Biased majority opinion dynamics: Exploiting graph k-domination.
\newblock In Luc~De Raedt, editor, {\em Proceedings of the Thirty-First
  International Joint Conference on Artificial Intelligence, {IJCAI} 2022,
  Vienna, Austria, 23-29 July 2022}, pages 377--383. ijcai.org, 2022.
\newblock URL: \url{https://doi.org/10.24963/ijcai.2022/54}, \href
  {https://doi.org/10.24963/IJCAI.2022/54} {\path{doi:10.24963/IJCAI.2022/54}}.

\bibitem{mitzenmacher2005}
Michael Mitzenmacher and Eli Upfal.
\newblock {\em Probability and Computing: Randomized Algorithms and
  Probabilistic Analysis}.
\newblock Cambridge University Press, 2005.
\newblock \href {https://doi.org/10.1017/CBO9780511813603}
  {\path{doi:10.1017/CBO9780511813603}}.

\bibitem{Shimizu2025}
Nobutaka Shimizu and Takeharu Shiraga.
\newblock 3-majority and 2-choices with many opinions.
\newblock {\em CoRR}, abs/2503.02426, 2025.
\newblock To appear at PODC 2025.
\newblock URL: \url{https://doi.org/10.48550/arXiv.2503.02426}, \href
  {https://arxiv.org/abs/2503.02426} {\path{arXiv:2503.02426}}, \href
  {https://doi.org/10.48550/ARXIV.2503.02426}
  {\path{doi:10.48550/ARXIV.2503.02426}}.

\bibitem{vacus-ziccardi2025}
Robin Vacus and Isabella Ziccardi.
\newblock Minimal leader election under weak communication.
\newblock {\em CoRR}, abs/2502.12697, 2025.
\newblock To appear at PODC 2025.
\newblock URL: \url{https://doi.org/10.48550/arXiv.2502.12697}, \href
  {https://arxiv.org/abs/2502.12697} {\path{arXiv:2502.12697}}, \href
  {https://doi.org/10.48550/ARXIV.2502.12697}
  {\path{doi:10.48550/ARXIV.2502.12697}}.

\end{thebibliography}

\appendix

\section{Missing proofs}\label{sec:missing-proofs}

\subsection{\texorpdfstring{Omitted proofs from \cref{sec:intro:sketch}}{Omitted proofs 0}}\label{sec:omitted:preliminaries}

\begin{proof}[Proof of \cref{lemma:from_lemma_9}]
    We note that
    \begin{align*}
        &\Pr\pa{ X_1 > X_2} - \Pr\pa{ X_2 > X_1}\\
        = \ & \Pr\pa{ X_1 \geq X_2} - \pr{X_1 = X_2} - \pa{\Pr\pa{ X_2 \geq X_1} - \pr{X_1 = X_2}} \\
        = \ & \Pr\pa{ X_1 \geq X_2} - \Pr\pa{ X_2 \geq X_1}
        \\
        = \ & \sum_{k = \lceil h/2 \rceil}^{h} \binom{h}{k} \, p^{k} (1-p)^{h-k} - \sum_{k = \lceil h/2 \rceil}^{h} \binom{h}{k} \, p^{h-k} (1-p)^{k}.
    \end{align*}
    Then the proof continues exactly as in~\cite[Lemma 9]{FraigniaudN19}. Please note that in~\cite[Lemma 9]{FraigniaudN19}, the definition of~$g$ contains a typographical error, i.e., it is defined so that \(g(\delta, h) = \frac{1}{\sqrt{h}} \, \pa{1-\frac{1}{\sqrt{h}}}^{\frac{h-1}{2}}\) if \(\delta \geq \frac{1}{\sqrt{h}}\); 
    the correct definition is the one provided here and is the one actually used in the proof of~\cite[Lemma 9]{FraigniaudN19}.
\end{proof}

\subsection{\texorpdfstring{Omitted proofs from \Cref{subsection:reduce2opinions}}{Omitted proofs 1}}\label{sec:omitted:reduce2opinions}

\begin{proof}[Proof of \Cref{lemma: reduction to 2 opinions}]
The proof of \Cref{lemma: reduction to 2 opinions} requires some minor technical result, which we express through the following claim.

\begin{claim}
    \label{claim: monotonicity_max_binomial_inequality}
    For any integer \(m > 0\), let $Y_1\sim \binomial(m,q)$ and $Y_2=m-Y_1\sim \binomial(m,1 - q)$, with $q > 1/2$. 
    For any $m/2 \leq i' \leq i \leq m$, it holds that
    \begin{align*}
        & \pr{Y_1 > Y_2 \mid M \geq i } - \pr{Y_2 > Y_1 \mid M \geq i } 
        \\
         \geq \ &  \pr{Y_1 > Y_2 \mid M \geq i' } - \pr{Y_2 > Y_1 \mid M \geq i' }.
    \end{align*}
\end{claim}
By \Cref{claim: monotonicity_max_binomial_inequality}, we have
    \begin{align*}
        &\Pr\left( Y_1>Y_2 \mid \max(Y_1,Y_2)>x \right) - \Pr\left( Y_2>Y_1 \mid \max(Y_1,Y_2)>x \right)\\
        = \ & \Pr\left( Y_1>Y_2 \mid \max(Y_1,Y_2)\geq x - 1 \right) 
        \\
        & - \Pr\left( Y_2>Y_1 \mid \max(Y_1,Y_2)\geq x - 1 \right) \\
        \geq \ &  \Pr\left( Y_1>Y_2 \mid \max(Y_1,Y_2) \geq m/2 \right)
        \\
        & - \Pr\left( Y_2>Y_1 \mid \max(Y_1,Y_2)\geq m/2 \right) 
        &\pa{\text{by \Cref{claim: monotonicity_max_binomial_inequality}}}\\
        = \ & \Pr\left( Y_1>Y_2 \right) - \Pr\left( Y_2>Y_1 \right)\\
        \geq \ & C_1 \cdot \min(\sqrt{m}\cdot (2q-1),1),
        &\pa{ \text{by \Cref{lemma:from_lemma_9}} }
    \end{align*}
    for some constant $C_1>0$.
\end{proof}

\begin{proof}[Proof of \Cref{claim: monotonicity_max_binomial_inequality}]
In order to prove \Cref{claim: monotonicity_max_binomial_inequality}, let us first state the following result.
\begin{claim}
    \label{claim: monotonicity_max_binomial}
    For any integer \(m > 1\), let $Y_1\sim \binomial(m,q)$ and $Y_2=m-Y_1\sim \binomial(m,1-q)$, with $q > 1/2$. Furthermore, let $M=\max\cpa{Y_1,Y_2}$. 
    For all integers \(i, i'\) such that $\lceil m/2 \rceil \le i' \le i \le m $, it holds that
    \begin{align*}
        & \pr{Y_1 > Y_2 \mid M=i } - \pr{Y_2 > Y_1 \mid M=i } 
        \\
        \geq & \ \pr{Y_1 > Y_2 \mid M=i' } - \pr{Y_2 > Y_1 \mid M=i' }.
    \end{align*}
\end{claim}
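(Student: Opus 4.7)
The plan is to compute the conditional probabilities directly and show the resulting expression is monotone in $i$. The key observation is that since $Y_2 = m - Y_1$, the events $\{Y_1 > Y_2\}$, $\{Y_2 > Y_1\}$, and $\{M = i\}$ are all easily rewritten in terms of $Y_1$ alone: we have $Y_1 > Y_2 \iff Y_1 > m/2$, $Y_2 > Y_1 \iff Y_1 < m/2$, and $M = i \iff Y_1 \in \{i, m-i\}$ (the two values coincide precisely when $i = m/2$).

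First, I would separate the edge case. If $m$ is even and $i = \lceil m/2 \rceil = m/2$, then $\{M = i\} = \{Y_1 = m/2\}$, on which neither $Y_1 > Y_2$ nor $Y_2 > Y_1$ holds; hence the difference equals $0$. Otherwise $i > m/2$, and I can write
\begin{align*}
    \pr{Y_1 > Y_2 \mid M = i} - \pr{Y_2 > Y_1 \mid M = i}
    = \frac{\pr{Y_1 = i} - \pr{Y_1 = m-i}}{\pr{Y_1 = i} + \pr{Y_1 = m-i}}.
\end{align*}
Using the pmf of $\binomial(m,q)$, the ratio $\pr{Y_1 = i}/\pr{Y_1 = m-i}$ equals $r^{2i - m}$, where $r := q/(1-q) > 1$ since $q > 1/2$. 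Setting $x_i := r^{2i - m} \ge 1$, the displayed expression becomes $(x_i - 1)/(x_i + 1)$.

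Since $r > 1$, the map $i \mapsto x_i$ is strictly increasing on $\{\lceil m/2\rceil, \dots, m\}$, and the function $x \mapsto (x-1)/(x+1)$ is increasing on $[1,\infty)$. Composing these two facts yields the desired monotonicity for $i, i' > m/2$. The boundary check against the edge case is immediate: when $i = m/2$, $x_i = 1$, so the formula $(x_i-1)/(x_i+1) = 0$ agrees with the separate computation, and the full chain of inequalities extends over the whole range $\lceil m/2 \rceil \le i' \le i \le m$.

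There is no serious obstacle here — the argument is a short algebraic computation once one exploits $Y_2 = m - Y_1$ to collapse $\{M = i\}$ onto at most two values of $Y_1$. The only bookkeeping point is to treat the case $i = m/2$ (possible only for even $m$) separately, since then $Y_1 = i$ and $Y_1 = m-i$ coincide and both $\{Y_1 > Y_2\}$ and $\{Y_2 > Y_1\}$ are empty given $M = i$. After that, the subsequent Claim \ref{claim: monotonicity_max_binomial_inequality} used in the proof of \Cref{lemma: reduction to 2 opinions} follows by expressing $\pr{Y_1 > Y_2 \mid M \ge i} - \pr{Y_2 > Y_1 \mid M \ge i}$ as the weighted average $\sum_{j \ge i} f(j)\, \pr{M = j}/\pr{M \ge i}$, where $f(j)$ denotes the difference at $M = j$ proved monotone above, and observing that shifting $i' \le i$ drops from the average only values $f(j)$ with $j < i$ that are no larger than any $f(j')$ with $j' \ge i$.
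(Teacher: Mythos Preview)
Your proposal is correct and follows essentially the same route as the paper: compute the conditional difference explicitly as $\frac{(q/(1-q))^{2i-m}-1}{(q/(1-q))^{2i-m}+1}$, use the monotonicity of $x\mapsto (x-1)/(x+1)$ together with $q/(1-q)>1$, and handle the boundary case $i=m/2$ separately where the difference is $0$. Your closing remark on deducing \Cref{claim: monotonicity_max_binomial_inequality} via a weighted-average argument is also the paper's approach, though the paper spells out the weighting step in more detail.
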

Now we are ready to prove \Cref{claim: monotonicity_max_binomial_inequality}.
    For the sake of brevity, we will denote \[b_j= \pr{Y_1 > Y_2 \mid M = j } - \pr{Y_2 > Y_1 \mid M = j }.\] 
    By \Cref{claim: monotonicity_max_binomial}, we have
    \begin{equation}
        \label{eq:monotonicity_vector}
        b_j \geq b_{j'}, \quad \text{for all $m\geq j \geq j' \geq \lceil m/2 \rceil$.}
    \end{equation}
    We have
    \begin{align*}
        & \pr{Y_1 > Y_2 \mid M \geq i } - \pr{Y_2 > Y_1 \mid  M \geq i } - \pr{Y_1 > Y_2 \mid M \geq i' } 
        \\
         + \ &\pr{Y_2 > Y_1 \mid  M \geq i' }\\
        \underset{(i)}{=} \ & \sum_{j=i}^m \left[\pr{ M = j \mid M \geq i} - \pr{ M = j \mid M \geq i'}\right] b_j - \sum_{j=i'}^{i-1} \pr{ M = j \mid M \geq i'} b_j
        \\
        \underset{(ii)}{\geq}  \ & \sum_{j=i}^m \left[\pr{ M = j \mid M \geq i} - \pr{ M = j \mid M \geq i'}\right] b_j
        \\
        & \ - b_{i-1} \sum_{j=i'}^{i-1} \pr{ M = j \mid M \geq i'} \\
        \underset{(iii)}{=} \ & \sum_{j=i}^m \left[\pr{ M = j \mid M \geq i} - \pr{ M = j \mid M \geq i'}\right] b_j
        \\
         & \ - b_{i-1} \pa{1-\sum_{j=i}^{m} \pr{ M = j \mid M \geq i'}} \\
        = \ &  \sum_{j=i}^m \pr{ M = j \mid M \geq i} b_j - b_{i-1} +  \sum_{j=i}^{m} \pr{ M = j \mid M \geq i'} (b_{i-1}-b_j) \\
        \underset{(iv)}{=} \ & \sum_{j=i}^m \pr{ M = j \mid M \geq i} (b_j - b_{i-1}) +  \sum_{j=i}^{m} \pr{ M = j \mid M \geq i'} (b_{i-1}-b_j) \\
        = \ & \sum_{j=i}^m \left[\pr{ M = j \mid M \geq i} - \pr{ M = j \mid M \geq i'} \right] (b_j - b_{i-1}) \\
        \underset{(v)}{\geq} \ & 0,
    \end{align*}
    where in $(i)$ we used the law of total probabilities, in $(ii)$ we used \Cref{eq:monotonicity_vector}, in $(iii)$ and $(iv)$ we used that $\sum_{j=i}^{m} \pr{ M = j \mid M \geq i}=1$, and in $(v)$ we used \Cref{eq:monotonicity_vector} and that 
    \begin{align*}
        \pr{M = j \st M \ge i} & = \frac{\pr{M = j}}{\pr{M \ge i}} \\
        & \ge \frac{\pr{M = j}}{\pr{M \ge i'}} \\
        & = \pr{M = j \st M \ge i'}. 
    \end{align*}
This concludes the proof of \Cref{claim: monotonicity_max_binomial_inequality}.
\end{proof}

\begin{proof}[Proof of \Cref{claim: monotonicity_max_binomial}]
    Let \(q_1 = q\) and \(q_2 = 1 - q\).
    For all integers $i>m/2$, we have $\cpa{M=i} = \cpa{Y_1 = i} \cup \cpa{Y_2 = i}$, with $\cpa{Y_1 = i} \cap \cpa{Y_2 = i}=\emptyset$. Hence, we obtain
    \begin{align*}
        \Pr(Y_1> Y_2 \mid M = i ) 
        &=
        \frac{\Pr(Y_1 > Y_2, M=i) }{\Pr(M = i) }
        \\
        &=
        \frac{\Pr(Y_1 = i, M=i) }{\Pr(M = i) }
        \\
        &=
        \frac{\binom{m}{i}q_1^i q_2^{m-i}}
        {\binom{m}{i}q_1^i q_2^{m-i} + {\binom{m}{i}q_2^i q_1^{m-i}}}
        \\
        &=
        \frac{q_1^{2i-m}}
        {q_1^{2i-m} + q_2^{2i-m}},
    \end{align*}
    where, in the second inequality, we used that $\cpa{Y_1>Y_2, M=i}= \cpa{Y_1=i, M=i}$. In the symmetrical way, we obtain \[\Pr(Y_2> Y_1 \mid M = i ) 
        =\frac{q_2^{2i-m}}
        {q_1^{2i-m} + q_2^{2i-m}}.\]
    Then, for all integers $m/2 < i' \le i \le m $, we have
    \begin{align}
        \nonumber
        &\Pr(Y_1 > Y_2 \mid M = i )
        -
        \Pr(Y_2 > Y_1 \mid M = i )
        \\
        \nonumber
        = & \
        \frac{q_1^{2i-m} - q_2^{2i-m}}
        {q_1^{2i-m} + q_2^{2i-m}}
        % \\&
        \\
        \nonumber
        = & \
        \frac{(q_1/q_2)^{2i-m} - 1}
        {(q_1/q_2)^{2i-m} + 1}
        \\
        > & \ \label{eq:for_claim_monotonicity_1}
        \frac{(q_1/q_2)^{2i'-m} - 1}
        {(q_1/q_2)^{2i'-m} + 1}
        \\ \label{eq:for_claim_monotonicity_2}
        = & \
        \Pr(Y_1 > Y_2 \mid M = i' )
        -
        \Pr(Y_2 > Y_1 \mid M = i' )
        ,
    \end{align}
    where the inequality holds because the function $\frac{x-1}{x+1}$ is increasing in $x$, and $q_1/q_2>1$ ($q_1>1/2$). 
    
    It remains to prove the inequality in the case $i'=m/2$, for $m$ even. By \Cref{eq:for_claim_monotonicity_1}, we deduce that for $i>m/2$, \[\Pr(Y_1 > Y_2 \mid M = i )
        -
        \Pr(Y_2 > Y_1 \mid M = i )>0,\] because $\frac{x-1}{x+1}>0$ for all $x>1$.
    If $m$ is even, we have that $\cpa{M=m/2} = \cpa{Y_1 = Y_2} $. This implies that $\pr{Y_1 > Y_2 \mid M=m/2 }=\pr{Y_2 > Y_1 \mid M=m/2 }=0$, and, therefore, \[\pr{Y_1 > Y_2 \mid M=m/2 } - \pr{Y_2 > Y_1 \mid M=m/2 } =0.\]
    This implies that for all $i>m/2$, 
    \begin{align*}
        & \Pr(Y_1 > Y_2 \mid M = i )
        -
        \Pr(Y_2 > Y_1 \mid M = i ) \\
        > \ & \Pr(Y_1 > Y_2 \mid M = m/2 )  -
        \Pr(Y_2 > Y_1 \mid M = m/2 ),
    \end{align*}
        which, together with \Cref{eq:for_claim_monotonicity_2}, concludes the proof of \Cref{claim: monotonicity_max_binomial}.
\end{proof}

\begin{proof}[Proof of \Cref{lemma:difference_maj_is_difference_comparison}]
    Our claim is equivalent to proving that
    \begin{align*}
        &\Pr\left( \WW_{1}, \winstrict{1,2} \right) - \Pr\left( \WW_{2}, \winstrict{1,2} \right) 
        \\
        = \ & \Pr\left( X_1 > X_2, \winstrict{1,2} \right) - \Pr\left( X_2 > X_1, \winstrict{1,2} \right).
    \end{align*}
    By definition (\Cref{def:W_1-2_definition}), we have that the event $\cpa{\WW_{1}, \winstrict{1,2}}$ is disjoint union of the events $\cpa{X_1 >X_2, \winstrict{1,2}}$ and the event \[\cpa{X_1 = X_2, \winstrict{1,2}, \text{"opinion 1 gets chosen among the ties"}}.\] It holds similarly for the event $\cpa{\WW_{2}, \winstrict{1,2}}$.
    Therefore, we can write
    \begin{align*}
                    &\Pr\left( \WW_{1}, \winstrict{1,2} \right) - \Pr\left( \WW_{2}, \winstrict{1,2} \right)\\
                    = \ & \Pr\left( X_1>X_2, \winstrict{1,2} \right) + \frac{1}{2}\Pr\left( X_1=X_2, \winstrict{1,2}  \right)
                    \\
                    & - \Pr\left( X_2>X_1, \winstrict{1,2} \right) - \frac{1}{2}\Pr\left( X_1=X_2 , \winstrict{1,2} \right)\\
                    = \ & \Pr\left( X_1 > X_2 \mid \winstrict{1,2} \right) - \Pr\left( X_2 > X_1 \mid \winstrict{1,2} \right). \tag*{\qedhere} 
    \end{align*}
\end{proof}

\begin{proof}[Proof of \Cref{lemma:sum_12_is_independent_maximum}]

    We found it more convenient to swap the conditioning event with the event whose probability we are computing, and then apply Bayes' theorem. Hence, we first state the following claim. 
    \begin{claim} 
    \label{claim:for lemma:sum_12_is_independent_maximum}
                    For any $0\leq x<h$,
                    \[
                        \Pr(\winstrict{1,2}\mid X_1 + X_2 \geq x)\geq \Pr(\winstrict{1,2}\mid X_1 + X_2 \geq x-1).
                    \]
\end{claim}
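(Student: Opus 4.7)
Write $Z := X_1 + X_2$ and $f(s) := \Pr(\winstrict{1,2} \mid Z = s)$ for the values of $s$ for which this is defined. By the law of total probability,
\[
\Pr(\winstrict{1,2}\mid Z\geq x-1) \;=\; \frac{f(x-1)\,\Pr(Z=x-1) + \Pr(\winstrict{1,2}\mid Z\geq x)\,\Pr(Z\geq x)}{\Pr(Z\geq x-1)},
\]
so $\Pr(\winstrict{1,2}\mid Z\geq x-1)$ is a convex combination of $f(x-1)$ and $\Pr(\winstrict{1,2}\mid Z\geq x)$. Hence the claim reduces to showing $f(x-1) \leq \Pr(\winstrict{1,2}\mid Z\geq x)$, which holds as soon as $f$ is non-decreasing, since then $\Pr(\winstrict{1,2}\mid Z\geq x)$ is itself a convex combination of values $f(s)$ for $s\geq x$, each of which is at least $f(x)\geq f(x-1)$.

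\paragraph*{Structural decomposition.}
Conditionally on $Z=s$, the multinomial structure decouples: $X_1\sim\binomial(s,q_1)$ with $q_1 := p_1/(p_1+p_2)$ and $X_2 = s-X_1$; the vector $(X_3,\ldots,X_k)$ is multinomial with $h-s$ trials and renormalized probabilities $p_j/(1-p_1-p_2)$; and these two pieces are independent. Setting $M_{12}(s) := \max\cpa{X_1,X_2}$ and $M_{\geq 3}(s) := \max_{j\geq 3} X_j$, we have $f(s) = \Pr\pa{M_{12}(s) > M_{\geq 3}(s)}$. To get monotonicity of $f$, I will build, for $s_1 < s_2$, a coupling on a single probability space under which $M_{12}(s_1) \leq M_{12}(s_2)$ and $M_{\geq 3}(s_1) \geq M_{\geq 3}(s_2)$ almost surely, with the two couplings chosen independent. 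Then on $\cpa{M_{12}(s_1) > M_{\geq 3}(s_1)}$ we get $M_{12}(s_2) \geq M_{12}(s_1) > M_{\geq 3}(s_1) \geq M_{\geq 3}(s_2)$, so $\cpa{M_{12}(s_1) > M_{\geq 3}(s_1)} \subseteq \cpa{M_{12}(s_2) > M_{\geq 3}(s_2)}$ in the coupling, yielding $f(s_1) \leq f(s_2)$.

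\paragraph*{The two couplings.}
For $M_{12}$, set $X_1^{(s_2)} := X_1^{(s_1)} + Y$ with $Y\sim\binomial(s_2-s_1, q_1)$ independent of $X_1^{(s_1)}$; writing $A := X_1^{(s_1)}$, $B := s_1 - A$, $C := Y$, $D := (s_2-s_1)-Y$, the elementary inequality
\[
\max\cpa{A+C,\,B+D} \;\geq\; \max\cpa{A,B},
\]
valid because $C,D\geq 0$, gives $M_{12}(s_2) \geq M_{12}(s_1)$ pointwise. For $M_{\geq 3}$, start from a multinomial sample of size $h - s_1$ with the normalized probabilities $p_j/(1-p_1-p_2)$ and discard $s_2 - s_1$ of the sampled items uniformly at random; the result is a multinomial sample of size $h - s_2$ with the same normalized probabilities (hence the correct conditional distribution), and each coordinate is pointwise weakly smaller after the discard, so $M_{\geq 3}(s_2) \leq M_{\geq 3}(s_1)$ pointwise.

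\paragraph*{Main obstacle.}
The delicate step is the pointwise bound $M_{12}(s_2) \geq M_{12}(s_1)$: because $X_1$ and $X_2$ are tied by the constraint $X_1+X_2 = s$, naively increasing one forces the other to decrease, so no coupling based on a single coordinate alone works. The trick succeeds only because the binomial representation provides two non-negative increments, $Y$ added to $X_1$ and its complement $(s_2-s_1)-Y$ added to $X_2$, which lets the elementary inequality above do its job. The rest of the argument (the subsampling coupling for the multinomial on opinions $\geq 3$, the independence of the two couplings, and the convex-combination reduction turning monotonicity of $f$ into the claim) is routine.
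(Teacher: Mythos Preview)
Your proof is correct. Both arguments are couplings, but via different decompositions. The paper couples the conditional laws given $X_1+X_2\ge x$ and $X_1+X_2\ge x-1$ directly: it models the former by forcing the first $x$ of the $h$ categorical draws to land in $\{1,2\}$ (with the conditional proportions $p_i/(p_1+p_2)$) while leaving the remaining $h-x$ draws unrestricted, and then introduces a biased coin on the $x$-th draw to pass from one regime to the other; the pointwise comparison $\max(Z_1,Z_2)\ge\max(\overline Z_1,\overline Z_2)$ and $Z_i\le\overline Z_i$ for $i\ge 3$ gives the inclusion. You instead condition on the \emph{exact} value $Z=s$, which factors the multinomial into two genuinely independent blocks $(X_1,X_2)$ and $(X_3,\ldots,X_k)$ and reduces the claim to the monotonicity of $s\mapsto f(s)=\Pr(M_{12}(s)>M_{\ge 3}(s))$; each block is then coupled separately via a binomial top-up for $M_{12}$ and uniform subsampling for $M_{\ge 3}$. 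What your route buys is a cleaner independence structure and two modular, elementary couplings, and it sidesteps entirely the need to identify the distribution of the multinomial conditioned on the inequality event $\{Z\ge x\}$ itself.
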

                Now, by Bayes theorem and by \Cref{claim:for lemma:sum_12_is_independent_maximum}, we have
                \begin{align*}
                    & \ \Pr( X_1 + X_2 \geq h\cdot (p_1+p_2)/2 \mid \winstrict{1,2})
                    \\
                    = & \
                    \frac{\Pr(X_1 + X_2\geq h\cdot (p_1+p_2)/2)\cdot \Pr(\winstrict{1,2} \mid X_1 +X_2 \geq h\cdot (p_1+p_2)/2 )}{\Pr(\winstrict{1,2})} \\
                    \geq & \ 
                    \frac{\Pr(X_1 + X_2\geq h\cdot (p_1+p_2)/2)\cdot \Pr(\winstrict{1,2} \mid X_1 +X_2 \geq 0 )}{\Pr(\winstrict{1,2})} \\
                    = & \ 
                    \Pr(X_1 + X_2\geq h\cdot (p_1+p_2)/2),
                \end{align*}
                concluding the proof of \Cref{lemma:sum_12_is_independent_maximum}.
\end{proof}

\begin{proof}[Proof of \Cref{claim:for lemma:sum_12_is_independent_maximum}]

We want to use a coupling argument. The idea is that \[\Pr(\winstrict{1,2}\mid X_1 + X_2 \geq x)\] is the probability that we have $\winstrict{1,2}$ after $x$ extractions among $\{1,2\}$ only, and the rest of them are among all opinions. More specifically we consider two families of r.v. representing $h$ extractions among balls with $k$ opinions. For both families, the first $x-1$ extractions are identical and only among the first two opinions, and from the $x+1$ to the end are identical and among all opinions. They instead differ on the $x$-th extraction. For first family the $x$-th extraction is simply among the first two opinions. For the second family, we need to flip first an unbalanced coin: in case of head, the extraction is identical to the first family,  in case of tail, the extraction is on all opinions except the first twos. The coin is unbalanced in such a way the marginal distribution of the $x$-th extraction of the second family is like the one of an extraction among all opinions.
Let $\{ Y_j \}_{j \in \{1,\dots, h\}}$ family of independent r.v. with the following distributions:
                     For $0 \leq j \leq x$:
                    \begin{align*}
                        \Pr( Y_j = 1 ) &= \frac{p_1}{p_1+p_2};\\
                        \Pr( Y_j = 2 ) &= \frac{p_2}{p_1+p_2}.
                    \end{align*}
                    For $x+1 \leq j \leq h$ and $i\in\{1,\dots, k\}$:
                    \begin{align*}
                        \Pr( Y_j = i ) &= p_i.
                    \end{align*}
                    Now let $\{ \overline Y_j \}_{j \in \{1,\dots, h\}}$ be the r.v. family s.t. $\overline Y_j = Y_j$ for all $j\neq x$, and $\overline Y_x$ with the following distribution. Let $C$ a random coin, independent of  $\{ Y_j \}_{j \in \{1,\dots, h\}}$, with distribution $\Pr(C=1)=p_1+p_2$ and $\Pr(C=0)=1-(p_1+p_2)$. $\overline Y_x$ is distributed in such a way we obtain                    \begin{align*}
                        \Pr( C=1, \overline Y_x = Y_x  ) &= 1
                    \end{align*}
                    For $i\in\{3,\dots, k\}$,
                    \[
                        \Pr( C=0, \overline Y_x = i ) = \frac{p_i}{\sum_{j \geq 3} p_i}.
                    \]
                    Overall, we have, for $i\in\{1,\dots, k\}$
                    \begin{align*}
                        \Pr( \overline Y_x = i ) &= p_i.
                    \end{align*}
                    Now define $\{ Z_i \}_{i \in \{1,\dots, k\}}$ and $\{ \overline Z_i \}_{i \in \{1,\dots, k\}}$ as follows:
                    for all $i \in \{1,\dots, k\}$
                    \begin{align*}
                        Z_i  &=  \abs{\{ j \in \{1,\dots, h\} : Y_j = i \}} \\
                        \overline Z_i  &=  \abs{\{ j \in \{1,\dots, h\} : \overline Y_j = i \}}.
                    \end{align*}
                    In this way, we have that $(X_i \mid X_1 + X_2 \geq x ) \sim Z_i$ and $(X_i \mid X_1 + X_2 \geq x-1 ) \sim \overline Z_i$ and then
                    \begin{align*}
                        & \Pr(\winstrict{1,2}\mid X_1 + X_2 \geq x) = \Pr ( \cap_{i\geq 3} \{ \max( Z_1, Z_2) > Z_i \} ) \text{ and} \\
                        & \Pr(\winstrict{1,2}\mid X_1 + X_2 \geq x-1) = \Pr ( \cap_{i\geq 3} \{ \max( \overline Z_1, \overline Z_2) > \overline Z_i \} ).
                    \end{align*}
                    For our construction, we have for all $i \in \{1,\dots, k\}$
                    \begin{align*}
                        \cpa{ C = 1 } \implies \cpa{ Z_i = \overline Z_i } 
                    \end{align*}
                    and for $i=1,2$
                    \begin{align*}
                        \cpa{ C = 0 } \implies \cpa{ Z_i \geq \overline Z_i } 
                    \end{align*}
                    while for $i\geq 3$
                    \begin{align*}
                        \cpa{ C = 0 } \implies \cpa{ Z_i \leq \overline Z_i } ,
                    \end{align*}
                    These facts imply, regardless the realization of $C$, \[\{\cap_{i\geq 3} \{ \max ( \overline Z_1, \overline Z_2) > \overline Z_i \}\} \subset \{\cap_{i\geq 3} \{ \max ( Z_1, Z_2) > Z_i \}\}. \]
                    We conclude
                    \begin{align*}
                        \Pr(\winstrict{1,2}\mid X_1 + X_2 \geq x) &= \Pr ( \cap_{i\geq 3} \{ \max( Z_1, Z_2) > Z_i \} )    \\
                        &\geq
                        \Pr ( \cap_{i\geq 3} \{ \max( \overline Z_1, \overline Z_2) > \overline Z_i \} ) \\
                        &=
                        \Pr(\winstrict{1,2}\mid X_1 + X_2 \geq x-1).
                    \end{align*}
                This concludes the proof of \Cref{claim:for lemma:sum_12_is_independent_maximum}.
                \end{proof}

\subsection{\texorpdfstring{Omitted proofs from \Cref{sec:analysis:ties-estimation}}{Omitted proofs 2}}\label{sec:omitted:ties-estimation}

\begin{proof}[Proof of \Cref{claim:weak_opinion_lose_whp}]
    Define the event \(\CC_j = \cpa{\abs{X_j - hp_j } < \sqrt{3 C_3 \, hp_j \log n}}\) and \(\CC = \cap_{j =1}^k \CC_j\).
    The event $\mathcal{C}_j$ can be rewritten as 
    \begin{equation}
        \label{eq:interval_rare_opinions}
        \mathcal{C}_j = \cpa{ X_j \in \pa{ hp_j - \sqrt{3 C_3 \, hp_j \log n} , hp_j + \sqrt{3 C_3 \, hp_j \log n} } }.
    \end{equation} For $i\in \mathcal{R}$, we have 
    \begin{align*}
        & \ h p_1 - \sqrt{3 \, C_3 \, hp_1 \log n} - \pa{h p_i + \sqrt{3 \, C_3 \, hp_i \log n}}
        \\
        \geq & \ \pa{1-C_2} h p_1 - \sqrt{3 \, C_3 \, hp_1 \log n}
        & \pa{ i\in \mathcal{R} \implies h p_i \leq h C_2 \cdot p_1 }\\
        > & \ 0,
        %& \pa{h p_1 > \pa{\frac{1}{1-C_2}}^2 3 \, C_3 \log n}
    \end{align*}
    by taking $hp_1 > C_4 \log n$, with $C_4 = \pa{3\, C_3\frac{1}{1-C_2}}^2$. This fact, together with \Cref{eq:interval_rare_opinions}, implies that $\CC \subset \cpa{\cap_{i\in\mathcal{R}_{C_2}} \cpa{X_1 > X_i}}$, and therefore $\pr{\cap_{i\in\mathcal{R}_{C_2}} \cpa{X_1 > X_i}} \geq \pr{\CC}$. 
    By the multiplicative Chernoff bound we have
    \begin{align*}
        \pr{C_j}
        &= 
        \pr{\abs{X_j - hp_j } < \sqrt{3 C_3 \, hp_j \log n}}
        \\
        &= 
        1 - \pr{\abs{X_j - hp_j } \geq \sqrt{3 C_3 \, hp_j \log n}}
        \\
        &\geq 
        1 - 2 \exp\pa{ -\frac{3 C_3 \log n}{3} }
        \\
        &=
        1-\frac{2}{n^{C_3}},
    \end{align*}
    and by the union bound and by the fact the number of opinion $k\leq n$, we obtain \[ \pr{\cap_{i\in\mathcal{R}} \cpa{X_1 > X_i}} \geq \pr{\CC} \ge 1 - \frac{2 k}{n^{C_3}}\geq 1 - \frac{1}{n^{C_3-2}}, \]
    concluding the proof of \Cref{claim:weak_opinion_lose_whp}.
\end{proof}

\begin{proof}[Proof of \Cref{lemma:bound_W_1-2}]
    Recall $\mathcal{S}:= \cpa{i\in [k] : p_i > \frac{1}{2}p_1 }$ be the set of the \textit{strong opinions}. By definition of $\mathcal{S}$ we have
    \begin{align*}
        1 \geq \sum_{j\in \mathcal{S}} p_j > \frac{\abs{\mathcal{S}} \, p_1 }{2},
    \end{align*}
    and therefore we have
    \begin{equation}
        \label{eq:number_strong_opinions}
        \abs{\mathcal{S}} \leq \frac{2}{p_1}.
    \end{equation}
    Recall that $\mathcal{R}_{1/2}:= \cpa{i\in [k] : p_i\leq \frac{1}{2}p_1 }$ is the set of the $1/2$-\textit{rare opinions}. If $i\in \mathcal{R}_{1/2}$, we have that \[\pr{\WW_i \mid \cap_{\ell\in\mathcal{R}_{1/2}} \cpa{X_1 > X_\ell}} = 0.\] By this fact, and by noticing that an opinion is either strong or $1/2$-rare,
    \begin{equation}
        \label{eq:only_strong_opinions_win}
        1 = \pr{\bigcup_{j\in \mathcal{S}} \WW_j \mid \cap_{\ell\in\mathcal{R}_{1/2}} \cpa{X_1 > X_\ell}} \leq \sum_{j\in \mathcal{S}} \pr{\WW_j \mid \cap_{\ell\in\mathcal{R}_{1/2}} \cpa{X_1 > X_\ell}}.
    \end{equation}
    Since $X_j\sim \binomial(p_j,h)$ and $p_1\geq p_j$, for all $j\in [k]$, we have that $X_1$ stochastically dominates $X_j$ for all $j\geq 2$. This fact remains true also conditioning on the event $\cpa{\cap_{\ell\in\mathcal{R}_{1/2}} \cpa{X_1 > X_\ell}}$. Therefore, we have \[\pr{\WW_1 \mid \cap_{\ell\in\mathcal{R}_{1/2}} \cpa{X_1 > X_\ell} } \geq \pr{\WW_j \mid \cap_{\ell\in\mathcal{R}_{1/2}} \cpa{X_1 > X_\ell} },\] for all $j \in \mathcal{S}, j\geq 2 $. This fact, together with \Cref{eq:only_strong_opinions_win}, implies
    \[
        \abs{\mathcal{S}}\cdot \pr{\WW_1 \mid \cap_{\ell\in\mathcal{R}_{1/2}} \cpa{X_1 > X_\ell} } \geq \sum_{j\in \mathcal{S}} \pr{\WW_j \mid \cap_{\ell\in\mathcal{R}_{1/2}} \cpa{X_1 > X_\ell}} \geq 1,
    \]
    and, by using \Cref{eq:number_strong_opinions}, we obtain
    \[
        \pr{\WW_1 \mid \cap_{\ell\in\mathcal{R}_{1/2}} \cpa{X_1 > X_\ell} } \geq \frac{1}{\abs{\mathcal{S}}} \geq \frac{p_1}{2}.
    \]
    We can conclude the proof of \Cref{lemma:bound_W_1-2} noticing that, by \Cref{claim:weak_opinion_lose_whp} with $C_3=3$ and $C_2=1/2$ (which implies \(C_4 = (18)^2\), we have
    \[
        \Pr(\WW_1) \geq \pr{\cap_{\ell\in\mathcal{R}_{1/2}} \cpa{X_1 > X_\ell}} \cdot \pr{\WW_1 \mid \cap_{\ell\in\mathcal{R}_{1/2}} \cpa{X_1 > X_\ell} } \geq \pa{1-\frac{1}{n}} \frac{p_1}{2} \geq \frac{p_1}{3},
    \]
    for $n$ large enough.
\end{proof}

\begin{proof}[Proof of \cref{lemma:relationship_ties_without_ties}]
Define the event \(\CC_j = \cpa{\abs{X_j - hp_j } < \sqrt{9hp_j \log n}}\) and \(\CC = \cap_{j \ge 1} \CC_j\).
By the multiplicative Chernoff bound  (\Cref{lemma:app:multiplicative-chernoff} in \Cref{sec:tools}), we obtain
\begin{align*}
        \pr{\CC_j}
        &= 
        \pr{\abs{X_j - hp_j } < \sqrt{9 \, hp_j \log n}}
        \\
        &= 
        1 - \pr{\abs{X_j - hp_j } \geq \sqrt{9 \, hp_j \log n}}
        \\
        &\geq 
        1 - 2 \exp\pa{ -\frac{9 \log n}{3} }
        \\
        &=
        1-\frac{2}{n^{3}}.
    \end{align*}
Moreover, by the union bound and by the fact $k\leq n$, we have, for $n$ large enough, 
\begin{equation}
    \label{eq:concentration_strong_opinions}
    \pr{\CC}
\ge 1 - \frac{2k}{n^{3}}\geq 1 - \frac{1}{n}. 
\end{equation}
Recall $\mathcal{S}:= \cpa{i\in [k] : p_i > \frac{1}{2}p_1 }$ be the set of the \textit{strong opinions}.
We first note that 
\begin{align}
    \nonumber
    \CC
    &\subseteq 
    \bigcap_{j\in\mathcal{S}}\cpa{X_j>hp_j-\sqrt{9hp_j \log n} } 
    \\ \nonumber
    &\subseteq 
    \bigcap_{j\in\mathcal{S}}\cpa{X_j>\frac{1}{2} hp_1-\sqrt{9hp_1 \log n} }
    & \pa{ p_i>\frac{1}{2}p_1, p_i\leq p_1, \text{ for all } i\in \mathcal{S}}
    \\ \label{eq:strong_opinion_score_at_least_1}
    &\subseteq 
    \cap_{j\in\mathcal{S}}\cpa{X_j>0}.
    & \pa{ hp_1 > 36 \log n }
\end{align}

Let us denote a particular realization of a multinomial in the following way.
\begin{equation*}
    \mathcal{X}(x_1,\dots,x_k)=\cpa{X_1=x_1,\dots, X_k=x_k}.
\end{equation*}

We define the event $\mathcal{X}(x_1,\dots,x_k)$ a \textbf{1-tie} if $x_1 \geq x_i$ for all $i\geq 1$ and there exists a $j \neq 1$ s.t. $x_j=x_1$. In other words, opinion 1 is the most sampled, together with at least another opinion. We denote the set of 1-tie as $\mathcal{T}_1$.

We map a 1-tie to an event in which the opinion 1 gets one extra sample which is stolen from the least scoring strong opinion. Formally, if $\mathcal{X}(x_1,\dots,x_k)\in\mathcal{T}_1$ and \(j = \max \cpa{i : X_i = \min_{r \in \mathcal{S}} \cpa{X_r}}\), we define:
\[
    f:\,\mathcal{X}(x_1,\dots,x_k) \mapsto \mathcal{X}(x_1+1,\dots,x_{j-1},x_j - 1, x_{j+1}, \dots ,x_k).
\]
The map is well-defined if and only if $x_j>0$. If $\mathcal{X}(x_1,\dots,x_k) \in \CC$, by \Cref{eq:strong_opinion_score_at_least_1}, we have $\mathcal{X}(x_1,\dots,x_k) \in \cap_{j\in\mathcal{S}}\cpa{X_j>0}$, and, since $j\in \mathcal{S}$ by definition, we obtain that $x_j>0$. Therefore, $f$ is well-defined on $\CC$.

Moreover, the map is injective.
Using the probability mass function of the multinomial distribution, we compute the following:
\[
   \frac{\pr{f\pa{\mathcal{X}(x_1,\dots,x_k)}}}{ \pr{ \mathcal{X}(x_1,\dots,x_k) }} = \frac{ x_j}{x_1+1} \cdot  \frac{p_1}{p_j} \geq \frac{ x_j}{x_1+1},
\]
where we used that $p_1 \geq p_j$.
Therefore we obtain
\begin{align}
    \nonumber
    & \sum_{\substack{\mathcal{X} \in \mathcal{T}_1 , \XX \in \CC }}\pr{f\pa{\mathcal{X}(x_1,\dots,x_k)}} 
    \\ \nonumber
    \geq \ & 
    \sum_{\substack{\mathcal{X} \in \mathcal{T}_1 , \XX \in \CC }}\frac{x_j}{x_1+1} \pr{\pa{\mathcal{X}(x_1,\dots,x_k)}}
    \\ \nonumber
    \geq \ & 
    \frac{ h p_j - \sqrt{9 h p_j \log n} }{1 + h p_1 + \sqrt{9 h p_1 \log n}} \cdot \sum_{\substack{\mathcal{X} \in \mathcal{T}_1 , \XX \in \CC}} \pr{ \mathcal{X}(x_1,\dots,x_k) }
    &\pa{ \text{by definition of }\CC }
    \\  \nonumber
    \geq \ & 
    \frac{ \frac{1}{2} h p_1 - \sqrt{9 h p_1 \log n} }{1 + h p_1 + \sqrt{9 h p_1 \log n}} \cdot \sum_{\substack{\mathcal{X} \in \mathcal{T}_1 , \XX \in \CC}} \pr{ \mathcal{X}(x_1,\dots,x_k) } 
    & \pa{ p_i>\frac{1}{2}p_1, p_i\leq p_1, \text{ for } i\in \mathcal{S}}
    \\  \nonumber
    = \ & 
    \frac{ \frac{1}{2} \sqrt{\frac{hp_1}{9 \log n}}  -1 }{ \sqrt{\frac{hp_1}{9 \log n}}  + 2} \cdot \sum_{\substack{\mathcal{X} \in \mathcal{T}_1 , \XX \in \CC}} \pr{ \mathcal{X}(x_1,\dots,x_k) } 
    \\ \label{eq:relationship_probabilities_1_ties}
    \ge \ & 
    \frac{1}{4}\cdot \sum_{\substack{\mathcal{X} \in \mathcal{T}_1 , \XX \in \CC}} \pr{ \mathcal{X}(x_1,\dots,x_k) },
\end{align}
where the last inequality holds because $ x\mapsto\frac{\frac{1}{2}x-1 }{x+2} \text{ is increasing and } hp_1 \geq 18^2 \log n  $.
Recall \Cref{def:W_1-2_definition}. Since $\mathcal{T}_1 = \WW_{1,\text{ties}}\setminus\WW_{1,\text{strict}}$ and $f\pa{\mathcal{T}_1 \cap \CC} \subseteq \WW_{1}$, we can write
\begin{align*}
    & \ \pr{\WW_{1,\text{ties}}, \CC} 
    \\
    = & \ \pr{\WW_{1,\text{strict}}, \CC} + \pr{\WW_{1,\text{ties}}\setminus\WW_{1,\text{strict}}, \CC}
    \\
    \geq & \ 
    \sum_{\substack{\mathcal{X} \in \mathcal{T}_1, \XX \in \CC}}\pr{f\pa{\mathcal{X}(x_1,\dots,x_k)}} + \sum_{\substack{\mathcal{X} \in \mathcal{T}_1, \XX \in \CC}} \pr{ \mathcal{X}(x_1,\dots,x_k) }
    \\
    \geq & \ 
    \left( 1 + \frac{1}{4} \right) \sum_{\substack{\mathcal{X} \in \mathcal{T}_1,\XX \in \CC}} \pr{ \mathcal{X}(x_1,\dots,x_k) }
    & \pa{\text{by \Cref{eq:relationship_probabilities_1_ties}}}
    \\
    = & \  \left( 1 + \frac{1}{4} \right)\pr{\WW_{1,\text{ties}}\setminus\WW_{1,\text{strict}}, \CC}.
\end{align*}
Hence, 
\[
    \pr{\WW_{1,\text{ties}}\setminus\WW_{1,\text{strict}}, \CC} \leq \frac{4}{5} \cdot \pr{\WW_{1,\text{ties}}, \CC},
\]
and
\begin{align*}
    & \ \pr{\WW_{1,\text{strict}}}
    \ge \pr{\WW_{1,\text{strict}}, \CC}
    =  \pr{\WW_{1,\text{ties}}, \CC} - \pr{\WW_{1,\text{ties}}\setminus\WW_{1,\text{strict}}, \CC}
    \\
    \ge & \ \frac{1}{5} \cdot \pr{\WW_{1,\text{ties}},\CC}
    \\
    = & \ \frac{1}{5} \pa{ \winties{1} + \pr{\CC} - \pr{\WW_{1,\text{ties}} \cup \CC}}
    \qquad \pa{\text{by the inclusion–exclusion princ.}}
    \\
    \ge & \ \frac{1}{5} \pa{ \winties{1} + \pr{\CC} -1 }
   \qquad \qquad \qquad \qquad \, \pa{ \pr{\WW_{1,\text{ties}} \cup \CC} \leq 1 }
    \\
    \ge & \ \frac{1}{5} \pa{ \winties{1} - \frac{1}{n} }
    \qquad \qquad \qquad \qquad \qquad \quad \pa{ \text{by \Cref{eq:concentration_strong_opinions}} }
    \\
    \ge & \ \frac{1}{6} \winties{1}
    \ge \frac{1}{18} p_1
\end{align*}
for $n$ large enough. This concludes the proof of \Cref{lemma:relationship_ties_without_ties}.
\end{proof}

\begin{proof}[Proof of \Cref{lemma:bound_M_1-2}]
By \Cref{lemma:relationship_ties_without_ties,lemma:bound_W_1-2}, we obtain
    \[
        \pr{\winstrict{1,2}} \geq \pr{\WW_{1,\text{strict}}} \geq \frac{1}{6} \Pr(\winties{1}) \geq \frac{1}{18}\, p_1 = \frac{1}{36} \pa{p_1+p_1} \geq \frac{1}{36}(p_1 + p_2).
    \]
This concludes the proof of \Cref{lemma:bound_M_1-2}.
\end{proof}

\subsection{\texorpdfstring{Omitted proofs from \Cref{sec:analysis:alltogether}}{Omitted proofs 3}}\label{sec:omitted:alltogether}

\begin{proof}[Proof of \Cref{proposition:bias_hp>1}]
    Since $p_2\geq p_j$ for all $j\geq 2$, we have that $\pr{\WW_{j}} \leq \pr{\WW_{2}}$, and, therefore,
    \[
         \Pr\left( \WW_{1} \right) - \Pr\left( \WW_{2} \right) \implies \Pr\left( \WW_{1} \right) - \Pr\left( \WW_{j} \right),
    \]
    and,
    \[
         \Pr\left( \WW_{1} \right) \geq \frac{1}{1-C_6} \Pr\left( \WW_{2} \right) \implies \Pr\left( \WW_{1} \right) \geq  \frac{1}{1-C_6} \Pr\left( \WW_{j} \right) .
    \]
    Hence, we will focus on the proof statement for $j=2$.

    Consider the case $\delta(2) \leq \sqrt{\frac{2(p_1+p_2)}{h}}$, that, in particular, implies that \[\min\pa{ \pa{\frac{\delta(2) \sqrt{h}}{\sqrt{2(p_1+p_2)}}},1} =  \pa{\frac{\delta(2) \sqrt{h}}{\sqrt{2(p_1+p_2)}}}.\] We have
    \begin{align*}
        &\Pr\left( \WW_{1} \right) - \Pr\left( \WW_{2} \right)
        \\
        \geq & \  \Pr\left( \winstrict{1,2} \right)\cdot \left[  \Pr\left( \WW_{1} \mid \winstrict{1,2} \right) - \Pr\left( \WW_{2} \mid \winstrict{1,2} \right) \right] \\
        \geq & \  \Pr\left( \winstrict{1} \right)\cdot \left[  \Pr\left( \WW_{1} \mid \winstrict{1,2} \right) - \Pr\left( \WW_{2} \mid \winstrict{1,2} \right) \right] 
        &\pa{\winstrict{1} \subseteq \winstrict{1,2} }
        \\
        \geq & \  \frac{1}{6} \Pr\left( \winties{1} \right)\cdot \left[  \Pr\left( \WW_{1} \mid \winstrict{1,2} \right) - \Pr\left( \WW_{2} \mid \winstrict{1,2} \right) \right] 
        &\pa{\text{by \Cref{lemma:relationship_ties_without_ties}}}
        \\
        \geq & \  \frac{1}{6} \Pr\left( \WW_{1} \right)\cdot \left[  \Pr\left( \WW_{1} \mid \winstrict{1,2} \right) - \Pr\left( \WW_{2} \mid \winstrict{1,2} \right) \right] 
        &\pa{\WW_1 \subseteq \winties{1}}
        \\
        \geq & \ \frac{C}{6} \, \pr{\WW_1} \delta(2)\sqrt{\frac{h}{p_1+p_2}}
        &\pa{\text{by \Cref{lemma:lemma_ema_with_conditioning}}}
    \end{align*}
    Setting $C_5 = \frac{C}{6}$, we conclude the proof of the first statement of \Cref{proposition:bias_hp>1}.
    Now consider the case $\delta \geq \sqrt{\frac{2(p_1+p_2)}{h}}$, that implies that $ \pa{\frac{\delta \sqrt{h}}{\sqrt{2(p_1+p_2)}}}=1$.
    \begin{align*}
        & \ \Pr\left( \WW_{1} \right) - \Pr\left( \WW_{2} \right)
        \\
        \geq & \ \Pr\left( \winstrict{1,2} \right)\cdot \left[  \Pr\left( \WW_{1} \mid \winstrict{1,2} \right) - \Pr\left( \WW_{2} \mid \winstrict{1,2} \right) \right] \\
        \geq & \ \pr{ \WW_{1,2,\text{strict}} } \cdot C
        &\text{(by \Cref{lemma:lemma_ema_with_conditioning})}
        \\
        \geq & \ \pr{ \WW_{1, \text{strict}} } \cdot C
        &\pa{\winstrict{1} \subseteq \winstrict{1}}
        \\
        \geq & \ \frac{C}{6} \pr{\winties{1}}
        &\text{(by \Cref{lemma:relationship_ties_without_ties})}
        \\
        \geq & \ \frac{C}{6} \pr{\WW_{1}}
        &\pa{\WW_{1}\subseteq\winties{1}}
        \\
        \geq & \ C_6 \pr{\WW_{1}},
    \end{align*}
    if we set $0<C_6=\min\pa{\frac{1}{2},\frac{C}{6}}<1$.  This concludes the proof of \Cref{proposition:bias_hp>1}.
\end{proof}

\begin{proof}[Proof of \Cref{thm:hp_1 large}]

    By using the union bound and by \Cref{lemma:disappear_one_step,lemma:disappear_two_steps,lemma:bias_exponential_growth}, we have that either $\delta(j)' \geq e \delta(j)$, or $\frac{p'_j}{p'_1} < \frac{p_j}{p_1}$, or opinion $j$ disappear and $p_j=0$, for all $j\neq 1$ with probability $1- \frac{1}{n^4}$. This means that,conditioning on this event, we can apply, the following claim (whose proof is deferred to \Cref{sec:omitted:alltogether}) to obtain $p'_1\geq p_1$, with probability $1- \frac{1}{n^4}$.
    \begin{claim}               \label{claim:all_bias_increase_p1_increase}
        Let a partition of $[k]=\cpa{1}\cup I\cup J\cup K$ s.t.
        for $j\in I$, $\delta(j)'>\delta(j)$, for $j\in J$, $\frac{p'_j}{p'_1}<\frac{p_j}{p_1}$, and for $j \in K$, $p'_j=0$. Then $p'_1>p_1$. 
    \end{claim}
    This fact, together with \Cref{lemma:new_bias_hypothesis} and the union bound on all opinions, implies that with probability $1-\frac{2}{n^3}$, after one round of \hmaj, the new configuration still satisfies the initial hypothesis of the theorem, and we can iterate our calculations. By the union bound on several rounds, we can iterate this calculations up to $n$ number of rounds to obtain all the events hold w.h.p. 
    
    By \Cref{lemma:bias_exponential_growth}, at every round, with probability $1-1/n^4$, $\delta(j)'\geq e \, \delta(j)$ for $n$ large enough, until we obtain a bias $\delta(j) \geq \sqrt{\frac{2(p_1+p_2)}{h}}$. Then, even starting with the minimal initial bias of $\delta(j)=\sqrt{\log n}/n$, we reach a configuration with $\delta(j)\geq \sqrt{\frac{2(p_1+p_2)}{h}}$ after \[
        \log\pa{ n \sqrt{\frac{2(p_1+p_2)}{h\log n}} }\leq \log(\frac{2 \sqrt{C_4} n}{\log n}) \leq \log(n) 
    \]
    number of rounds, with probability $1-1/n^{3}$, where we used that $h\geq C_4 \frac{\log n}{p_1}$, that $p_2\leq p_1 \leq 1$ and we took $n$ large enough. Once we have $\sqrt{\frac{2(p_1+p_2)}{h}} \leq \delta(j) \leq \pa{1 - \frac{1}{1+C_6}}p_1$, by \Cref{eq:bias_increase_2} we have that at the next round with probability $1-\frac{2}{n^4}$ we are in the last case $\delta(j) \geq \pa{1-\frac{1}{1+C_6}}p_1$, and, as computed before, in the following round opinion $j$ will disappear with probability $1 - \frac{1}{n^4}$.
    
    Overall we have that all opinions $j\neq 1$ disappear with probability $1-\frac{1}{n^3}$ in at most $\log(n) +2$ number of rounds, and by applying the union bound, we obtain that w.h.p.\ after $\log n$ rounds the only opinion left is opinion 1 with probability $1- \frac{1}{n^2}$, concluding the proof of \Cref{thm:hp_1 large}.
\end{proof}

\begin{proof}[Proof of \Cref{lemma:disappear_one_step}]
    Recalling $\mathcal{R}_{x}:= \cpa{i\in [k] : p_i\leq x p_1 }$, we have that $j\in\mathcal{R}_{x}$, and therefore, by \Cref{claim:weak_opinion_lose_whp}, with probability $1-\frac{1}{n^4}$ opinion $j$ will score less than opinion 1, and then will disappear in the next round and therefore $\delta(j)'=p'_1$. 
\end{proof}

\begin{proof}[Proof of \Cref{lemma:disappear_two_steps}]
    We have that
    \begin{align}
        \nonumber
        \pr{\WW_1}
        &\geq \frac{1}{3
        }p_1
        &\pa{\text{by \Cref{lemma:bound_W_1-2}}}
        \\
        \label{eq:p(W_1)>p_1}
        &\geq \frac{C_7 \log(n)}{3 n}.
        &\pa{p_1 \geq \frac{C_7 \log n}{n}}
    \end{align}
    Denote with $Y^{(i)}_\ell$ the indicator function of the event "the agent $i$ adopts opinion $\ell$". In this way, we have that $\sum_{i\in[n]} Y^{(i)}_\ell$ counts the number of agents supporting opinion $\ell$ during the next round, and $\bbE\pa{ \sum_{i\in[n]} Y^{(i)}_\ell } = n \pr{ \WW_\ell }$. By the multiplicative Chernoff bound (\Cref{lemma:app:multiplicative-chernoff} in \Cref{sec:tools}), we obtain
    \begin{align}
        \nonumber
        &\pr{ \abs{\sum_{i\in[n]} Y^{(i)}_1 - n \pr{\WW_1}} \leq \frac{C_6^2}{2} n \pr{ \WW_1 } }
        \\
        \nonumber
        \leq \ & 2 \exp\pa{ -\frac{\pa{\frac{C_6^2}{2}}^2}{2} n \pr{ \WW_{1} } }
        \\        \label{eq:concentration_p1_next_round}
        \leq \ & \frac{1}{n^4}.
        &\pa{ \text{by \Cref{eq:p(W_1)>p_1}}}
    \end{align}
    By \Cref{proposition:bias_hp>1}, we know that $\pr{ \WW_{j} } \leq (1-C_6) \pr{ \WW_{1} }$ for some constant $0<C_6<1$. 
    Let $\cpa{Z_i}_{i\in[n]}$ be i.i.d.\ random variables with Bernoulli distribution of parameter $(1-C_6) \pr{ \WW_{1} }$.
    We obtain that $\sum_{i\in[n]} Y^{(i)}_j$ is 				stochastically dominated by $\sum_{i\in[n]} Z^{(i)}$. Hence,
    \begin{align*}
        &\pr{ \sum_{i\in[n]} Y^{(i)}_j \geq \pa{1+\frac{C_6^2}{2(1-C_6^2)}} n (1-C_6) \pr{ \WW_1 } }
        \\
		\leq \ & \pr{ \sum_{i\in[n]} Z^{(i)} \geq \pa{1+\frac{C_6^2}{2(1-C_6^2)}} n (1-C_6) \pr{ \WW_1 } }
        \\
        \leq \ & \exp\pa{ -\frac{\pa{\frac{C_6^2}{2(1-C_6^2)}}^2 n (1-C_6) \pr{ \WW_1 }}{3} }
        \\
        \leq \ & \frac{1}{n^4}.
        &\pa{ \text{by \Cref{eq:p(W_1)>p_1}}}
    \end{align*}
    Since $1-\frac{C_6^2}{2}\geq \pa{1+\frac{C_6^2}{2(1-C_6^2)}} (1-C_6) (1+C_6)$, by applying the union bound, we obtain that 
    \[
        \pr{\sum_{i\in[n]} Y^{(i)}_1 \geq (1+C_6) \sum_{i\in[n]} Y^{(i)}_j} \geq 1 - \frac{2}{n^4},
    \]
    which implies that $p'_j\leq x p'_1$, with $0<x=\frac{1}{1+C_6}<1$.  Therefore, we obtain that with probability $1-\frac{2}{n^4}$,
    \begin{equation}
        \label{eq:multiplicative_bias_increase}
        \frac{p'_j}{p'_1} \leq \frac{1}{1+C_6} < \frac{p_j}{p_1}
    \end{equation}
    whenever $\sqrt{\frac{2(p_1+p_2)}{h}} \leq \delta(j) < \pa{1 - \frac{1}{1+C_6}}p_1$. Moreover, we obtain from the previous equation that
    \begin{equation}
        \label{eq:bias_increase_2}
        \pr{\delta(j)' \geq \pa{1-\frac{1}{1+C_6}}p_1'} \geq 1 - \frac{2}{n^4}.
    \end{equation}
\end{proof}

Before proving \Cref{lemma:bias_exponential_growth}, we present a claim in which we apply the Bernstein inequality, which will be useful not only in proving the last lemma, but also later on.

\begin{claim}
    \label{claim:bias_concentration_bernstein}
    Let $hp_1 \geq C_4 \log n$.
    If $p_1\geq C_7\frac{\log n}{n}$
    \[
        \pr{ \delta(j)' \leq C_9 \delta(j) \frac{\pr{\WW_1}}{p_1}\sqrt{\log n}  } \leq \frac{1}{n^4},
    \]
    for some constant $C_9>0$.
\end{claim}

\begin{proof}[Proof of \Cref{claim:bias_concentration_bernstein}]
    By \Cref{proposition:bias_hp>1} and by the fact $hp_1\geq C_4 \log n$, we obtain that 
    \begin{equation}    \label{eq:inequality_expected_value_delta}
        \bbE(\delta(j)')\geq C_5 \delta(j) \pr{\WW_1} \sqrt{\frac{h}{p_1}}\geq C_5 \delta(j) \frac{\pr{\WW_1}}{p_1}\sqrt{C_4 \log n}.
    \end{equation}
    Let $X^{(i)}$ be a r.v. that take value 1 if at the next round the agent $i$ adopts opinion 1, value -1 if the agent $i$ adopts opinion $j$ and value 0 otherwise. Notice that $\cpa{X_i}_{i\in[n]}$ are i.i.d., that $ n \delta(j)' = \sum_{i\in [n]} X^{(i)}$, and that
    \begin{align}
        \nonumber
        \Var\pa{X^{(1)}}
        & \leq \bbE\pa{\pa{X^{(1)}}^2}
        \\ \nonumber
        & = \pa{\pr{\WW_1} + \pr{\WW_j}}
        \\        \label{eq:upper_bound_variance_bias}
        & \leq 2 \pr{\WW_1}.
        &\pa{\pr{\WW_j}\leq \pr{\WW_1}}
    \end{align}
    By the Bernstein inequality we obtain
    \begin{align}
        \nonumber
        & \pr{ n \delta(j)' \leq \frac{1}{2} C_5 n \delta(j) \frac{\pr{\WW_1}}{p_1}\sqrt{C_4 \log n}  }
        &\pa{\text{by \Cref{lemma:relationship_ties_without_ties}} }
        \\
        \nonumber
        \leq \ & \, \text{Pr}\left( n \delta(j)' \leq \bbE(n \delta(j)') \right. \\
        \nonumber
        &\left. \quad 
        - \frac{C_5}{2} n \delta(j) \frac{\pr{\WW_1}}{p_1}\sqrt{C_4 \log n} \right) 
        &\pa{\text{by \Cref{eq:inequality_expected_value_delta}}}
        \\
        \nonumber
        \leq \ & \exp\pa{ - \frac{\pa{ \frac{1}{2} C_5 n \delta(j) \frac{\pr{\WW_1}}{p_1}\sqrt{C_4 \log n}}^2}{2 n \Var\cpa{\delta(j)'} + \frac{2}{3}  C_5 \delta(j) \frac{\pr{\WW_1}}{p_1}\sqrt{C_4 \log n}} }        
        &\pa{\text{by \nameref{lemma:bernstein_inequality}}}
        \\
        \nonumber
        \leq \ & \exp\pa{ - \frac{\pa{ \frac{1}{2} C_5 n \delta(j) \frac{\pr{\WW_1}}{p_1}\sqrt{C_4 \log n}}^2}{4 n \pr{\WW_1} + \frac{2}{3}  C_5 n \delta(j) \frac{\pr{\WW_1}}{p_1}\sqrt{C_4 \log n}} }
        &\pa{\text{by \Cref{eq:upper_bound_variance_bias}}}
        \\
        \nonumber
        = \ & \exp\pa{ - \frac{n \pr{\WW_1}\pa{ \frac{1}{2} C_5 \delta(j) \frac{1}{p_1}\sqrt{C_4 \log n}}^2}{4 + \frac{2}{3}  C_5 \delta(j) \frac{1}{p_1}\sqrt{C_4 \log n}} }
        \\
        \nonumber
        \leq \ &  \exp\pa{ - \frac{\frac{n}{3p_1}\pa{ \frac{1}{2} C_5 \delta(j) \sqrt{C_4 \log n}}^2}{4 + \frac{2}{3}  C_5 \delta(j) \frac{1}{p_1} \sqrt{C_4 \log n}} }
        &\pa{\text{by \Cref{lemma:bound_W_1-2}}}
        \\
        \nonumber
        \leq \ &  \exp \left( - \min \left\{ \frac{n}{24 p_1}\pa{ \frac{1}{2} C_5 \delta(j) \sqrt{C_4 \log n}}^2 , \right. \right.
        \\
        & \quad\quad\quad\quad\quad\quad \, \left. \left. \frac{C_5}{4} n \delta(j) \sqrt{C_4 \log n} \right\} \right)
        \\
        \label{eq:amplification_bias_whp}
        \leq \ &  \frac{1}{n^{C}},
    \end{align}
    where the latter holds because \(\delta(j) \geq C_8 \sqrt{\frac{p_1}{n}}\geq \frac{C_8\sqrt{C_7 \log n}}{n}\),
    with $C=\frac{C_4\, {C_5}^2}{24}\geq 4$ for $C_4$ large enough. By dividing both sides by $n$ in the event we are computing the probability, we conclude the proof of \Cref{claim:bias_concentration_bernstein}.
\end{proof}

\begin{proof}[Proof of \Cref{lemma:bias_exponential_growth}]
    By \Cref{lemma:relationship_ties_without_ties} we have $\frac{\pr{\WW_1}}{p_1}\geq \frac{1}{18}$, and therefore, by \Cref{claim:bias_concentration_bernstein}, we obtain
    \[
      \pr{ \delta(j)' \leq \frac{C_9}{18} \delta(j) \sqrt{\log n}  }
        \leq 
        \pr{ \delta(j)' \leq C_9 \delta(j) \frac{\pr{\WW_1}}{p_1}\sqrt{\log n}  } \leq \frac{1}{n^4}.
    \]
    For $n$ large enough, we have $\sqrt{\log n} \geq \frac{18\cdot e}{C_9}$, and therefore, we obtain
    \begin{equation}        \label{eq:multiplicative_factor_bias}
    \pr{\delta(j)' \geq e \, \delta(j)}\geq 1-\frac{1}{n^4}.
    \end{equation}
\end{proof}

\begin{proof}[Proof of \Cref{lemma:new_bias_hypothesis}]
    Firstly, consider the case $\delta(j) \leq \sqrt{\frac{2(p_1+p_2)}{h}}$. By \Cref{claim:bias_concentration_bernstein}, we have with probability $1 - \frac{1}{n^4}$,
    \begin{align*}
        \label{eq:bias_increase}
        \nonumber
        \delta(j)'
        &\geq 
        C_9 \delta(j) \frac{\pr{\WW_1}}{p_1}\sqrt{\log n}
        \\
        \nonumber
        &\geq
        C_9 \,\frac{\pr{\WW_1}\sqrt{\log n}}{\sqrt{p_1 n}} 
        &\pa{\delta(j) \geq \sqrt{\frac{p_1}{n}}}
        \\
        \nonumber
        &\geq 
        C_9 \,\sqrt{\frac{\pr{\WW_1} \log n}{18 n}}
        &\pa{\text{by \Cref{lemma:relationship_ties_without_ties}}}
        \\
        \nonumber
        &\geq
        C_9 \sqrt{\frac{(1-C_6^2/2)p_1' \log n}{n}}&\pa{\text{by \Cref{eq:concentration_p1_next_round}}}
        \\
        &\geq
        C_8 \sqrt{\frac{p_1'}{n}}.&\pa{\text{for $n$ large enough}}
    \end{align*}
    Now consider the case $\delta(j)\geq \sqrt{\frac{2(p_1+p_2)}{h}}$. By \Cref{lemma:disappear_one_step,lemma:disappear_two_steps}, we have that, whenever $\delta(j)\geq \sqrt{\frac{2(p_1+p_2)}{h}}$, 
    \begin{align*}
        \delta(j)'
        &\geq 
        \min\cpa{\pa{1-\frac{1}{1+C_6}} p'_1, p'_1} 
        \\
        &\geq 
        \pa{1-\frac{1}{1+C_6}} \sqrt{p'_1 \frac{C_7 \log n}{n}}
        &\pa{p'_1\geq C_7 \frac{\log n}{n}}
        \\
        &\geq 
        C_8\sqrt{\frac{p'_1}{n}},
        &\pa{\text{for $n$ large enough}}
    \end{align*}
    with probability $1- \frac{2}{n^4}$. These conclude the proof of \Cref{lemma:new_bias_hypothesis}.
\end{proof}

\begin{proof}[Proof of \Cref{claim:all_bias_increase_p1_increase}]
       Let $P_X= \sum_{j\in X} p_j$, and $P'_X= \sum_{j\in X} p'_j$, for $X=I,J,K$. For $j\in I$, by the hypothesis, we have that
       \[
            p'_j < p'_1 - (p_1 - p_j).
       \]
       Hence,
       \begin{equation}
           \label{eq:P'_I}
           P'_I = \sum_{j\in I} p'_j < \sum_{j\in I}p'_1 - (p_1 - p_j) < \abs{I}(p'_1-p_1) + P_I.
       \end{equation}
       For $j\in J$, by the hypothesis, we have that
       \[
            p'_j < p'_1 \cdot \frac{p_j}{p_1}.
       \]
       Hence,
       \begin{equation}
           \label{eq:P'_J}
           P'_J = \sum_{j\in J} p'_j < \sum_{j\in I} p'_1 \cdot \frac{p_j}{p_1} < p'_1 \frac{P_J}{p_1}.
       \end{equation}
       By the fact the probabilities sum to 1, and by \Cref{eq:P'_I,eq:P'_J}, we have
       \[
            1 = p'_1+P'_I+P'_J < p'_1 + \abs{I}(p'_1-p_1) + P_I + p'_1 \frac{P_J}{p_1} = p'_1 (1+\abs{I}+\frac{P_J}{p_1}) - \abs{I}p_1 + P_I.
       \]
       Since $P_J= 1 - p_1 - P_I - P_K > 1 - p_1 - P_I$, this implies that
       \[
            1 + \abs{I}p_1 -P_I < p'_1 \pa{ 1+ \abs{I} + \frac{1 - p_1 - P_I}{p_1} } = p'_1\pa{\abs{I}+\frac{1-P_I}{p_1}},
       \]
       which implies that
       \[
            p'_1 > \frac{1 + \abs{I}p_1 -P_I}{\abs{I}+\frac{1-P_I}{p_1}} = p_1.
       \]
       This concludes the proof of \Cref{claim:all_bias_increase_p1_increase}.
    \end{proof}

\section{Tools}\label{sec:tools}

\begin{lemma}[Multiplicative forms of Chernoff bounds \cite{dubhashi2009}]\label{lemma:app:multiplicative-chernoff}
    Let \(X_1, \ldots, X_n\) be independent binary random variables.
    Let \(X = \sum_{i = 1}^n X_i\) and \(\mu = \expect[X]\).
    Then:
    \begin{enumerate}
        \item For any \(\delta \in (0,1)\) and any \(\mu \le \mu_+ \le n\), it holds that 
        \[
            \Pr(X \ge (1+\delta) \mu_+) \le \exp(- \delta^2 \mu_+ / 3).
        \]
        \item For any \(\delta \in (0,1)\) and any \(0 \le \mu_- \le \mu\), it holds that 
        \[
            \Pr(X \le (1-\delta) \mu_-) \le \exp(- \delta^2 \mu_- / 2).
        \]
    \end{enumerate}
\end{lemma}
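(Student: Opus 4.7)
The plan is to use the classical Chernoff/Bernstein exponential moment method, combined with a simple monotonicity (coupling) argument to handle the fact that the statement uses general values $\mu_+ \ge \mu$ and $\mu_- \le \mu$ rather than $\mu$ itself.

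First I would reduce to the case $\mu_+ = \mu$ (for the upper tail) and $\mu_- = \mu$ (for the lower tail). For the upper tail, I define independent Bernoulli variables $X_i' \sim \mathrm{Bernoulli}(p_i')$ with $p_i' \ge p_i$ chosen so that $\sum_i p_i' = \mu_+$ (possible because $\mu_+ \le n$); coupling $X_i \le X_i'$ componentwise, I get $X \le X' := \sum_i X_i'$, hence
\[
\Pr(X \ge (1+\delta)\mu_+) \le \Pr(X' \ge (1+\delta)\mathbb{E}[X']).
\]
A symmetric coupling (decrease each $p_i$ to $p_i''$ with $\sum p_i'' = \mu_-$) handles the lower tail.

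Next I would apply the standard exponential moment method. For any $t>0$, Markov's inequality on $e^{tX'}$ gives
\[
\Pr(X' \ge (1+\delta)\mu_+) \le e^{-t(1+\delta)\mu_+}\prod_i \mathbb{E}[e^{tX_i'}].
\]
Since $\mathbb{E}[e^{tX_i'}] = 1 + p_i'(e^t-1) \le \exp(p_i'(e^t-1))$, the right-hand side is at most $\exp\!\left(\mu_+(e^t-1) - t(1+\delta)\mu_+\right)$. Optimizing by choosing $t = \log(1+\delta)$ yields
\[
\Pr(X' \ge (1+\delta)\mu_+) \le \left(\frac{e^{\delta}}{(1+\delta)^{1+\delta}}\right)^{\mu_+}.
\]
The final step is the elementary analytic inequality $(1+\delta)\log(1+\delta) - \delta \ge \delta^2/3$ for $\delta \in (0,1)$, which follows by comparing Taylor series or by checking that the function $f(\delta) = (1+\delta)\log(1+\delta) - \delta - \delta^2/3$ satisfies $f(0)=f'(0)=0$ and $f''(\delta)=\frac{1}{1+\delta}-\frac{2}{3}\ge 0$ on $(0,1)$. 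This gives the claimed bound $\exp(-\delta^2 \mu_+/3)$.

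For the lower tail I would run the same argument with $t<0$ on $X''$ (or equivalently apply the upper tail to $\sum_i(1-X_i'')$), using the analogous inequality $(1-\delta)\log(1-\delta)+\delta \ge \delta^2/2$ for $\delta \in (0,1)$; the slightly tighter constant $1/2$ (versus $1/3$) is standard. The only real obstacle is the analytic inequality on $\delta$; everything else is immediate once the coupling reduction is in place.
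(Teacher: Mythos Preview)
The paper does not prove this lemma at all; it is listed in the ``Tools'' appendix and simply cited from \cite{dubhashi2009} as a standard result. Your argument is the standard textbook proof (coupling to reduce to $\mu_\pm=\mu$, then the exponential moment method) and is essentially correct.

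There is one small slip in the justification of the analytic inequality for the upper tail. You claim that $f''(\delta)=\frac{1}{1+\delta}-\frac{2}{3}\ge 0$ on all of $(0,1)$, but this fails for $\delta>1/2$ (e.g.\ $f''(1)=-1/6$). The inequality $(1+\delta)\log(1+\delta)-\delta\ge \delta^2/3$ is nevertheless true on $(0,1)$: since $f''$ changes sign only at $\delta=1/2$, the derivative $f'(\delta)=\log(1+\delta)-\tfrac{2}{3}\delta$ increases on $[0,1/2]$ and decreases on $[1/2,1]$; as $f'(0)=0$ and $f'(1)=\log 2-\tfrac{2}{3}>0$, we get $f'\ge 0$ on $[0,1]$, hence $f\ge f(0)=0$. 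With this minor fix your proof is complete. (Your lower-tail inequality is fine as stated, since there $g''(\delta)=\delta/(1-\delta)\ge 0$ genuinely holds on $(0,1)$.)
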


\begin{lemma}[Hoeffding bounds \cite{mitzenmacher2005}]\label{lemma:app:hoeffding}
    Let \(a < b)\) be two constants, and \(X_1, \ldots, X_n\) be independent binary random variables such that \(\Pr(a \le X_i \le b) = 1\) for all \(i \in [n]\).
    Let \(X = \sum_{i = 1}^n X_i\) and \(\mu = \expect[X]\).
    Then:
    \begin{enumerate}
        \item For any \(t > 0\) and any \(\mu \le \mu_+\), it holds that 
        \[
            \Pr(X \ge \mu_+ + t) \le \exp(-\frac{2t^2}{n(b-a)^2}).
        \]
        \item For any \(t > 0\) and any \(0 \le \mu_- \le \mu\), it holds that 
        \[
            \Pr(X \le \mu_- - t) \le \exp(-\frac{2t^2}{n(b-a)^2}).
        \]
    \end{enumerate}
\end{lemma}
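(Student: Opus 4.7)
The plan is to prove this statement by the classical exponential-moment (Chernoff) method. First I would center the variables: set $Y_i = X_i - \expect[X_i]$, so that $\expect[Y_i]=0$ and $Y_i \in [a-\expect[X_i],\, b-\expect[X_i]]$, an interval of length $b-a$. Since $\mu \le \mu_+$, we have $\{X \ge \mu_+ + t\} \subseteq \{X \ge \mu + t\} = \{\sum_i Y_i \ge t\}$, so it suffices to bound $\Pr(\sum_i Y_i \ge t)$. Markov's inequality applied to $e^{s\sum_i Y_i}$, together with independence of the $Y_i$'s, gives
\[
    \Pr\pa{\sum_i Y_i \ge t} \le e^{-st} \prod_{i=1}^n \expect\left[e^{s Y_i}\right]
    \qquad \text{for every } s > 0.
\]

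The technical core is Hoeffding's lemma: if $Y$ is zero-mean with $Y \in [\alpha,\beta]$, then $\expect[e^{sY}] \le \exp\pa{s^2(\beta-\alpha)^2/8}$. I would prove it by writing each realization as a convex combination $Y = \frac{\beta-Y}{\beta-\alpha}\alpha + \frac{Y-\alpha}{\beta-\alpha}\beta$, applying convexity of $u\mapsto e^{su}$, and taking expectations; the term linear in $Y$ vanishes because $\expect[Y]=0$, leaving an upper bound of the form $e^{\phi(u)}$ with $u = s(\beta-\alpha)$ and $\phi(u) = -pu + \log(1-p+pe^{u})$, where $p = -\alpha/(\beta-\alpha) \in [0,1]$. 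A direct computation gives $\phi(0)=\phi'(0)=0$ and $\phi''(u) = \theta(u)(1-\theta(u))$ for $\theta(u) = pe^u/(1-p+pe^u) \in (0,1)$; hence $\phi''(u)\le 1/4$ uniformly, and Taylor's theorem with integral remainder yields $\phi(u) \le u^2/8$. Applied with $(\alpha,\beta)=(a-\expect[X_i], b-\expect[X_i])$, this gives $\expect[e^{sY_i}] \le \exp\pa{s^2(b-a)^2/8}$ for every $i$.

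Plugging back,
\[
    \Pr\pa{\sum_i Y_i \ge t} \le \exp\pa{-st + \frac{n s^2 (b-a)^2}{8}},
\]
and optimizing the right-hand side over $s > 0$ (the minimizer is $s^\star = 4t/(n(b-a)^2)$) gives $\exp\pa{-2t^2/(n(b-a)^2)}$, which is item (1). For item (2) I would use the symmetry trick: apply the argument to $-X_i$, which lies in $[-b,-a]$ (an interval of the same length $b-a$) with mean $-\mu$; then $\mu_- \le \mu$ implies $\Pr(X \le \mu_- - t) \le \Pr(X \le \mu - t) = \Pr(-X + \mu \ge t)$, and the upper-tail bound just derived applies. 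The main obstacle is Hoeffding's lemma, and within it the identification $\phi''(u) = \theta(1-\theta) \le 1/4$; once one spots the Bernoulli-tilt reparametrization, the bound reduces to the elementary inequality $\theta(1-\theta)\le 1/4$, and everything else is routine calculus and the Markov/Chernoff template.
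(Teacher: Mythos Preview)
Your proof is the standard Chernoff--Hoeffding argument and is correct. Note, however, that the paper does not actually prove this lemma: it is stated in the Tools appendix as a cited reference result (from \cite{mitzenmacher2005}) with no accompanying proof, so there is nothing to compare against beyond observing that your argument is the textbook one.
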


\begin{lemma}[Bernstein Inequality \cite{dubhashi2009}]
    \label{lemma:bernstein_inequality}
    Let \( X_1, X_2, \dots, X_n \) be i.i.d. random variables such that
    \(\abs{X_i - \mathbb{E}[X_i]} \leq C\) and \(\abs{\expect[X_i]} \le C\) for some \(C > 0\).
    Define the sum of centered variables
    \(
    S_n = \sum_{i=1}^n (X_i - \mathbb{E}[X_i]).
    \)
    Then, for all \( t > 0 \),
    \[
    \Pr(\abs{S_n} \geq t) \leq 2 \exp\left( -\frac{t^2}{2 n \Var(X_1) + \frac{2Ct}{3}} \right).
    \]
\end{lemma}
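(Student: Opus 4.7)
The plan is to establish the bound by the standard Cram\'er--Chernoff (exponential moment) method: upper-bound the moment generating function of each centered summand, take the product via independence, and optimize the resulting exponent over the free parameter $\lambda$. A union bound over the upper and lower tails will produce the factor of $2$ in the statement.

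First I would apply Markov's inequality to $e^{\lambda S_n}$: for any $\lambda > 0$, $\Pr\!\pa{S_n \geq t} \leq e^{-\lambda t}\,\expect\!\bigl[e^{\lambda S_n}\bigr]$. Since the $X_i$ are i.i.d., so are the centered variables $Y_i = X_i - \expect[X_i]$; moreover $\expect[Y_i] = 0$, $\Var(Y_i) = \Var(X_1)$, and $|Y_i|\leq C$ by hypothesis. Independence then gives $\expect\!\bigl[e^{\lambda S_n}\bigr] = \expect\!\bigl[e^{\lambda Y_1}\bigr]^n$, reducing everything to bounding the MGF of a single centered, bounded summand.

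The crux of the argument is this MGF bound, which I expect to be the main technical obstacle. I would expand $e^{\lambda Y_1} = 1 + \lambda Y_1 + \sum_{k\geq 2} (\lambda Y_1)^k / k!$ and take expectations; the linear term vanishes because $\expect[Y_1]=0$. For the higher moments I would use the pointwise bound $|Y_1|^k \leq C^{k-2} Y_1^2$, which gives $|\expect[Y_1^k]| \leq C^{k-2}\Var(X_1)$, together with the combinatorial inequality $k! \geq 2\cdot 3^{k-2}$ for $k\geq 2$ to sum the resulting geometric series:
\[
\expect\!\bigl[e^{\lambda Y_1}\bigr] \;\leq\; 1 + \frac{\lambda^2 \Var(X_1)}{2}\sum_{k\geq 2}\pa{\lambda C/3}^{k-2} \;=\; 1 + \frac{\lambda^2 \Var(X_1)}{2\pa{1-\lambda C/3}} \;\leq\; \exp\!\pa{\frac{\lambda^2 \Var(X_1)}{2\pa{1-\lambda C/3}}},
\]
valid on the range $0 < \lambda < 3/C$. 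The precise constant $1/3$ in the final Bernstein exponent is dictated by this $k!\geq 2\cdot 3^{k-2}$ estimate, and sharpening it would require a more delicate moment bookkeeping.

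Combining the ingredients yields $\Pr\!\pa{S_n \geq t} \leq \exp\!\pa{-\lambda t + n\lambda^2 \Var(X_1) / (2(1-\lambda C/3))}$ on the admissible range of $\lambda$. Choosing $\lambda = t / (n\Var(X_1) + Ct/3)$, which one verifies satisfies $\lambda C < 3$, gives $\Pr\!\pa{S_n \geq t} \leq \exp\!\pa{-t^2 / (2n\Var(X_1) + 2Ct/3)}$. Applying the identical argument to the variables $-Y_i$ (which satisfy $|-Y_i|\le C$ and have the same variance) yields the matching bound on $\Pr\!\pa{-S_n \ge t}$, and a union bound over the two tails produces the factor of $2$ and closes the proof.
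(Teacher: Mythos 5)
The paper does not prove this lemma at all: it is stated in the Tools appendix as a standard result cited from the reference \cite{dubhashi2009}, so there is no internal proof to compare against. Your argument is the standard Cram\'er--Chernoff proof of Bernstein's inequality and it is correct: the moment bound \(\abs{\expect[Y_1^k]} \le C^{k-2}\Var(X_1)\), the estimate \(k! \ge 2\cdot 3^{k-2}\), the resulting MGF bound on \(0<\lambda<3/C\), and the choice \(\lambda = t/(n\Var(X_1)+Ct/3)\) all check out and yield exactly the stated exponent \(-t^2/\pa{2n\Var(X_1)+\tfrac{2Ct}{3}}\), with the two-sided bound following from the symmetric argument and a union bound. (Note that the hypothesis \(\abs{\expect[X_i]}\le C\) is not even needed in your proof; only boundedness of the centered variables is used.)
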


\end{document}